\title{The Sphere Packing Bound \\ For Memoryless Channels}
\author{Bar\i\c{s} Nakibo\u{g}lu\\ \small{\href{mailto:bnakib@metu.edu.tr}{bnakib@metu.edu.tr}}
	\thanks{This paper was presented in part at 
		the 2017 IEEE International Symposium on Information Theory
		\cite{nakiboglu17}.}
}
\theoremstyle{plain}
\newtheorem{lemma}{Lemma} 
\newtheorem{theorem}{Theorem}
\newtheorem*{conjecture*}{Conjecture}
\newtheorem{corollary}{Corollary}
\theoremstyle{definition}
\newtheorem{definition}{Definition} 
\newtheorem{assumption}{Assumption}
\newtheorem{example}{Example}
\newtheorem{remark}{Remark}
\newtheorem*{remark*}{Remark}
\definecolor{mygray}{gray}{0.4}
\newcommand{\set} [1]			{{\mathscr{{#1}}}}
\newcommand{\alg}[1]			{{\mathcal{{#1}}}}
\newcommand{\rndv}[1]      {{\mathsf{{#1}}}}
\newcommand{\oper}[1]      {{\mathtt{{#1}}}}
\newcommand{\msr}[1]       {{\it    {{#1}}}}
\newcommand{\cnst}[1]      {{\mathit{{#1}}}}
\newcommand{\cntt}[1]      {{\widetilde{{\mathit{{#1}}}}}}
\newcommand{\integers}[1]	{{\mathbb{Z}}_{^{{#1}}}}
\newcommand{\reals}[1]		{{\mathbb{R}}_{^{{#1}}}}
\newcommand{\bigo} [1]     {{\cnst{O}\left({{#1}}\right)}}
\newcommand{\smallo}[1]    {{\cnst{o}\left({{#1}}\right)}}
\newcommand{\bigtheta}[1]  {{\cnst{\Theta}\left({{#1}}\right)}}
\newcommand{\inte}[1]      {{\mathtt{int}{{#1}}}}
\newcommand{\clos}[1]      {{\mathtt{cl}{{#1}}}}
\newcommand{\conv}[1]      {{\mathtt{ch}{{#1}}}}
\newcommand{\dif}[1]       {{\mathrm{d}{#1}}}  
\newcommand{\der}[2]        {\tfrac{\dif{#1}}{\dif{#2}}}  
\newcommand{\pder}[2]       {\tfrac{\partial{#1}}{\partial{#2}}}  
\newcommand{\supp}[1]       {\mathtt{supp}({{#1}})}       
\newcommand{\DEF}[0]			{{\!\!~\triangleq\!~}}  
\newcommand{\mtimes}[0]			{{\circledast}}
\newcommand{\AC}[0]            {{\prec}}
\newcommand{\abs}[1]           {{\left\lvert{{#1}}\right\lvert}}
\newcommand{\abp}[1]           {{\left\lvert{{#1}}\right\lvert^{+}}}
\newcommand{\lon}[1]           {{{\left\lVert{{#1}}\right\lVert}}} 
\newcommand{\IND}[1]           {{\mathds{1}_{\{#1\}}}}    
\newcommand{\ind}[0]           {{\imath}}
\newcommand{\knd}[0]           {{\kappa}}
\newcommand{\tin}[0]           {{\cnst{t}}}
\newcommand{\blx}[0]           {{\cnst{n}}}
\newcommand{\tlx}[0]           {{\cnst{T}}}
\newcommand{\domtr}[1]         {{\set{Q}}_{{#1}}}
\newcommand{\pint}[0]          {{\cnst{\zeta}}}
\newcommand{\PXS}[2]         {{\bf P}_{{#1}}\!\left[{#2}\right]}
\newcommand{\EXS}[2]         {{\bf E}_{{#1}}\!\left[{#2}\right]}
\newcommand{\EX}[1]          {\EXS{\!}{{#1}}}                      
\newcommand{\fX}[0]          {{\cnst{f}}}
\newcommand{\gX}[0]          {{\cnst{g}}}   
\newcommand{\GX}[0]          {{\cnst{G}}}   
\newcommand{\hX}[0]          {{\cnst{h}}}
\newcommand{\fXS}[0]         {{\set{F}}}
\newcommand{\cm}[1]         {{{\cnst{g}}_{{#1}}}}
\newcommand{\RD}[3]				
{{\cnst{D}}_{{#1}}            \!\left(\left.            \! {#2}\right\Vert {#3}                  \right)}
\newcommand{\CRD}[4]			
{{\cnst{D}}_{{#1}}            \!\left(\left.\!\left.    \! {#2}\right\Vert {#3} \right\vert{{#4}}\right)}
\newcommand{\GMI}[3]	{{\cnst{I}}_{{#1}}^{{\scriptscriptstyle g}}\!\left(\! {#2};\!{#3}\!\right)} 
\newcommand{\RMI}[3]	{{\cnst{I}}_{{#1}}        \!\left(        \! {#2};  \!{#3}                \!\right)} 
\newcommand{\GMIL}[4]	{{\cnst{I}}_{{#1}}^{{\scriptscriptstyle g}{#4}}	\!\left(\! {#2}; \!{#3}                \!\right)} 
\newcommand{\RMIL}[4]	{{\cnst{I}}_{{#1}}^{{#4}}	  					\!\left(\! {#2}; \!{#3}                \!\right)} 
\newcommand{\GCL}[3]	{{\cnst{C}}_{{#1},{#2}}^{{\scriptscriptstyle g}{#3}}}
\newcommand{\RCL}[3]	{{\cnst{C}}_{{#1},{#2}}^{{#3}}}
\newcommand{\RRL}[3]	{{\cnst{S}}_{{#1},{#2}}^{{#3}}}
\newcommand{\RC}[2]				{{\cnst{C}}_{{#1},{#2}}}
\newcommand{\CRC}[3]			{{\cnst{C}}_{{#1},{#2},{#3}}}
\newcommand{\CRCI}[4]			{{\cntt{C}}_{{#1},{#2},{#3}}^{#4}}
\newcommand{\RCI}[3]			{{\cntt{C}}_{{#1},{#2}}^{#3}}
\newcommand{\Aopi}[4]	{{\oper{T}}_{{#1},{#2}}^{#3}\left({#4}\right)} 
\newcommand{\Aop}[3]	{{\Aopi{#1}{#2}{}{#3}}}
\newcommand{\composition}[0]	{{\cnst{\varUpsilon}}}
\newcommand{\costf}[0]			{{\cnst{\rho}}}
\newcommand{\costc}[0]			{{\cnst{\varrho}}}
\newcommand{\lgm}[0]			{{\cnst{\lambda}}}
\newcommand{\rfm}[0]			{{{\msr{\nu}}}} 
\newcommand{\uc}[0]				{{\mathds{1}}}
\newcommand{\fcc}[1]			{{\cnst{\Gamma}_{{#1}}}}
\newcommand{\fccc}[1]			{{\cnst{\Gamma}_{{#1}}^{{\scriptscriptstyle ex}}}}
\newcommand{\cset}[0]			{{\set{A}}}
\newcommand{\cinpS}[0]			{{\set{B}}}
\newcommand{\spa}[2]			{{\cntt{E}_{sp\!}^{{#1}}}\left({#2}\right)}
\newcommand{\spe}[1]			{{\cnst{E}_{sp\!}}       \left({#1}\right)}
\newcommand{\rate}[0]			{{\cnst{R}}}
\newcommand{\cln}[1]          {{{\xi}_{{#1}}}}
\newcommand{\cla}[2]          {{{\xi}_{{#1}}^{{#2}}}}
\newcommand{\rnf}[0]          {{\cnst{\phi}}}
\newcommand{\rno}[0]          {{\cnst{\alpha}}}
\newcommand{\rnt}[0]          {{\cnst{\eta}}}
\newcommand{\rns}[0]          {\rno^{\!\ast}}
\newcommand{\Pe}[0]            {{\it P_{{{\bf e}}}}}      
\newcommand{\Pem}[1]           {{\it P_{{{\bf e}}}^{{#1}}}}         
\newcommand{\enc}[0]           {{\varPsi}} 
\newcommand{\dec}[0]           {{\varTheta}}    
\newcommand{\brl}[0]           {{\alg{B}}}
\newcommand{\rborel}[1]        {{\brl}({#1})}
\newcommand{\oev}[0]           {{\set{E}}}
\newcommand{\Pcha}[1]          {{\varLambda}^{{{#1}}}}
\newcommand{\GausDen}[1]		{{{{\cnst{\varphi}}}_{{#1}}}}
\newcommand{\fmea}[1]          {{{\alg{M}}^{^{+}}\!({#1})}}
\newcommand{\pmea}[1]          {{{\alg{P}}({#1})}}
\newcommand{\pdis}[1]          {{{\set{P}}({#1})}}
\newcommand{\dinp}[0]          {{\cnst{x}}}
\newcommand{\inpS}[0]          {{\set{X}}}
\newcommand{\inpA}[0]          {{\alg{X}}}
\newcommand{\dout}[0]          {{\cnst{y}}}
\newcommand{\out}[0]           {{\rndv{Y}}}
\newcommand{\outS}[0]          {{\set{Y}}}
\newcommand{\outA}[0]          {{\alg{Y}}}
\newcommand{\dsta}[0]          {{\cnst{z}}}
\newcommand{\staS}[0]          {{\set{Z}}}
\newcommand{\dmes}[0]          {{\cnst{m}}}
\newcommand{\mesS}[0]          {{\set{M}}}
\newcommand{\estS}[0]          {{\widehat{{\set{M}}}}}
\newcommand{\mean}[0]        {{{\msr{\mu}}}}    
\newcommand{\mma}[2]         {{{\mean}_{{#1}}^{{#2}}}}    
\newcommand{\qgn}[1]         {{{\mQ}_{{#1}}^{{\scriptscriptstyle g}}}}
\newcommand{\qga}[2]         {{{\mQ}_{{#1}}^{{\scriptscriptstyle g}{#2}}}}
\newcommand{\mA}[0]				{{\msr{a}}}
\newcommand{\mB}[0]				{{\msr{b}}}
\newcommand{\mP}[0]				{{\msr{p}}}    
\newcommand{\pmn}[1]			{{{\mP}_{{#1}}}}
\newcommand{\mQ}[0]				{{\msr{q}}}    
\newcommand{\qmn}[1]			{{{\mQ}_{{#1}}}}
\newcommand{\qma}[2]			{{{\mQ}_{{#1}}^{{#2}}}}
\newcommand{\Qm}[0]				{{{\cnst{Q}}}}
\newcommand{\mS}[0]				{{\msr{s}}}
\newcommand{\mU}[0]				{{\msr{u}}}    
\newcommand{\umn}[1]			{{{\mU}_{{#1}}}}
\newcommand{\Um}[0]				{{{\cnst{U}}}}
\newcommand{\Uma}[2]			{{{\cnst{U}}_{{#1}}^{{#2}}}}
\newcommand{\mV}[0]				{{\msr{v}}}    
\newcommand{\vma}[2]			{{{\mV}_{{#1}}^{{#2}}}}
\newcommand{\Vm}[0]				{{{\cnst{V}}}}
\newcommand{\Vmn}[1]			{{{\cnst{V}}_{{#1}}}}
\newcommand{\mW}[0]				{{\msr{w}}}    
\newcommand{\wmn}[1]			{{{\mW}_{{#1}}}}
\newcommand{\wma}[2]			{{{\mW}_{{#1}}^{{#2}}}}
\newcommand{\Wm}[0]				{{{\cnst{W}}}}
\newcommand{\Wmn}[1]			{{{\cnst{W}}_{{#1}}}}
\newcommand{\Wma}[2]			{{{\cnst{W}}_{{#1}}^{{#2}}}}
\newcommand{\ebertsthreshold}[0]					{{\cnst{N}}}
\newcommand{\harremoes}[0]							{Harremo\"{e}s~}
\newcommand{\renyi}[0]								{R\'{e}nyi~}
\DeclareRobustCommand{\bigplus}{%
	\mathop{\vphantom{\sum}\mathpalette\@bigplus\relax}\slimits@
}
\newcommand{\@bigplus}[2]{\vcenter{\hbox{\make@bigplus{#1}}}}
\newcommand{\make@bigplus}[1]{%
	\sbox\z@{$\m@th#1\sum$}%
	\setlength{\unitlength}{\wd\z@}%
	\begin{picture}(1.4,1.4)
	\linethickness{.17ex}
	\Line(.7,.14)(.7,1.26)
	\Line(.14,.7)(1.26,.7)
	\end{picture}%
}
\DeclareRobustCommand{\bigtimes}{%
	\mathop{\vphantom{\sum}\mathpalette\@bigtimes\relax}\slimits@
}
\newcommand{\@bigtimes}[2]{\vcenter{\hbox{\make@bigtimes{#1}}}}
\newcommand{\make@bigtimes}[1]{%
	\sbox\z@{$\m@th#1\sum$}%
	\setlength{\unitlength}{\wd\z@}%
	\begin{picture}(1,1)
	\linethickness{.17ex}
	\Line(.1,.1)(.9,.9)
	\Line(.1,.9)(.9,.1)
	\end{picture}%
}
\def\thesubsection{\thesection.\arabic{subsection}}
\def\thesubsectiondis{\thesectiondis\arabic{subsection}.} 
\begin{document}
\pagestyle{plain}
\pagenumbering{arabic}
\hypersetup{hidelinks}
\maketitle 
\thispagestyle{empty}
\begin{abstract}
Sphere packing bounds (SPBs) ---with prefactors that are polynomial in the block length---
are derived for codes on two families of memoryless channels using Augustin's method:
(possibly non-stationary) memoryless channels with (possibly multiple) additive cost constraints  
and stationary memoryless channels with convex constraints on the composition 
(i.e. empirical distribution, type) of the input codewords.
A variant of Gallager's bound is derived in order to show that these sphere
packing bounds are tight in terms of the exponential decay rate of the error probability with 
the block length under mild hypotheses.  
\end{abstract}
\tableofcontents
\clearpage
\section{Introduction}\label{sec:introduction}
Most proofs of the sphere packing bound (SPB) have been 
either for the stationary channels with finite input sets 
\cite{elias55A,elias55B,dobrushin62B,altugW19,altugW11,shannonGB67A,gallager,fano,haroutunian68,omura75,csiszarkorner,altugW14A,nakiboglu19-ISIT}
or for the stationarity channels with a specific noise structure, 
e.g.  Poisson, Gaussian,
\cite{wyner88-b,burnashevK99,shannon59,ebert65,ebert66,richters67,lancho19,lanchoODKV19A,lanchoODKV19B}.
Proofs of the SPB based on Augustin's method are exceptions 
to this observation:
\cite{augustin69,augustin78,nakiboglu19B} do not assume
either the finiteness of the input set or a specific noise 
structure, nor do they assume the stationarity of the channel.
However, \cite{augustin69}, \cite[\S 31]{augustin78}, 
\cite{nakiboglu19B} establish the SPB 
for the product channels, rather than the memoryless channels;
hence proofs of the SPB for the composition constrained 
codes\footnote{According to \cite[p. 183]{csiszarkorner}, the SPB 
	for the constant composition codes appears in \cite{fano} with 
	an incomplete proof. 
	The first complete proof of the SPB for the constant composition 
	codes is provided in \cite{haroutunian68}.} 
on the stationary channels
\cite{fano,haroutunian68,omura75,csiszarkorner,altugW14A,nakiboglu19-ISIT,wyner88-b,burnashevK99,shannon59,ebert65,ebert66,richters67}
---which include the important special case of the cost constrained ones
\cite{shannon59,ebert65,ebert66,richters67,wyner88-b,burnashevK99}---
are not subsumed by \cite{augustin69}, \cite[\S 31]{augustin78}, or 
\cite{nakiboglu19B}.
In \cite[\S 36]{augustin78}, Augustin proved the SPB for the cost constrained 
(possibly non-stationary) memoryless channels assuming a bounded cost function.
The framework of \cite[Thm. 36.6]{augustin78} subsumes all previously considered
models \cite{elias55A,elias55B,dobrushin62B,altugW19,altugW11,shannonGB67A,gallager,fano,haroutunian68,omura75,csiszarkorner,
	altugW14A,nakiboglu19-ISIT,wyner88-b,burnashevK99,shannon59,ebert65,ebert66,richters67,lancho19,lanchoODKV19A,lanchoODKV19B},
except the Gaussian ones 
\cite{shannon59,ebert65,ebert66,richters67}.

Theorem \ref{thm:exponent-cost-ologn}, presented in \S\ref{sec:outerbound}, establishes 
the SPB for a framework that subsumes all of the models 
considered in \cite{elias55A,elias55B,dobrushin62B,altugW19,altugW11,shannonGB67A,gallager,fano,haroutunian68,omura75,
	csiszarkorner,altugW14A,nakiboglu19-ISIT,wyner88-b,burnashevK99,shannon59,ebert65,ebert66,richters67,
	lancho19,lanchoODKV19A,lanchoODKV19B,augustin69,augustin78,nakiboglu19B}
by employing \cite{nakiboglu19C}, which analyzes Augustin's information measures. 
Our use of \cite{nakiboglu19C} and Augustin's information measures
is similar to the use of \cite{nakiboglu19A} and 
\renyi\!\!'s information measures in \cite{nakiboglu19B}.
For the product channels,
\cite[Thm. \ref*{B-thm:productexponent}]{nakiboglu19B} 
improved the previous results by Augustin in \cite{augustin69}, \cite[\S 31]{augustin78}
by establishing the SPB  with 
a prefactor that is polynomial in the block length \(\blx\)
for the hypothesis that
the order \textonehalf~\renyi capacity of the component channels are 
\(\bigo{\ln\blx}\). 
For the cost constrained memoryless channels, Theorem \ref{thm:exponent-cost-ologn}
enhances the prefactor of \cite[Thm. 36.6]{augustin78} in an analogous way,
from \(e^{-\bigo{\sqrt{\blx}}}\) to \(e^{-\bigo{\ln\blx}}\).
The prefactor of Theorem \ref{thm:exponent-cost-ologn}, however, is inferior  
to the prefactors reported 
in \cite{elias55B,dobrushin62B,altugW19,altugW11} for various symmetric channels,
in \cite{altugW14A} for the constant composition codes on 
discrete stationary product channels, 
in \cite{shannon59} for the stationary Gaussian channel,
and in \cite{lancho19,lanchoODKV19A,lanchoODKV19B} for  certain non-coherent fading channels.
Determination of the optimal prefactor, in the spirit of 
\cite{elias55B,dobrushin62B,altugW19}, remains an open problem for the general 
case.\footnote{Elsewhere in \cite{nakiboglu19F}, we have derived refined SPBs 
	(which are optimal in terms of the prefactor for non-singular cases) 
	for all of the cases considered 
	in \cite{elias55B,dobrushin62B,altugW19,altugW11,altugW14A,shannon59,
		lancho19,lanchoODKV19A,lanchoODKV19B} 
	using Augustin information measures via \cite{nakiboglu19C}.}
Similar to \cite[Thm. 36.6]{augustin78}, Theorem \ref{thm:exponent-cost-ologn}
holds for non-stationary channels, as well. 
Unlike \cite[Thm. 36.6]{augustin78}, Theorem \ref{thm:exponent-cost-ologn}
does not assume the cost functions to be bounded.

The stationarity is assumed in most of the previous derivations of the SPB,
\cite{elias55A,elias55B,dobrushin62B,altugW19,altugW11,shannonGB67A,gallager,fano,
	haroutunian68,omura75,csiszarkorner,altugW14A,nakiboglu19-ISIT,shannon59,ebert65,ebert66,
	richters67,wyner88-b,burnashevK99,lancho19,lanchoODKV19A,lanchoODKV19B}.
Given a stationary product channel, one can obtain a stationary memoryless
channel by imposing composition ---i.e. type, empirical distribution---
constraints on the input codewords. 
The cost constraints can be interpreted as a particular convex case of this more 
general composition constraints.
This interpretation, considered together with the composition based expurgations, 
is one of the main motivating factors behind the study of constant composition 
codes.
The composition based expurgations, however, are useful only when the input set 
of the channel is finite.
Nevertheless, if the constraint set for the composition of the codewords is convex, 
then one can derive a SPB with a polynomial prefactor using Augustin's information 
measures, see Theorem \ref{thm:exponent-convex} in \S\ref{sec:outerbound}.  
The derivation of Theorem \ref{thm:exponent-convex} relies on the Augustin center 
of the constraint set rather than the Augustin mean of the most populous composition 
of the code. Note that the most populous composition of the code might not even have 
more than one codeword when the input set is infinite.
The framework of Theorem \ref{thm:exponent-convex} is
general enough to subsume the frameworks of all
previous proofs of the SPB for the memoryless channels that we are aware of,
except the frameworks of the proofs based on Augustin's method
\cite{augustin69,augustin78,nakiboglu19B}.
Theorems \ref{thm:exponent-convex} and \ref{thm:exponent-cost-ologn} 
are asymptotic SPBs; but they are proved using non-asymptotic SPBs presented 
in Lemmas \ref{lem:spb-convex} and \ref{lem:spb-cost-ologn}.

The SPB implies that exponential decay rate of the optimal error probability with
the block length ---i.e. the reliability function, the error exponent--- 
is bounded from above by the sphere packing exponent (SPE). 
For the memoryless channels in consideration, Augustin's variant of Gallager's 
bound implies that the SPE bounds
the reliability function from below, as well, provided that the list decoding 
is allowed.
Augustin's variant of Gallager's bound is presented in \S\ref{sec:innerbound}. 
One can use standard results such as \cite{gallager65,poltyrev82} 
with minor modifications in order to establish the SPE
as a lower bound to the reliability function for the list decoding, as well.
Thus Augustin's variant is of interest to us not 
because of what it implies about the reliability function but because of
how it implies it.
What is unique about  Augustin's variant is that it establishes an 
achievability result in terms of the Augustin information rather 
than the \renyi information used in the standard form of the Gallager's bound 
\cite{gallager65}.
Augustin's variant relies on the fixed point property of the Augustin 
mean described in \eqref{eq:augustinfixedpoint} to do that.
It is worth mentioning that \cite{poltyrev82} implicitly employs 
the same fixed point property but in a different way.

Before starting our discussion in earnest, let us point out a subtlety about 
the derivations of the SPB that is usually overlooked.
\cite{blahut74} claimed to prove the SPB for arbitrary stationary product channels, 
without using any constant composition 
arguments.\footnote{\cite[p. 413]{blahut74} reads  ``An important 
	feature of the lower bound, which will be derived, is that no assumption 
	of constant-composition codewords is made, not even as an intermediate step.''} 
The derivation of \cite[Thm. 19]{blahut74}, however, establishes an upper bound on 
the reliability function that is strictly greater than the SPE in 
many channels.
This has been demonstrated numerically in \cite[p. 1594 and Appendix A]{altugW14A}.
An analytic confirmation this observation is presented in Appendix \ref{sec:blahut}.
The problematic step in \cite{blahut74} is the application of Lagrange multiplier 
techniques, see \cite[footnote 8]{altugW14A}. 
The proof of \cite[Thm. 19]{blahut74} invokes \cite[Thm. 16]{blahut74}
that is valid for the Lagrange multiplier \(\mS\) associated with 
the input distribution \(\mP\) satisfying \(\spe{\rate,\Wm}\!=\!\spe{\rate,\Wm,\mP}\).
For an arbitrary input distribution \(\mP\), however,  the associated Lagrange multiplier may or 
may not be equal to the one for the optimal input distribution \(\mP\).
This is the reason why the upper bound to the reliability function 
established in \cite[Thm. 19]{blahut74}
is not equal to the SPE in general,
contrary to the claim repeated in \cite[Lemma 1]{blahut76}
and \cite[Thm. 10.1.4]{blahut}.
In a nutshell, the proof of \cite[Thm. 19]{blahut74} tacitly asserts
a minimax equality that does not hold in general.
For stationary memoryless channels with finite input alphabets,
one can avoid this issue using the constant composition arguments.
However, in that case, the proof presented in  \cite{blahut74} becomes a mere reproduction 
of the one in \cite{haroutunian68}.
More recently, \cite{vazquezMF15} proposed a derivation of the SPB for stationary 
channels with a single cost constraint using the approach presented in \cite{blahut74}.
Similar to \cite{blahut74}, however,
the proof in \cite{vazquezMF15} asserts a minimax equality that does not hold in general.
In particular, it is claimed that \(Q^{n}\) does not depend on \({\bf x}_{m}\) in 
\cite[(26)]{vazquezMF15}.
To assert that, one has to include an additional  supremum over \({\bf x}_{m}\) as the 
innermost optimization in both \cite[(25) and (26)]{vazquezMF15}. 
With the additional supremum, the explanation provided on 
\cite[p. 931]{vazquezMF15} is no 
longer valid.
Considering Appendix \ref{sec:blahut}, we do not believe that
the proof in \cite{vazquezMF15} can be salvaged without introducing 
major new ideas, 
such as composition based expurgations similar to \cite{haroutunian68}
or codeword cost based expurgations similar to \cite{ebert65}.
In short,  neither \cite{blahut74} nor \cite{vazquezMF15} successfully proved 
the SPB for stationary memoryless channels even for the finite input set case.

In the rest of this section, we introduce our notation and channel model and define 
the channel codes with list decoding.
In \S \ref{sec:preliminary}, we first present a brief review of the \renyi divergence, 
Augustin information measures, and the SPE;
then, we derive Augustin's variant of Gallager's bound.
In \S \ref{sec:outerbound}, we first state our main asymptotic results
---i.e., SPBs given in Theorem \ref{thm:exponent-convex} and Theorem \ref{thm:exponent-cost-ologn}---
and then derive the non-asymptotic SPBs implying them.
In \S\ref{sec:examples}, we derive the SPE for particular Gaussian and Poisson channels and 
confirm the equivalence of the definition invoked in \S \ref{sec:preliminary} 
to the ones derived for these channels previously.
In \S\ref{sec:conclusion}, we discuss why Augustin's method works briefly and compare
our results with Augustin's in \cite{augustin78} and
discuss applications of Augustin's method and  use of Augustin's information measures
in related problems.

\subsection{Notational Conventions}\label{sec:introduction-notation}
For any two vectors \(\mean\) and \(\mQ\) in \(\reals{}^{\ell}\) their inner product,
denoted by \(\mean \cdot \mQ\), is \(\sum_{\ind=1}^{\ell} \mean^{\ind} \mQ^{\ind}\).
For any \(\ell\in\integers{+}\), \(\ell\) dimensional vector whose all entries are one
is denoted by \(\uc\), the dimension \(\ell\) will be clear from the context.
We denote the closure, interior, and convex hull of a set \(\set{S}\) by 
\(\clos{\set{S}}\), \(\inte{\set{S}}\), and \(\conv{\set{S}}\), respectively;
the relevant topology or vector space structure will be evident from the context.

For any set \(\outS\), we denote the set of all probability mass functions that 
are non-zero only on finitely many members of \(\outS\) by \(\pdis{\outS}\).
For any \(\mP\in\pdis{\outS}\), we call the set of all \(\dout\)'s in \(\outS\) for 
which  \(\mP(\dout)>0\) the support of \(\mP\) and denote it by \(\supp{\mP}\).
For any measurable space \((\outS,\outA)\), we denote the set of all probability measures on it by 
\(\pmea{\outA}\) and set of all finite measures by \(\fmea{\outA}\).
We denote the integral of a measurable function \(\fX\) with respect to the measure \(\mean\) 
by \(\int \fX \mean(\dif{\dout})\) or \(\int \fX(\dout) \mean(\dif{\dout})\).
If the integral is on the real line and if it is with respect to the Lebesgue measure, we 
denote it by \(\int\fX \dif{\dout}\) or \(\int \fX(\dout) \dif{\dout}\), as well.
If \(\mean\) is a probability measure, then we also call the integral of \(\fX\) with respect \(\mean\)
the expectation of \(\fX\) or the expected value of \(\fX\) and denote it by
\(\EXS{\mean}{\fX}\) or \(\EXS{\mean}{\fX(\out)}\).

Our notation will be overloaded for certain symbols; however, the relations represented 
by these symbols will be clear from the context.
We denote the Cartesian product  of sets \cite[p. 38]{dudley} by \(\times\). 
We use \(\abs{\cdot}\) to denote the absolute value of real numbers and the size of sets. 
The sign \(\leq\) stands for the usual less than or equal to relation for real numbers
and the corresponding point-wise inequity for functions and vectors. 
For two measures \(\mean\) and \(\mQ\) on the measurable space \((\outS,\outA)\), 
\(\mean \leq \mQ\) iff \(\mean(\oev)\leq \mQ(\oev)\) for all \(\oev\in\outA\).
We denote the product of topologies \cite[p. 38]{dudley}, 
\(\sigma\)-algebras \cite[p. 118]{dudley}, and measures \cite[Thm. 4.4.4]{dudley} by \(\otimes\).
We use the shorthand 
\(\inpS_{1}^{\blx}\) for the Cartesian product of sets \(\inpS_{1},\ldots,\inpS_{\blx}\)
and 
\(\outA_{1}^{\blx}\) for the product of the \(\sigma\)-algebras  \(\outA_{1},\ldots,\outA_{\blx}\).


\subsection{Channel Model}\label{sec:introduction-model}
A \emph{channel} \(\Wm\) is a function from \emph{the input set} \(\inpS\) to  the set of all probability 
measures on \emph{the output space} \((\outS,\outA)\):
\begin{align}
\label{eq:def:channel}
\Wm:\inpS \to \pmea{\outA}
\end{align}
\(\outS\) is called \emph{the output set}, and \(\outA\) is called \emph{the \(\sigma\)-algebra  of the 
	output events}. 
We denote the set of all channels from the input set \(\inpS\) to the output space \((\outS,\outA)\)
by \(\pmea{\outA|\inpS}\).
For any \(\mP\in\pdis{\inpS}\)  and \(\Wm\in\pmea{\outA|\inpS}\), \(\mP\mtimes\Wm\) is the probability
measure whose marginal on \(\inpS\) is \(\mP\) and whose conditional distribution given \(\dinp\) is
\(\Wm(\dinp)\). 
The structure described in \eqref{eq:def:channel} is not sufficient on its own to 
ensure the existence of a unique \(\mP\mtimes\Wm\) with the desired properties for
all  \(\mP\in\pmea{\inpA}\), in general.
The existence of a unique \(\mP\mtimes\Wm\) is guaranteed for all \(\mP\in\pmea{\inpA}\),
if  \(\Wm\) is a transition probability from \((\inpS,\inpA)\) to \((\outS,\outA)\), 
i.e. a member of \(\pmea{\outA|\inpA}\) rather than \(\pmea{\outA|\inpS}\).

A channel \(\Wm\) is called a \emph{discrete channel} if both \(\inpS\) and \(\outA\) are finite sets.
For any \(\blx\in\integers{+}\) and 
channels \(\Wmn{\tin}\!:\!\inpS_{\tin}\!\to\!\pmea{\outA_{\tin}}\) for \(\tin\in\{1,\ldots,\blx\}\),
the \emph{length \(\blx\) product channel}
\(\Wmn{[1,\blx]}\!:\!\inpS_{1}^{\blx}\!\to\!\pmea{\outA_{1}^{\blx}}\) is defined via the following relation:
\begin{align}
\notag
\Wmn{[1,\blx]}(\dinp_{1}^{\blx})
&=\bigotimes\nolimits_{\tin=1}^{\blx}\Wmn{\tin}(\dinp_{\tin})
&
&\forall \dinp_{1}^{\blx}\in\inpS_{1}^{\blx}.
\end{align}
A channel \(\Um\!:\!\staS\!\to\!\pmea{\outA_{1}^{\blx}}\) is called a \emph{length \(\blx\) memoryless channel} 
iff there exists a product channel \(\Wmn{[1,\blx]}\)
satisfying both 
\(\Um(\dsta)\!=\!\Wmn{[1,\blx]}(\dsta)\) for all \(\dsta\!\in\!\staS\)
and 
\(\staS\!\subset\!\inpS_{1}^{\blx}\).
A product channel is \emph{stationary} iff \(\Wmn{\tin}\!=\!\Wm\) 
for all \(\tin\!\in\!\{1,\ldots,\blx\}\) for some \(\Wm\!:\!\inpS\!\to\!\pmea{\outA}\).
For such a channel, we denote the composition (i.e. the empirical distribution, type) 
of each \(\dinp_{1}^{\blx}\in\inpS_{1}^{\blx}\) by \(\composition(\dinp)\),
where  \(\composition(\dinp)\!\in\!\pdis{\inpS}\).

For any \(\ell\in\integers{+}\), an \(\ell\) dimensional \emph{cost function} \(\costf\) 
is a function from the input set to \(\reals{}^{\ell}\) 
that is bounded from below, i.e. that is of the form
\(\costf:\inpS \to \reals{\geq \dsta}^{\ell}\)
for some \(\dsta\in \reals{}\). 
We assume without loss of generality that\footnote{Augustin \cite[\S33]{augustin78} has an additional 
	hypothesis, \(\bigvee_{\dinp\in \inpS} \costf(\dinp)\leq\uc\),
	which excludes certain important cases
	such as the Gaussian channels.} 
\begin{align}
\notag
\inf\nolimits_{\dinp\in \inpS} \costf^{\ind}(\dinp) 
&\geq 0
&
&\forall \ind \in \{1,\ldots,\ell\}.
\end{align}
We denote the set of all cost constraints that can be satisfied by some member of \(\inpS\) by \(\fccc{\costf}\) 
and the set of all cost constraints that can be satisfied by some member of \(\pdis{\inpS}\) by \(\fcc{\costf}\):
\begin{align}
\notag
\fccc{\costf}
&\DEF\{\costc\in\reals{\geq0}^{\ell}:\exists \dinp\in\inpS \mbox{~s.t.~}\costf(\dinp) \leq \costc\},
\\
\notag
\fcc{\costf}
&\DEF\{\costc\in\reals{\geq0}^{\ell}:\exists\mP\in\pdis{\inpS} \mbox{~s.t.~}
\EXS{\mP}{\costf}\leq \costc\}.
\end{align}
Then both \(\fccc{\costf}\) and \(\fcc{\costf}\) have non-empty interiors
and \(\fcc{\costf}\) is the convex hull of \(\fccc{\costf}\), i.e. \(\fcc{\costf}=\conv{\fccc{\costf}}\).

A cost function on a product channel is said to be additive iff 
it can be written as the sum of cost functions defined on the component channels.
Given \(\Wmn{\tin}\!:\!\inpS_{\tin}\!\to\!\pmea{\outA_{\tin}}\) and 
\(\costf_{\tin}\!:\!\inpS_{\tin}\!\to\!\reals{\geq0}^{\ell}\) for \(\tin\!\in\!\{1,\ldots,\blx\}\),
we denote the resulting additive cost function on \(\inpS_{1}^{\blx}\) for the channel
\(\Wmn{[1,\blx]}\)  by \(\costf_{[1,\blx]}\), i.e. 
\begin{align}
\notag
\costf_{[1,\blx]}(\dinp_{1}^{\blx})
&=\sum\nolimits_{\tin=1}^{\blx} \costf_{\tin}(\dinp_{\tin})  
&
&\forall \dinp_{1}^{\blx}\in\inpS_{1}^{\blx}. 
\end{align}

\subsection{Codes With List Decoding}\label{sec:introduction-codes}
The pair \((\enc,\dec)\) is an \((M,L)\) \emph{channel code} on \(\Wm:\inpS\to\pmea{\outA}\) iff
\begin{itemize}
\item The \emph{encoding function} \(\enc\) is a function from the message set \(\mesS\DEF\{1,2,\ldots,M\}\) to the input
 set \(\inpS\).
\item The \emph{decoding function} \(\dec\) is a measurable function from the output space \((\outS,\outA)\) to the 
set \(\estS\DEF\{\set{L}:\set{L}\subset\mesS \mbox{~and~}\abs{\set{L}}= L\}\).
\end{itemize}
Given an \((M,L)\) channel code \((\enc,\dec)\) on \(\Wm:\inpS\to\pmea{\outA}\),
\emph{the conditional error probability} \(\Pem{\dmes}\) for \(\dmes \in \mesS\) 
and \emph{the average error probability} \(\Pem{}\)
are defined as 
\begin{align}
\notag
\Pem{\dmes}
&\DEF \EXS{\Wm(\enc(\dmes))}{\IND{\dmes\notin\dec(\dout)}},
\\
\notag
\Pem{} 
&\DEF\tfrac{1}{M} \sum\nolimits_{\dmes\in \mesS} \Pem{\dmes}.
\end{align}
An encoding function \(\enc\), hence the corresponding code, is said to satisfy 
the cost constraint \(\costc\) iff \(\bigvee_{\dmes\in\mesS} \costf(\enc(\dmes))\leq \costc\).
An encoding function \(\enc\), hence the corresponding code, on a stationary 
product channel is said to satisfy 
an empirical distribution constraint \(\cset\subset\pdis{\inpS}\)
iff the composition of all of the codewords are in \(\cset\), 
i.e. iff \(\composition(\enc(\dmes))\in\cset\) for all \(\dmes\in\mesS\).

\section{Preliminaries}\label{sec:preliminary}
The \renyi divergence, tilting, and Augustin's information measures are central to the
analysis we present in the following sections. 
We introduce these concepts in \S\ref{sec:preliminary-divergence+tilting} and 
\S\ref{sec:preliminary-informationmeasures}, a more detailed discussion can 
be found in \cite{nakiboglu19C,ervenH14}. 
In \S\ref{sec:preliminary-spherepackingexponent} we define the SPE 
and derive widely known properties of it
for our general channel model. 
In \S\ref{sec:innerbound} we derive Augustin's variant of Gallager's bound. 

\subsection{The \renyi Divergence and Tilting}\label{sec:preliminary-divergence+tilting}
\begin{definition}\label{def:divergence}
	For any \(\rno\in\reals{+}\) and \(\mW,\mQ\in\fmea{\outA}\),
	\emph{the order \(\rno\) \renyi divergence between \(\mW\) and \(\mQ\)} is
	\begin{align}
\notag
	\RD{\rno}{\mW}{\mQ}
	&\DEF \begin{cases}
	\tfrac{1}{\rno-1}\ln \int (\der{\mW}{\rfm})^{\rno} (\der{\mQ}{\rfm})^{1-\rno} \rfm(\dif{\dout})
	&\rno\neq 1\\
	\int  \der{\mW}{\rfm}\left[ \ln\der{\mW}{\rfm} -\ln \der{\mQ}{\rfm}\right] \rfm(\dif{\dout})
	&\rno=1
	\end{cases}
	\end{align}
	where \(\rfm\) is any measure satisfying \(\mW\AC\rfm\) and \(\mQ\AC\rfm\).
\end{definition}
For properties of the \renyi divergence, throughout the manuscript, 
we will refer to the comprehensive 
study provided by van Erven and \harremoes \cite{ervenH14}. 
Note that the order one \renyi divergence is the Kullback-Leibler divergence.
For other orders, the \renyi divergence can be characterized in terms of the 
Kullback-Leibler divergence, as well, see \cite[Thm. 30]{ervenH14}. 
That characterization is related to another key concept for our analysis:
the tilted probability measure. 
\begin{definition}\label{def:tiltedprobabilitymeasure}
	For any \(\rno\in\reals{+}\) and \(\mW,\mQ\in\pmea{\outA}\) satisfying 
	\(\RD{\rno}{\mW}{\mQ}<\infty\), 
	\emph{the order \(\rno\) tilted probability measure} \(\wma{\rno}{\mQ}\) is 
	\begin{align}
	\label{eq:def:tiltedprobabilitymeasure}
	\der{\wma{\rno}{\mQ}}{\rfm}
	&\DEF e^{(1-\rno)\RD{\rno}{\mW}{\mQ}}(\der{\mW}{\rfm})^{\rno} (\der{\mQ}{\rfm})^{1-\rno}.
	\end{align}
\end{definition}
The conditional \renyi divergence and the tilted channel are straight forward generalizations
of the \renyi divergence and the tilted probability measure that will allow us to express 
certain relations succinctly throughout our analysis. 
\begin{definition}\label{def:conditionaldivergence}
For any \(\rno\in\reals{+}\), \(\Wm:\inpS\to\pmea{\outA}\), \(\Qm:\inpS\to\pmea{\outA}\),
and \(\mP\in\pdis{\inpS}\) \emph{the order \(\rno\) conditional \renyi divergence for 
	the input distribution \(\mP\)} is
	\begin{align}
\notag
	\CRD{\rno}{\Wm\!}{\Qm}{\mP}
	&\DEF \sum\nolimits_{\dinp\in \inpS}  \mP(\dinp) \RD{\rno}{\Wm(\dinp)}{\Qm(\dinp)}.
	\end{align}
If \(\exists\mQ\in\pmea{\outA}\) such that  \(\Qm(\dinp)=\mQ\) for all \(\dinp\in\inpS\), 
then we denote \(\CRD{\rno}{\Wm\!}{\Qm}{\mP}\) by \(\CRD{\rno}{\Wm\!}{\mQ}{\mP}\).
\end{definition}
\begin{definition}\label{def:tiltedchannel}
	For any \(\rno\in\reals{+}\), \(\Wm:\inpS\to\pmea{\outA}\) and \(\Qm:\inpS\to\pmea{\outA}\),
	\emph{the order \(\rno\) tilted channel \(\Wma{\rno}{\Qm}\)} is a function
	from  \(\{\dinp:\RD{\rno}{\Wm(\dinp)}{\Qm(\dinp)}<\infty\}\) to \(\pmea{\outA}\)
	given by
	\begin{align}
	\label{eq:def:tiltedchannel}
	\der{\Wma{\rno}{\Qm}(\dinp)}{\rfm}
	&\DEF e^{(1-\rno)\RD{\rno}{\Wm(\dinp)}{\Qm(\dinp)}}(\der{\Wm(\dinp)}{\rfm})^{\rno} (\der{\Qm(\dinp)}{\rfm})^{1-\rno}.
	\end{align}
	If \(\exists\mQ\in\pmea{\outA}\) such that  \(\Qm(\dinp)=\mQ\) for all \(\dinp\in\inpS\), then 
	we denote \(\Wma{\rno}{\Qm}\) by \(\Wma{\rno}{\mQ}\).
\end{definition}
The following operator \(\Aop{\rno}{\mP}{\cdot}\) was considered implicitly by 
Fano \cite[Ch 9]{fano}, Haroutunian \cite{haroutunian68}, and Polytrev \cite{poltyrev82}
and explicitly by Augustin \cite[\S34]{augustin78}, 
but only for orders less than one in all four manuscripts. 
\begin{definition}\label{def:Aoperator}
	For any \(\rno\in \reals{+}\), \(\Wm:\inpS\to \pmea{\outA}\), and  \(\mP\in \pdis{\inpS}\),
	the order \(\rno\) Augustin operator for the input distribution \(\mP\), i.e. \(\Aop{\rno}{\mP}{\cdot}:\domtr{\rno,\mP}\to\pmea{\outA}\), is given by
	\begin{align}
	\label{eq:def:Aoperator}
	\Aop{\rno}{\mP}{\mQ}
	&\DEF\sum\nolimits_{\dinp}\mP(\dinp) \Wma{\rno}{\mQ}(\dinp)
	&
	&\forall \mQ\in  \domtr{\rno,\mP}
	\end{align}
	where \(\domtr{\rno,\mP}\DEF\{\mQ\in\pmea{\outA}:\CRD{\rno}{\Wm\!}{\mQ}{\mP}<\infty\}\) and
	the tilted channel \(\Wma{\rno}{\mQ}\) is defined in \eqref{eq:def:tiltedchannel}.
\end{definition}

\subsection{Augustin's Information Measures}\label{sec:preliminary-informationmeasures}
\begin{definition}\label{def:information}
	For any \(\rno\in \reals{+}\), \(\Wm:\inpS\!\to\!\pmea{\outA}\), and \(\mP\in \pdis{\inpS}\)  
	\emph{the order \(\rno\) Augustin information for the input distribution \(\mP\)} is
	\begin{align}
	\label{eq:def:information}
	\RMI{\rno}{\mP}{\Wm}
	&\DEF \inf\nolimits_{\mQ\in \pmea{\outA}} \CRD{\rno}{\Wm\!}{\mQ}{\mP}.
	\end{align}
\end{definition}
The infimum in \eqref{eq:def:information} is achieved by a unique probability measure
denoted by \(\qmn{\rno,\mP}\) and called
\emph{the order \(\rno\) Augustin mean for the input distribution \(\mP\)}.
Furthermore, the order \(\rno\) Augustin mean satisfies the following identities: 
\begin{align}
\label{eq:augustinslaw}
\RD{1\vee \rno}{\qmn{\rno,\mP}}{\mQ}
\geq
\CRD{\rno}{\Wm\!}{\mQ}{\mP}-\RMI{\rno}{\mP}{\Wm}
&\geq \RD{1 \wedge \rno}{\qmn{\rno,\mP}}{\mQ}
&
&\forall \mQ\in\pmea{\outA}, \rno\in\reals{+}.
\\
\label{eq:augustinfixedpoint}
\Aop{\rno}{\mP}{\qmn{\rno,\mP}}
&=\qmn{\rno,\mP}
&
&\forall \rno\in\reals{+}.
\end{align}
These observations are established in 
\cite[{Lemma\ref*{C-lem:information}-(\ref*{C-information:one},\ref*{C-information:zto},\ref*{C-information:oti})}]{nakiboglu19C};
previously they were reported  by Augustin \cite[Lemma 34.2]{augustin78} for orders less than one.
Throughout the manuscript, we refer to \cite{nakiboglu19C} for propositions about Augustin's information measures.
A more detailed account of the previous work on Augustin's information measures can be found in \cite{nakiboglu19C},
as well.

\begin{definition}\label{def:capacity}
	For any \(\rno\!\in\!\reals{+}\), \(\Wm\!:\!\inpS\!\to\!\pmea{\outA}\), and \(\cset\!\subset\!\pdis{\inpS}\),
	\emph{the order \(\rno\) Augustin capacity of \(\Wm\) for the constraint set \(\cset\)} is 
	\begin{align}
\notag
	\CRC{\rno}{\!\Wm\!}{\cset}
	&\DEF \sup\nolimits_{\mP \in \cset}  \RMI{\rno}{\mP}{\Wm}.
	\end{align}
	When the constraint set \(\cset\) is the whole \(\pdis{\inpS}\), we denote the order \(\rno\) 
	Augustin capacity by \(\RC{\rno}{\Wm\!}\), i.e. 
	\(\RC{\rno}{\!\Wm\!}\DEF\CRC{\rno}{\!\Wm\!}{\pdis{\inpS}}\).
\end{definition}
Using the definitions of the Augustin information and capacity we get  the following expression for 
\(\CRC{\rno}{\!\Wm\!}{\cset}\)
\begin{align}
\notag
\CRC{\rno}{\!\Wm\!}{\cset}
&=\sup\nolimits_{\mP \in \cset}\inf\nolimits_{\mQ\in\pmea{\outA}} \CRD{\rno}{\Wm\!}{\mQ}{\mP}.
\end{align}
If \(\cset\) is convex then the order of the supremum and the infimum can be changed 
as a result of \cite[Thm. \ref*{C-thm:minimax}]{nakiboglu19C}:
\begin{align}
	\label{eq:thm:minimax}
\sup\nolimits_{\mP \in \cset}\inf\nolimits_{\mQ\in\pmea{\outA}} \CRD{\rno}{\Wm\!}{\mQ}{\mP}
&=\inf\nolimits_{\mQ\in\pmea{\outA}}\sup\nolimits_{\mP \in \cset} \CRD{\rno}{\Wm\!}{\mQ}{\mP}.
\end{align}
If in addition \(\CRC{\rno}{\!\Wm\!}{\cset}\) is finite, then 
\cite[Thm. \ref*{C-thm:minimax}]{nakiboglu19C} implies that
there exists a unique probability measure \(\qmn{\rno,\!\Wm\!,\cset}\),
\emph{called the order \(\rno\) Augustin center of \(\Wm\) for the constraint set \(\cset\)}, 
satisfying
\begin{align}
\notag
\CRC{\rno}{\!\Wm\!}{\cset}
&=\sup\nolimits_{\mP \in \cset} \CRD{\rno}{\Wm\!}{\qmn{\rno,\!\Wm\!,\cset}}{\mP}.
\end{align}
We denote the set of all probability mass functions satisfying a cost constraint \(\costc\) by \(\cset(\costc)\), i.e.
\begin{align}
\notag
\cset(\costc)
&\DEF \{\mP\in\pdis{\inpS}:\EXS{\mP}{\costf}\leq\costc\}.
\end{align}
For the constraint sets defined through cost constraints we use the symbol 
\(\CRC{\rno}{\!\Wm\!}{\costc}\) rather than \(\CRC{\rno}{\!\Wm\!}{\cset(\costc)}\)
with a slight abuse of notation. 
In order to be able apply convex conjugation techniques without any significant 
modifications, we extend the definition Augustin capacity to the infeasible cost
constraints, i.e. \(\costc\)'s outside \(\fcc{\costf}\),  as follows:
\begin{align}
\notag
\CRC{\rno}{\!\Wm\!}{\costc}
&\DEF 
\begin{cases}
\sup\nolimits_{\mP\in\cset(\costc)} \RMI{\rno}{\mP}{\Wm}
&\mbox{if~}\costc\in \fcc{\costf}
\\
-\infty
&\mbox{if~}\costc\in \reals{\geq0}^{\ell}\setminus\fcc{\costf}
\end{cases}
&
&\forall\rno\in\reals{+}.
\end{align}
In order to characterize \(\CRC{\rno}{\!\Wm\!}{\costc}\)
through convex conjugation techniques, we first define
Augustin-Legendre (A-L) information and capacity.
These concepts are first introduced in \cite[\S III-A]{nakiboglu17}
and \cite[\S\ref*{C-sec:cost-AL}]{nakiboglu19C}, 
as an extension of the analogous concepts in \cite[Ch. 8]{csiszarkorner}.
\begin{definition}\label{def:Linformation}
	For any \(\rno\in\reals{+}\), channel \(\Wm\) of the form \(\Wm:\inpS\to \pmea{\outA}\) with 
	a cost function \(\costf:\inpS\to \reals{\geq0}^{\ell}\), \(\mP\in \pdis{\inpS}\),  
	and \(\lgm \in \reals{\geq0}^{\ell}\), 
	\emph{the order \(\rno\) Augustin-Legendre information for the input distribution \(\mP\) 
		and the Lagrange multiplier \(\lgm\)} is
	\begin{align}
\notag
	\RMIL{\rno}{\mP}{\Wm}{\lgm}
	&\DEF \RMI{\rno}{\mP}{\Wm}-\lgm\cdot \EXS{\mP}{\costf}.
	\end{align}
\end{definition}
\begin{definition}\label{def:Lcapacity}
	For any \(\rno\in\reals{+}\), channel \(\Wm\) of the form \(\Wm:\inpS\to \pmea{\outA}\) with 
	a cost function \(\costf:\inpS\to \reals{\geq0}^{\ell}\), and \(\lgm \in \reals{\geq0}^{\ell}\)
	\emph{the order \(\rno\) Augustin-Legendre (A-L) capacity for the Lagrange multiplier \(\lgm\)} is
	\begin{align}
\notag
	\RCL{\rno}{\Wm\!}{\lgm}
	&\DEF \sup\nolimits_{\mP\in \pdis{\inpS}} \RMIL{\rno}{\mP}{\Wm}{\lgm}.
	\end{align}
\end{definition}
 Except for certain sign changes, 
 \(\RCL{\rno}{\Wm\!}{\lgm}\) is the convex conjugate of \(\CRC{\rno}{\Wm\!}{\costc}\)
 because of an analogous relation between 
 \(\RMIL{\rno}{\mP}{\Wm}{\lgm}\) and \(\RMI{\rno}{\mP}{\Wm}\),
 see\cite[{(\ref*{C-eq:information-constrained})-(\ref*{C-eq:Linformation-conjugate}), (\ref*{C-eq:Lcapacity-astheconjugate})}]{nakiboglu19C}.
\begin{align}
\notag
\RCL{\rno}{\Wm\!}{\lgm}
&=\sup\nolimits_{\costc\geq0} \CRC{\rno}{\Wm\!}{\costc}-\lgm\cdot\costc
&
&\forall \lgm\in\reals{\geq0}^{\ell}.
\intertext{Then \(\CRC{\rno}{\Wm\!}{\costc}\) can be expressed in terms of \(\RCL{\rno}{\Wm\!}{\lgm}\)
at least for the interior points of \(\fcc{\costf}\):}
\notag
\CRC{\rno}{\!\Wm\!}{\costc}
&=\inf\nolimits_{\lgm\geq0} \RCL{\rno}{\Wm}{\lgm}+\lgm\cdot\costc.
\end{align}
Furthermore, there exists a non-empty convex compact  set of 
\(\lgm_{\rno,\!\Wm\!,\costc}\)'s satisfying
\(\CRC{\rno}{\!\Wm\!}{\costc}=\RCL{\rno}{\Wm}{\lgm_{\rno,\!\Wm\!,\costc}}+\lgm_{\rno,\!\Wm\!,\costc}\cdot \costc\)
provided that \(\CRC{\rno}{\!\Wm\!}{\costc}\) is finite, by \cite[Lemma \ref*{C-lem:Lcapacity}]{nakiboglu19C}.

On the other hand, using the definitions of \(\RMI{\rno}{\mP}{\Wm}\), \(\RMIL{\rno}{\mP}{\Wm}{\lgm}\), 
and \(\RCL{\rno}{\Wm}{\lgm}\)
we get the following expression for \(\RCL{\rno}{\Wm}{\lgm}\).
\begin{align}
\notag
\RCL{\rno}{\Wm}{\lgm}
&=\sup\nolimits_{\mP \in \pdis{\inpS}}\inf\nolimits_{\mQ\in\pmea{\outA}} \CRD{\rno}{\Wm}{\mQ}{\mP}-\lgm\cdot \EXS{\mP}{\costf}.
\end{align}
\(\RCL{\rno}{\Wm}{\lgm}\) satisfies a minimax relation similar to the one given in \eqref{eq:thm:minimax},
see \cite[Thm. \ref*{C-thm:Lminimax}]{nakiboglu19C}. 
That minimax relation, however, is best understood via the concept of Augustin-Legendre radius defined 
in the following.
\begin{definition}\label{def:Lradius}
	For any \(\rno\in\reals{+}\), channel \(\Wm:\inpS\to \pmea{\outA}\) with 
	a cost function \(\costf:\inpS\to \reals{\geq0}^{\ell}\), and \(\lgm \in \reals{\geq0}^{\ell}\),
	\emph{the order \(\rno\) Augustin-Legendre radius of \(\Wm\) for the Lagrange multiplier \(\lgm\)} is
	\begin{align}
\notag
	\RRL{\rno}{\Wm}{\lgm}
	&\DEF \inf\nolimits_{\mQ\in\pmea{\outA}} \sup\nolimits_{\dinp\in \inpS} \RD{\rno}{\Wm(\dinp)}{\mQ}-\lgm\cdot\costf(\dinp). 
	\end{align}
\end{definition}
Then as a result of \cite[Thm. \ref*{C-thm:Lminimax}]{nakiboglu19C},
for any \(\rno\in \reals{+}\), \(\Wm:\inpS\to \pmea{\outA}\) 
with \(\costf:\inpS\to \reals{\geq0}^{\ell}\), and 
\(\lgm \in \reals{\geq0}^{\ell}\) we have
\begin{align}
\label{eq:thm:Lminimaxradius}
\RCL{\rno}{\Wm}{\lgm}
&=\RRL{\rno}{\Wm}{\lgm}.
\end{align}
If in addition \(\RCL{\rno}{\Wm}{\lgm}\) is finite, then
there exits a unique \(\qma{\rno,\Wm}{\lgm}\!\in\!\pmea{\outA}\),
called \emph{the order \(\rno\) Augustin-Legendre center of \(\Wm\) for the Lagrange multiplier \(\lgm\)},
satisfying
\begin{align}
\notag
\RCL{\rno}{\Wm}{\lgm}
&=\sup\nolimits_{\dinp \in \inpS} \RD{\rno}{\Wm(\dinp)}{\qma{\rno,\Wm}{\lgm}}-\lgm\cdot\costf(\dinp).
\end{align}

The A-L information measures are defined through a standard application of 
the convex conjugation techniques.
However, starting with \cite[Thms. 8 and 10]{gallager65}
---i.e. the cost constrained variants of Gallager's bound---
the \renyi\!\!-Gallager (R-G) information measures rather than 
the A-L information measures have been the customary tools 
for applying convex conjugation techniques in the error 
exponent calculations, see for example \cite{ebert65,ebert66,richters67}.
A brief discussion of the R-G information information measures 
can be found in Appendix \ref{sec:RG-informationmeasures};
for a more detailed discussion see \cite{nakiboglu19C}. 

\subsection{The Sphere Packing Exponent}\label{sec:preliminary-spherepackingexponent}
\begin{definition}\label{def:spherepackingexponent}
For any \(\Wm:\inpS\to \pmea{\outA}\), \(\cset\subset\pdis{\inpS}\), 
and \(\rate\in\reals{\geq0}\), the SPE is
\begin{align}
\label{eq:def:spherepackingexponent}
\spe{\rate,\!\Wm\!,\cset}
&\DEF \sup\nolimits_{\rno\in (0,1)} \tfrac{1-\rno}{\rno} \left(\CRC{\rno}{\Wm\!}{\cset}-\rate\right).
\end{align}
We denote \(\cset=\pdis{\inpS}\) case by \(\spe{\rate,\Wm}\).
Furthermore, with a slight abuse of notation, we denote \(\cset=\{\mP\}\) case by \(\spe{\rate,\!\Wm\!,\mP}\)
and  \(\cset=\{\mP:\EXS{\mP}{\costf}\leq\costc\}\) case by \(\spe{\rate,\!\Wm\!,\costc}\).
\end{definition}

\begin{lemma}\label{lem:spherepacking}
	For any \(\Wm\!:\!\inpS\!\to\!\pmea{\outA}\), \(\cset\!\subset\!\pdis{\inpS}\), 
	\(\spe{\rate,\!\Wm\!,\cset}\) is nonincreasing and convex in \(\rate\) on \(\reals{\geq0}\), 
	finite on \((\CRC{0^{_{+}}\!}{\Wm\!}{\cset},\infty)\), and 
	continuous on \([\CRC{0^{_{+}}\!}{\Wm\!}{\cset},\infty)\)
	where \(\CRC{0^{_{+}}\!}{\Wm\!}{\cset}=\lim\nolimits_{\rno\downarrow0} \CRC{\rno}{\Wm\!}{\cset}\).
	In particular,
	\begin{align}
	\label{eq:lem:spherepacking}
	\spe{\rate,\!\Wm\!,\cset}
	&=\begin{cases}
	\infty 
	& \rate<\CRC{0^{_{+}}\!}{\Wm\!}{\cset}
	\\
	\sup_{\rno\in (0,1)} \tfrac{1-\rno}{\rno} \left(\CRC{\rno}{\Wm\!}{\cset}-\rate\right)
	&
	\rate=\CRC{0^{_{+}}\!}{\Wm\!}{\cset}
	\\
	\sup_{\rno\in [\rnf,1)} \tfrac{1-\rno}{\rno} \left(\CRC{\rno}{\Wm\!}{\cset}-\rate\right)
	&
	\rate=\CRC{\rnf}{\Wm}{\cset} \mbox{~for some~}\rnf\in(0,1)
	\\
	0
	&\rate\geq\CRC{1}{\Wm\!}{\cset}
	\end{cases}.
	\end{align} 
	\end{lemma}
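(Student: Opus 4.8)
The plan is to work directly from the definition $\spe{\rate,\!\Wm\!,\cset} = \sup_{\rno\in(0,1)} \tfrac{1-\rno}{\rno}(\CRC{\rno}{\Wm\!}{\cset}-\rate)$ and to exploit three structural facts about the family of functions $\rate \mapsto \tfrac{1-\rno}{\rno}(\CRC{\rno}{\Wm\!}{\cset}-\rate)$ indexed by $\rno\in(0,1)$: each is affine and nonincreasing in $\rate$ with slope $-\tfrac{1-\rno}{\rno}<0$; the coefficient $\tfrac{1-\rno}{\rno}$ is continuous and strictly decreasing in $\rno$ on $(0,1)$, running from $+\infty$ down to $0$; and $\CRC{\rno}{\Wm\!}{\cset}$ is nondecreasing in $\rno$ (a known monotonicity of Augustin capacity in the order, which I would cite from \cite{nakiboglu19C}). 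First I would record that a pointwise supremum of affine functions is convex and lower semicontinuous in $\rate$, and that a supremum of nonincreasing functions is nonincreasing; this gives convexity and monotonicity on all of $\reals{\geq0}$ immediately.

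Next I would establish the value on $\rate<\CRC{0^{_{+}}\!}{\Wm\!}{\cset}$. For such $\rate$, since $\CRC{\rno}{\Wm\!}{\cset}\to\CRC{0^{_{+}}\!}{\Wm\!}{\cset}$ as $\rno\downarrow0$, there is $\varepsilon>0$ and $\rno_0$ with $\CRC{\rno}{\Wm\!}{\cset}-\rate\geq\varepsilon$ for all $\rno\le\rno_0$; then $\tfrac{1-\rno}{\rno}(\CRC{\rno}{\Wm\!}{\cset}-\rate)\geq \tfrac{1-\rno}{\rno}\varepsilon\to\infty$ as $\rno\downarrow0$, so the supremum is $+\infty$. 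Conversely, for $\rate\ge\CRC{1}{\Wm\!}{\cset}$, monotonicity of $\CRC{\rno}{\Wm\!}{\cset}$ in $\rno$ gives $\CRC{\rno}{\Wm\!}{\cset}\le\CRC{1}{\Wm\!}{\cset}\le\rate$ for every $\rno\in(0,1)$, hence every term is $\le0$; and the terms approach $0$ as $\rno\uparrow1$ because $\tfrac{1-\rno}{\rno}\to0$ while $\CRC{\rno}{\Wm\!}{\cset}-\rate$ stays bounded above (it is $\le0$; a lower bound is not needed for the sup), so the supremum equals $0$. This settles the first and last branches of \eqref{eq:lem:spherepacking}, and en route shows $\spe{\rate,\!\Wm\!,\cset}$ is finite and nonnegative precisely on $[\CRC{0^{_{+}}\!}{\Wm\!}{\cset},\infty)$ once I also check finiteness strictly above $\CRC{0^{_{+}}\!}{\Wm\!}{\cset}$: there, pick $\rnf\in(0,1)$ with $\CRC{\rnf}{\Wm}{\cset}\le\rate$ (possible since $\CRC{\rno}{\Wm\!}{\cset}\to\CRC{0^{_{+}}\!}{\Wm\!}{\cset}<\rate$); then for $\rno\ge\rnf$ the term is $\le0$, and for $\rno<\rnf$ we bound $\tfrac{1-\rno}{\rno}(\CRC{\rno}{\Wm\!}{\cset}-\rate)\le\tfrac{1-\rno}{\rno}(\CRC{\rnf}{\Wm}{\cset}-\rate)\le \tfrac{1-\rnf}{\rnf}\cdot 0$... more carefully, since $\CRC{\rno}{\Wm\!}{\cset}-\rate\le \CRC{\rnf}{\Wm}{\cset}-\rate\le 0$ and $\tfrac{1-\rno}{\rno}>0$, each such term is $\le 0$ as well, giving $\spe{\rate,\!\Wm\!,\cset}\le\max(0,\sup_{\rno<\rnf}(\cdots))$; to get genuine finiteness I instead compare against a fixed reference order: for $\rno\le\rnf$, $\tfrac{1-\rno}{\rno}(\CRC{\rno}{\Wm\!}{\cset}-\rate)\le\tfrac{1-\rno}{\rno}(\CRC{\rnf}{\Wm}{\cset}-\rate)$, and since the right side is a nonincreasing affine function of $\tfrac{1}{\rno}$ with a nonpositive leading increment it is bounded above by its value at $\rno=\rnf$, namely $\tfrac{1-\rnf}{\rnf}(\CRC{\rnf}{\Wm}{\cset}-\rate)\le0$. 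Hence $\spe{\rate,\!\Wm\!,\cset}\le0$... which contradicts nothing since at such $\rate$ we do expect $\spe{}=0$ when $\rate\ge\CRC{1}{}{}$ but a positive finite value when $\CRC{0^{_{+}}}{}{}<\rate<\CRC{1}{}{}$. So the reference-order argument must instead use some $\rnf$ with $\CRC{\rnf}{\Wm}{\cset}>\rate$ still available on the low side — the point being simply that on $(0,\rnf]$ the term $\tfrac{1-\rno}{\rno}(\CRC{\rno}{\Wm\!}{\cset}-\rate)$ is at most $\tfrac{1-\rno_*}{\rno_*}(\CRC{1}{\Wm\!}{\cset}-\rate)$ evaluated...

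I would streamline this as follows: finiteness on $(\CRC{0^{_{+}}\!}{\Wm\!}{\cset},\infty)$ follows by noting the term is continuous in $\rno$ on the compact-able set, and as $\rno\downarrow0$ it tends to $-\infty$ (since $\CRC{\rno}{\Wm\!}{\cset}-\rate\to\CRC{0^{_{+}}\!}{\Wm\!}{\cset}-\rate<0$ while $\tfrac{1-\rno}{\rno}\to+\infty$), while as $\rno\uparrow1$ it stays bounded; so the supremum over $(0,1)$ is attained on a compact subinterval and is finite. Continuity of $\spe{\rate,\!\Wm\!,\cset}$ on $[\CRC{0^{_{+}}\!}{\Wm\!}{\cset},\infty)$ then follows from convexity plus finiteness on the interior of that half-line together with the standard fact that a convex function finite on an interval is continuous on its interior, combined with lower semicontinuity at the left endpoint (from the sup-of-affine representation) and an explicit upper-semicontinuity check at $\CRC{0^{_{+}}\!}{\Wm\!}{\cset}$ via monotone convergence of each affine piece as $\rate\downarrow\CRC{0^{_{+}}\!}{\Wm\!}{\cset}$. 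Finally, the two middle branches of \eqref{eq:lem:spherepacking} are just the observation that in the supremum one may discard every $\rno$ with $\CRC{\rno}{\Wm\!}{\cset}\le\rate$ (those terms are $\le0\le\spe{}$ and hence irrelevant once we know $\spe{}\ge0$, which holds because the $\rno\uparrow1$ limit of each term is $0$); since $\CRC{\rno}{\Wm\!}{\cset}$ is nondecreasing in $\rno$, the surviving orders form exactly $(0,\rnf)$ when $\rate=\CRC{\rnf}{\Wm}{\cset}$, and we may include the endpoint $\rnf$ harmlessly. The main obstacle I anticipate is the continuity claim at the left endpoint $\rate=\CRC{0^{_{+}}\!}{\Wm\!}{\cset}$ — showing the finite-or-infinite value there is the actual limit from the right — which requires care since the function jumps to $+\infty$ just to the left; this is handled by monotone/dominated passage to the limit inside the supremum, using that for each fixed $\rno$ the affine term is continuous in $\rate$ and the family is monotone in $\rate$.
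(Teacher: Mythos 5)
Your approach follows the same route as the paper's proof: convexity and monotonicity from the sup-of-affine representation, the $\rate<\CRC{0^{_{+}}\!}{\Wm\!}{\cset}$ and $\rate\geq\CRC{1}{\Wm\!}{\cset}$ branches by inspecting the terms, finiteness and then continuity on the open half-line from convexity, and right-continuity at the left endpoint from the nonincreasing-plus-lower-semicontinuous observation. Two points need repair.

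The finiteness argument on $(\CRC{0^{_{+}}\!}{\Wm\!}{\cset},\infty)$ is incomplete. You assert that $\tfrac{1-\rno}{\rno}(\CRC{\rno}{\Wm\!}{\cset}-\rate)$ ``stays bounded'' as $\rno\uparrow1$, but this does not follow from the one monotonicity fact you cite (that $\CRC{\rno}{\Wm\!}{\cset}$ is nondecreasing in $\rno$). If $\CRC{1}{\Wm\!}{\cset}=\infty$ (which the lemma does not rule out), a generic nondecreasing capacity could make the term diverge near $\rno=1$: take $\CRC{\rno}{\Wm\!}{\cset}\sim(1-\rno)^{-2}$ for a hypothetical function, and the product blows up. What actually saves you, and what the paper leans on, is the second monotonicity property established in \cite[Lemma \ref*{C-lem:capacityO}-(\ref*{C-capacityO-decreasing})]{nakiboglu19C}: the map $\rno\mapsto\tfrac{1-\rno}{\rno}\CRC{\rno}{\Wm\!}{\cset}$ is nonincreasing on $(0,1)$. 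Once you have that, for any $\rnf$ with $\CRC{\rnf}{\Wm\!}{\cset}\leq\rate$, every $\rno\geq\rnf$ gives $\tfrac{1-\rno}{\rno}(\CRC{\rno}{\Wm\!}{\cset}-\rate)\leq\tfrac{1-\rnf}{\rnf}\CRC{\rnf}{\Wm\!}{\cset}$, which is finite. You need to invoke this fact explicitly; without it the finiteness claim (and hence the continuity claim that rests on it) is unsupported.

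The second issue is a slip in identifying the surviving orders for the branch $\rate=\CRC{\rnf}{\Wm}{\cset}$: you write that they ``form exactly $(0,\rnf)$,'' which is backwards. Since $\CRC{\rno}{\Wm\!}{\cset}$ is nondecreasing and $\rate=\CRC{\rnf}{\Wm}{\cset}$, the orders with $\CRC{\rno}{\Wm\!}{\cset}\geq\rate$ are $\rno\geq\rnf$, so the restriction of the supremum is to $[\rnf,1)$, exactly as the lemma states. Your surrounding reasoning (discard the orders with nonpositive term once $\spe{}\geq0$ is known) is correct; only the interval is misstated. Beyond these, your plan matches the paper's, including the case-free handling of the $\CRC{0^{_{+}}\!}{\Wm\!}{\cset}=\infty$ situation (where the open interval is empty and the claims are vacuous), which the paper instead treats through an explicit three-way case split.
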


Lemma \ref{lem:spherepacking} follows from the continuity and the monotonicity properties of 
\(\CRC{\rno}{\Wm\!}{\cset}\) established in \cite[Lemma \ref*{C-lem:capacityO}]{nakiboglu19C};
a proof can be found in Appendix \ref{sec:omitted-proofs}.
The proof of Lemma \ref{lem:spherepacking} is analogous to that of 
\cite[Lemma \ref*{B-lem:spherepackingexponent}]{nakiboglu19B}, which relies on 
\cite[Lemma \ref*{B-lem:capacityO}]{nakiboglu19B} instead of 
\cite[Lemma \ref*{C-lem:capacityO}]{nakiboglu19C}.

One can express \(\spe{\rate,\!\Wm\!,\cset}\) in terms of \(\spe{\rate,\!\Wm\!,\mP}\),
using the definitions  of
\(\CRC{\rno}{\Wm\!}{\cset}\),
\(\spe{\rate,\!\Wm\!,\cset}\),  
and \(\spe{\rate,\!\Wm\!,\mP}\):
\begin{align}
\notag
\spe{\rate,\!\Wm\!,\cset}
&=\sup\nolimits_{\rno\in(0,1)}\sup\nolimits_{\mP\in\cset}\tfrac{1-\rno}{\rno}\left(\RMI{\rno}{\mP}{\Wm}-\rate\right)
\\
\notag
&=\sup\nolimits_{\mP\in\cset}\sup\nolimits_{\rno\in(0,1)}\tfrac{1-\rno}{\rno}\left(\RMI{\rno}{\mP}{\Wm}-\rate\right)
\\
\label{eq:lem:spherepacking:compositionconstrained}
&=\sup\nolimits_{\mP\in\cset} \spe{\rate,\!\Wm\!,\mP}.
\end{align}

Lemma \ref{lem:spherepacking} holds for \(\spe{\rate,\!\Wm\!,\mP}\) 
by definition, but it can be strengthened significantly for 
\(\rate\)'s in 
\((\lim_{\rno\downarrow0}\RMI{\rno}{\mP}{\Wm},\RMI{1}{\mP}{\Wm}]\)
using the elementary properties of the Augustin information.
\begin{lemma}\label{lem:spherepacking-cc}
Let \(\Wm:\inpS\to \pmea{\outA}\) and \(\mP\in\pdis{\inpS}\) be such that 
\(\RMI{0^{_{+}}\!}{\mP}{\Wm}\neq \RMI{1}{\mP}{\Wm}\),
where
\(\RMI{0^{_{+}}\!}{\mP}{\Wm}\DEF\lim_{\rno\downarrow0}\RMI{\rno}{\mP}{\Wm}\).
Then for any rate 
\(\rate\in(\RMI{0^{_{+}}\!}{\mP}{\Wm},\RMI{1}{\mP}{\Wm}]\)
there exists a unique order \(\rns\in(0,1]\) satisfying 
\begin{align}
\label{eq:lem:spherepacking-cc:rate}
\rate
&=\RMI{1}{\mP}{\Wma{\rns}{\qmn{\rns,\mP}}}.
\end{align}
The orders \(\rns\) determined by \eqref{eq:lem:spherepacking-cc:rate} 
form an increasing continuous bijective function of rate \(\rate\), 
from \((\RMI{0^{_{+}}\!}{\mP}{\Wm},\RMI{1}{\mP}{\Wm}]\)
to \((0,1]\) satisfying 
\begin{align}
\label{eq:lem:spherepacking-cc:exponent}
\spe{\rate,\!\Wm\!,\mP}
&=\CRD{1}{\Wma{\rns}{\qmn{\rns,\mP}}}{\Wm}{\mP},
\\
\label{eq:lem:spherepacking-cc:slope}
\pder{}{\rate}\spe{\rate,\!\Wm\!,\mP}
&=\tfrac{\rns-1}{\rns}.
\end{align}
Thus \(\spe{\rate,\!\Wm\!,\mP}\) is  finite, convex, continuously differentiable, 
and decreasing in \(\rate\) on \((\RMI{0^{_{+}}\!}{\mP}{\Wm},\RMI{1}{\mP}{\Wm})\)
and its satisfies
\begin{align}
\label{eq:lem:spherepacking-cc:limit}
\spe{\RMI{0^{_{+}}\!}{\mP}{\Wm},\!\Wm\!,\mP}
&=\lim\nolimits_{\rno\downarrow0}\CRD{1}{\Wma{\rno}{\qmn{\rno,\mP}}}{\Wm}{\mP}.
\end{align}
Furthermore, if \(\spe{\RMI{0^{_{+}}\!}{\mP}{\Wm},\!\Wm\!,\mP}\) is finite then there exists a 
\(\Vm\!:\!\inpS\!\to\!\pmea{\outA}\) satisfying both 
\(\RMI{1}{\mP}{\Vm}=\RMI{0^{_{+}}}{\mP}{\Wm}\)
and
\(\CRD{1}{\Vm}{\Wm}{\mP}=\spe{\RMI{0^{_{+}}\!}{\mP}{\Wm},\!\Wm\!,\mP}\).
\end{lemma}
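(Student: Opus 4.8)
The plan is to reduce everything to one parametric identity and then run the standard Legendre–transform argument on top of it.

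\emph{First}, I would establish the parametric identity: for every \(\rno\in(0,1)\),
\begin{align}
\label{eq:plan-key}
\RMI{1}{\mP}{\Wma{\rno}{\qmn{\rno,\mP}}}
&=\RMI{\rno}{\mP}{\Wm}-\tfrac{\rno}{1-\rno}\CRD{1}{\Wma{\rno}{\qmn{\rno,\mP}}}{\Wm}{\mP}.
\end{align}
Writing \(\Vm\DEF\Wma{\rno}{\qmn{\rno,\mP}}\), the fixed point property \eqref{eq:augustinfixedpoint} gives \(\sum_{\dinp}\mP(\dinp)\Vm(\dinp)=\qmn{\rno,\mP}\), and applying \eqref{eq:augustinfixedpoint} at order one to \(\Vm\) gives \(\RMI{1}{\mP}{\Vm}=\CRD{1}{\Vm}{\qmn{\rno,\mP}}{\mP}\). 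Expanding the last expression through the definition \eqref{eq:def:tiltedchannel} of \(\Vm\) together with the per-letter identity \(\KLD{\Vm(\dinp)}{\qmn{\rno,\mP}}=\RD{\rno}{\Wm(\dinp)}{\qmn{\rno,\mP}}-\tfrac{\rno}{1-\rno}\KLD{\Vm(\dinp)}{\Wm(\dinp)}\) (immediate from \eqref{eq:def:tiltedchannel}; cf.\ \cite[Thm. 30]{ervenH14}), then averaging against \(\mP\) and using \(\CRD{\rno}{\Wm\!}{\qmn{\rno,\mP}}{\mP}=\RMI{\rno}{\mP}{\Wm}\), yields \eqref{eq:plan-key}. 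I would abbreviate the right-hand side of \eqref{eq:lem:spherepacking-cc:rate} as \(R(\rno)\) and \(\CRD{1}{\Wma{\rno}{\qmn{\rno,\mP}}}{\Wm}{\mP}\) as \(E(\rno)\), so that \eqref{eq:plan-key} reads \(\tfrac{1-\rno}{\rno}(\RMI{\rno}{\mP}{\Wm}-R(\rno))=E(\rno)\) and, since \(\Wma{1}{\mQ}=\Wm\), also \(R(1)=\RMI{1}{\mP}{\Wm}\) and \(E(1)=0\).

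\emph{Second}, I would put \eqref{eq:def:spherepackingexponent} into Legendre form. For each \(\dinp\) and \(\mQ\) the map \(\rno\mapsto(1-\rno)\RD{\rno}{\Wm(\dinp)}{\mQ}=-\ln\int(\der{\Wm(\dinp)}{\rfm})^{\rno}(\der{\mQ}{\rfm})^{1-\rno}\rfm(\dif\dout)\) is concave on \((0,1)\) (log-convexity of the integral in \(\rno\)), so \(\Psi(\rno)\DEF(1-\rno)\RMI{\rno}{\mP}{\Wm}=\inf_{\mQ\in\pmea{\outA}}(1-\rno)\CRD{\rno}{\Wm\!}{\mQ}{\mP}\) is concave on \((0,1)\); it is bounded there (by \(0\) below and by \((1-\rno)\RMI{1}{\mP}{\Wm}\leq(1-\rno)\sum_{\dinp}\mP(\dinp)\ln\tfrac{1}{\mP(\dinp)}\) above, as \(\mP\in\pdis{\inpS}\)), hence extends to a continuous concave function on \([0,1]\) with \(\Psi(0)=\RMI{0^{_{+}}\!}{\mP}{\Wm}\) and \(\Psi(1)=0\). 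The substitution \(s=\tfrac{1-\rno}{\rno}\) turns \eqref{eq:def:spherepackingexponent} into \(\spe{\rate,\!\Wm\!,\mP}=\sup_{s\geq0}\bigl[(1+s)\Psi(\tfrac{1}{1+s})-s\rate\bigr]\), and writing \(\Psi\) as the infimum of its affine majorants \(u\mapsto a+bu\) exhibits \(t\mapsto t\,\Psi(1/t)=\inf(at+b)\) as an infimum of affine functions of \(t\), so that \(s\mapsto(1+s)\Psi(\tfrac{1}{1+s})\) is concave on \([0,\infty)\) with value \(0\) at \(s=0\), right-derivative \(-\Psi'(1^{-})=\RMI{1}{\mP}{\Wm}\) at \(s=0\), and asymptotic slope \(\Psi(0)=\RMI{0^{_{+}}\!}{\mP}{\Wm}\) at \(s\to\infty\).

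\emph{Third}, for \(\rate\in(\RMI{0^{_{+}}\!}{\mP}{\Wm},\RMI{1}{\mP}{\Wm}]\) the boundary data above force the supremum in the Legendre form to be attained at a finite \(s^{\star}(\rate)\in[0,\infty)\), with \(s^{\star}=0\) exactly at \(\rate=\RMI{1}{\mP}{\Wm}\); uniqueness of the maximizer uses that the hypothesis \(\RMI{0^{_{+}}\!}{\mP}{\Wm}\neq\RMI{1}{\mP}{\Wm}\) excludes \(\Psi\) being affine on any subinterval of \((0,1)\) — equivalently, the strict monotonicity of \(\rno\mapsto\RMI{\rno}{\mP}{\Wm}\) on \((0,1)\), which I would take from \cite{nakiboglu19C} — so that \(s\mapsto(1+s)\Psi(\tfrac{1}{1+s})\) is strictly concave; standard convex analysis then makes \(\rate\mapsto\rns\DEF\tfrac{1}{1+s^{\star}(\rate)}\) an increasing continuous bijection onto \((0,1]\). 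The first-order condition at \(s^{\star}\) reads \(\Psi(\rns)-\rns\Psi'(\rns)=\rate\) (the derivative of \(s\mapsto(1+s)\Psi(\tfrac{1}{1+s})\) at \(s^{\star}\)), and combining the Augustin-information derivative identity \(E(\rno)=\Psi(\rno)+(1-\rno)\Psi'(\rno)\) from \cite{nakiboglu19C} (equivalently the envelope derivative of the infimum \(\Psi\), whose minimizer \(\qmn{\rno,\mP}\) is unique) with \eqref{eq:plan-key} shows \(\Psi(\rns)-\rns\Psi'(\rns)=R(\rns)\), so this first-order condition is exactly \eqref{eq:lem:spherepacking-cc:rate}; in particular the \(\rns\) in \eqref{eq:lem:spherepacking-cc:rate} is unique. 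Then \(\spe{\rate,\!\Wm\!,\mP}=(1+s^{\star})\Psi(\rns)-s^{\star}\rate=\tfrac{1-\rns}{\rns}(\RMI{\rns}{\mP}{\Wm}-R(\rns))=E(\rns)\) by \eqref{eq:plan-key}, which is \eqref{eq:lem:spherepacking-cc:exponent} (the \(\rns=1\) instance reads \(0=E(1)\), consistent with Lemma~\ref{lem:spherepacking}); and since \(-s^{\star}\) is a subgradient of the convex function \(\spe{\cdot,\!\Wm\!,\mP}\) at \(\rate\) with \(s^{\star}\) the unique maximizer, \(\spe{\cdot,\!\Wm\!,\mP}\) is differentiable at \(\rate\) with \(\pder{}{\rate}\spe{\rate,\!\Wm\!,\mP}=-s^{\star}=\tfrac{\rns-1}{\rns}\), i.e.\ \eqref{eq:lem:spherepacking-cc:slope}; continuity of \(\rate\mapsto s^{\star}\) upgrades this to continuous differentiability on \((\RMI{0^{_{+}}\!}{\mP}{\Wm},\RMI{1}{\mP}{\Wm})\), and finiteness, convexity and the strict decrease there then follow at once.

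\emph{Finally}, \eqref{eq:lem:spherepacking-cc:limit} follows by letting \(\rate\downarrow\RMI{0^{_{+}}\!}{\mP}{\Wm}\) (so \(\rns\downarrow0\)) in \eqref{eq:lem:spherepacking-cc:exponent} and using the continuity of \(\spe{\cdot,\!\Wm\!,\mP}\) at \(\RMI{0^{_{+}}\!}{\mP}{\Wm}\) from Lemma~\ref{lem:spherepacking}. For the last assertion I would first record the converse bound \(\spe{\rate,\!\Wm\!,\mP}\leq\CRD{1}{\Vm}{\Wm}{\mP}\) for every \(\Vm:\inpS\to\pmea{\outA}\) with \(\RMI{1}{\mP}{\Vm}\leq\rate\): taking \(\mQ\) to be the order-one Augustin mean of \(\Vm\) and using the single-letter bound \(\RD{\rno}{\Wm(\dinp)}{\mQ}\leq\KLD{\Vm(\dinp)}{\mQ}+\tfrac{\rno}{1-\rno}\KLD{\Vm(\dinp)}{\Wm(\dinp)}\) (whose infimum over \(\Vm(\dinp)\) is attained by \(\Wma{\rno}{\mQ}(\dinp)\)) gives \(\tfrac{1-\rno}{\rno}(\RMI{\rno}{\mP}{\Wm}-\rate)\leq\CRD{1}{\Vm}{\Wm}{\mP}\) for each \(\rno\in(0,1)\). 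Assuming \(\spe{\RMI{0^{_{+}}\!}{\mP}{\Wm},\!\Wm\!,\mP}<\infty\) and applying this to \(\Vm_{\rno}\DEF\Wma{\rno}{\qmn{\rno,\mP}}\) as \(\rno\downarrow0\) (so \(\RMI{1}{\mP}{\Vm_{\rno}}=R(\rno)\downarrow\RMI{0^{_{+}}\!}{\mP}{\Wm}\) while \(\CRD{1}{\Vm_{\rno}}{\Wm}{\mP}=E(\rno)\) stays bounded), the densities \(\der{\Vm_{\rno}(\dinp)}{\Wm(\dinp)}\) are uniformly integrable, so a subsequence of \(\Vm_{\rno}\) converges to some \(\Vm\) (Dunford--Pettis); lower semicontinuity of \(\RMI{1}{\mP}{\cdot}\) and \(\CRD{1}{\cdot}{\Wm}{\mP}\) gives the two ``\(\leq\)'' inequalities, while the converse bound together with \(\spe{r,\!\Wm\!,\mP}=\infty\) for \(r<\RMI{0^{_{+}}\!}{\mP}{\Wm}\) (Lemma~\ref{lem:spherepacking}) gives the reverse ones, so \(\RMI{1}{\mP}{\Vm}=\RMI{0^{_{+}}\!}{\mP}{\Wm}\) and \(\CRD{1}{\Vm}{\Wm}{\mP}=\spe{\RMI{0^{_{+}}\!}{\mP}{\Wm},\!\Wm\!,\mP}\). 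The main obstacle is the strict-concavity input in the third step — the strict monotonicity of \(\RMI{\rno}{\mP}{\Wm}\) that makes \(\rns\) genuinely unique — together with the compactness and lower-semicontinuity bookkeeping in the last one; everything else is routine once \eqref{eq:plan-key} is in place, and runs parallel to the \renyi-information argument of \cite[Lemma~\ref*{B-lem:spherepackingexponent}]{nakiboglu19B}.
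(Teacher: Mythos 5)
Your argument is correct, and while it reaches the same intermediate facts as the paper, it packages them differently. Where the paper invokes \cite[{(\ref*{C-eq:lem:information:alternative:opt})} and {(\ref*{C-eq:lem:informationO:differentiability-alt})}]{nakiboglu19C} as black boxes, you rederive the key parametric identity \eqref{eq:plan-key} from first principles by combining the fixed-point property \eqref{eq:augustinfixedpoint} with the order-one Augustin mean of \(\Wma{\rno}{\qmn{\rno,\mP}}\) and \cite[Thm.~30]{ervenH14}; this makes the role of \eqref{eq:augustinfixedpoint} explicit and is arguably more illuminating. You also reorganize the optimization over \(\rno\) as a Legendre transform of the perspective of the concave function \(\Psi(\rno)=(1-\rno)\RMI{\rno}{\mP}{\Wm}\), extracting \eqref{eq:lem:spherepacking-cc:rate} as the first-order condition, \eqref{eq:lem:spherepacking-cc:exponent} from \eqref{eq:plan-key}, and \eqref{eq:lem:spherepacking-cc:slope} from the subgradient of the conjugate; the paper instead computes \(\pder{}{\rno}\tfrac{1-\rno}{\rno}(\RMI{\rno}{\mP}{\Wm}-\rate)\) directly, applies the intermediate value theorem to \(R(\rno)=\RMI{1}{\mP}{\Wma{\rno}{\qmn{\rno,\mP}}}\), and gets \eqref{eq:lem:spherepacking-cc:slope} from a supporting-line comparison. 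Both routes rely on the same cited monotonicity/continuity/differentiability facts from \cite{nakiboglu19C}. Your treatment of \eqref{eq:lem:spherepacking-cc:limit} via one-sided continuity of \(\spe{\cdot,\Wm,\mP}\) from Lemma~\ref{lem:spherepacking} is a clean substitute for the paper's mean-value-theorem argument, and your treatment of the terminal \(\Vm\) via the de la Vall\'ee Poussin criterion, weak-\(\Lon{}\) compactness, and lower semicontinuity is the same as the paper's.

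One small imprecision worth flagging: in the third step you justify strict concavity of \(g(s)=(1+s)\Psi(\tfrac{1}{1+s})\) by asserting that ``\(\Psi\) not affine on any subinterval'' is \emph{equivalent} to strict monotonicity of \(\rno\mapsto\RMI{\rno}{\mP}{\Wm}\). The implication is not an equivalence: \(\Psi(\rno)=a+b\rno\) on a subinterval gives \(\RMI{\rno}{\mP}{\Wm}=\tfrac{a+b\rno}{1-\rno}\), which is strictly increasing whenever \(a+b>0\), so strict monotonicity of the Augustin information by itself does not rule out affine pieces of \(\Psi\). What you actually need --- and what makes \(g'(s)=\Psi(\rns)-\rns\Psi'(\rns)=R(\rns)\) strictly decreasing --- is the strict monotonicity (in \(\rno\)) of the Haroutunian information \(R(\rno)=\RMI{1}{\mP}{\Wma{\rno}{\qmn{\rno,\mP}}}\), which is exactly what \cite[Lemma~{\ref*{C-lem:informationO}}-(\ref*{C-informationO:strictconvexity},\ref*{C-informationO:monotonicityofharoutunianinformation})]{nakiboglu19C} supplies under the hypothesis \(\RMI{0^{_{+}}\!}{\mP}{\Wm}\neq\RMI{1}{\mP}{\Wm}\). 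So cite the monotonicity of \(R\), not of \(\RMI{\rno}{\mP}{\Wm}\), when arguing uniqueness; the rest of your argument already uses the correct quantity.
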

\begin{proof}[Proof of Lemma \ref{lem:spherepacking-cc}]
Note that \(\RMI{1}{\mP}{\Wma{\rno}{\qmn{\rno,\mP}}}\) is an increasing
and  continuous function of
the order \(\rno\) 
by \cite[Lemma \ref*{C-lem:informationO}-(\ref*{C-informationO:strictconvexity},\ref*{C-informationO:monotonicityofharoutunianinformation})]{nakiboglu19C} 
because \(\RMI{0^{_{+}}\!}{\mP}{\Wm}\neq \RMI{1}{\mP}{\Wm}\) by the hypothesis.
In addition \(\lim_{\rno\downarrow0}\RMI{1}{\mP}{\Wma{\rno}{\qmn{\rno,\mP}}}=\RMI{0^{_{+}}\!}{\mP}{\Wm}\)
by \cite[Lemma {\ref*{C-lem:informationO}-(\ref*{C-informationO:limitofharoutunianinformation})}]{nakiboglu19C}.
Then there exists a unique \(\rns\) satisfying \eqref{eq:lem:spherepacking-cc:rate}
by the intermediate value theorem \cite[4.23]{rudin}.
The function defined by \eqref{eq:lem:spherepacking-cc:rate} is an increasing continuous bijective   
function of the rate \(\rate\) from \((\RMI{0^{_{+}}\!}{\mP}{\Wm},\RMI{1}{\mP}{\Wm}]\)
to \((0,1]\)
because it is the inverse of an increasing continuous bijective function
from \((0,1]\) to \((\RMI{0^{_{+}}\!}{\mP}{\Wm},\RMI{1}{\mP}{\Wm}]\), 
i.e. \(\rno \rightsquigarrow\RMI{1}{\mP}{\Wma{\rno}{\qmn{\rno,\mP}}}\).

On the other hand \(\RMI{\rno}{\mP}{\Wm}\) is continuously differentiable in \(\rno\)
by \cite[Lemma {\ref*{C-lem:informationO}-(\ref*{C-informationO:differentiability})}]{nakiboglu19C};
then
\cite[{(\ref*{C-eq:lem:information:alternative:opt})} and
{(\ref*{C-eq:lem:informationO:differentiability-alt})}]{nakiboglu19C} 
imply
\begin{align}
\label{eq:spherepacking-cc-1}
\pder{}{\rno}\tfrac{1-\rno}{\rno}\left(\RMI{\rno}{\mP}{\Wm}-\rate\right)
&=\tfrac{1}{\rno^{2}}\left(\rate-\RMI{1}{\mP}{\Wma{\rno}{\qmn{\rno,\mP}}}\right).
\end{align}
Hence for any \(\rate\!\in\!(\RMI{0^{_{+}}\!}{\mP}{\Wm},\RMI{1}{\mP}{\Wm}]\),
the supremum in the definition of \(\spe{\rate,\!\Wm\!,\mP}\) is achieved at
the order \(\rns\)
satisfying \eqref{eq:lem:spherepacking-cc:rate}.
Then \eqref{eq:lem:spherepacking-cc:exponent} follows from
\cite[{(\ref*{C-eq:lem:information:alternative:opt})}]{nakiboglu19C}.
Furthermore, for any  \(\rate\in(\RMI{0^{_{+}}\!}{\mP}{\Wm},\RMI{1}{\mP}{\Wm})\)
and \(\overline{\rate}\geq0\) we have
\begin{align}
\notag
\spe{\overline{\rate},\!\Wm\!,\mP}
&\geq \tfrac{1-\rns(\rate)}{\rns(\rate)}(\RMI{\rns}{\mP}{\Wm}-\overline{\rate})
\\
\label{eq:spherepacking-cc-2}
&=\spe{\rate,\!\Wm\!,\mP}+\tfrac{1-\rns(\rate)}{\rns(\rate)}(\rate-\overline{\rate}).
\end{align}
For any  \(\overline{\rate}\in(\RMI{0^{_{+}}\!}{\mP}{\Wm},\RMI{1}{\mP}{\Wm})\)
and \(\rate\geq0\), 
following a similar analysis and reversing the roles of \(\rate\) and \(\overline{\rate}\) we obtain
\begin{align}
\label{eq:spherepacking-cc-3}
\spe{\rate,\!\Wm\!,\mP}
&\geq \spe{\overline{\rate},\!\Wm\!,\mP}+\tfrac{1-\rns(\overline{\rate})}{\rns(\overline{\rate})}(\overline{\rate}-\rate).
\end{align}
Since \(\rns\) is increasing and continuous in the rate, \eqref{eq:spherepacking-cc-2} and \eqref{eq:spherepacking-cc-3}
imply \eqref{eq:lem:spherepacking-cc:slope}
for all \(\rate\)'s in \((\RMI{0^{_{+}}\!}{\mP}{\Wm},\RMI{1}{\mP}{\Wm}]\).

For \(\rate=\RMI{0^{_{+}}\!}{\mP}{\Wm}\) case,
note that \(\tfrac{1-\rno}{\rno}(\RMI{\rno}{\mP}{\Wm}-\rate)\)
is decreasing \(\rno\) on \((0,1)\) by
\eqref{eq:spherepacking-cc-1} and 
\cite[Lemma {\ref*{C-lem:informationO}-(\ref*{C-informationO:monotonicityofharoutunianinformation},\ref*{C-informationO:limitofharoutunianinformation})}]{nakiboglu19C}. Thus
\begin{align}
\notag
\spe{\RMI{0^{_{+}}\!}{\mP}{\Wm},\!\Wm\!,\mP}
&=\lim\nolimits_{\rno\downarrow 0} \tfrac{1-\rno}{\rno}(\RMI{\rno}{\mP}{\Wm}-\RMI{0^{_{+}}\!}{\mP}{\Wm}).
\end{align}
Then  \eqref{eq:lem:spherepacking-cc:limit} follows from 
the mean value theorem \cite[5.10]{rudin}
and \cite[(\ref*{C-eq:lem:informationO:differentiability-alt})]{nakiboglu19C}.
Furthermore, if \(\spe{\RMI{0^{_{+}}\!}{\mP}{\Wm},\!\Wm\!,\mP}=\gamma\) for a
\(\gamma\in\reals{+}\), then
\(\RD{1}{\Wma{\rno}{\qmn{\rno,\mP}}(\dinp)}{\Wm(\dinp)}\leq \tfrac{\gamma}{\mP(\dinp)}\)
as a result of non-negativity of the \renyi divergence. Hence
\begin{align}
\notag
\int \GX\left(\der{\Wma{\rno}{\qmn{\rno,\mP}}(\dinp)}{\Wm(\dinp)}\right)
\Wm(\dif{\dout}|\dinp)\leq \tfrac{\gamma}{\mP(\dinp)}+\tfrac{1}{e}+1
\end{align}
for \(\GX(\tau)=\tau\IND{0\leq\tau<e}+\tau\ln\tau \IND{\tau\geq e}\)
because \(\tau\ln\tau\geq-\sfrac{1}{e}\).
Then \(\{\der{\Wma{\rno}{\qmn{\rno,\mP}}(\dinp)}{\Wm(\dinp)}\}_{\rno\in(0,1)}\)
are uniformly \(\Wm(\dinp)\)-integrable by \cite[Thm 4.5.9]{bogachev}, i.e.
by the necessary and sufficient condition for the uniform 
integrability determined by de la Vall\'{e}e Poussin.
Thus any sequence of members of \(\{\Wma{\rno}{\qmn{\rno,\mP}}(\dinp)\}_{\rno\in(0,1)}\)
has a convergent subsequence for the topology of setwise convergence 
by \cite[Thm. 4.7.25]{bogachev}.
For each \(\dinp\in\supp{\mP}\), 
let \(\Vm(\dinp)\) be the limit point for the aforementioned subsequence for
the sequence \(\{\Wma{\sfrac{1}{\knd}}{\qmn{\sfrac{1}{\knd},\mP}}(\dinp)\}_{\knd\in\integers{+}}\).
Then \eqref{eq:lem:spherepacking-cc:rate}, \eqref{eq:lem:spherepacking-cc:exponent},
and the lower semicontinuity of the \renyi divergence in its arguments for the topology of
setwise convergence, i.e.\cite[Thm. 15]{ervenH14}, imply
\(\RMI{1}{\mP}{\Vm}\leq\RMI{0^{_{+}}}{\mP}{\Wm}\)
and
\(\CRD{1}{\Vm}{\Wm}{\mP}\leq \gamma\).
On the other hand as a result of the definition of \(\spe{\rate,\!\Wm\!,\mP}\) and
\cite[Lemma {\ref*{C-lem:information}-(\ref*{C-information:alternative})}]{nakiboglu19C},
we have
\begin{align}
\notag
\spe{\rate,\!\Wm\!,\mP}
&=\sup\nolimits_{\rno\in (0,1)}\inf\nolimits_{\Vm\in\pmea{\outA|\inpS}}
\CRD{1}{\Vm}{\Wm}{\mP}+\tfrac{1-\rno}{\rno} \left(\RMI{1}{\mP}{\Vm}-\rate\right)
\\
\notag
&\leq \sup\nolimits_{\rno\in (0,1)}\CRD{1}{\Vm}{\Wm}{\mP}+\tfrac{1-\rno}{\rno} \left(\RMI{1}{\mP}{\Vm}-\rate\right)
\\
\notag
&=
\begin{cases}
\CRD{1}{\Vm}{\Wm}{\mP}
&\rate\geq\RMI{1}{\mP}{\Vm}
\\
\infty
&\rate<\RMI{1}{\mP}{\Vm}
\end{cases}.
\end{align} 
Thus \(\RMI{1}{\mP}{\Vm}\) cannot be less than \(\RMI{0^{_{+}}}{\mP}{\Wm}\) because 
\(\spe{\rate,\Wm,\mP}\)
is infinite for all  \(\rate<\RMI{0^{_{+}}}{\mP}{\Wm}\).
Hence\(\RMI{1}{\mP}{\Vm}=\RMI{0^{_{+}}}{\mP}{\Wm}\).
Consequently \(\CRD{1}{\Vm}{\Wm}{\mP}\) cannot be less than \(\gamma\) because 
\(\spe{\RMI{0^{_{+}}}{\mP}{\Wm},\Wm,\mP}=\gamma\).
Hence \(\CRD{1}{\Vm}{\Wm}{\mP}=\gamma\).
\end{proof}

Lemma \ref{lem:spherepacking-cc} provides a simple confirmation of the alternative expression for
\(\spe{\rate,\!\Wm\!,\mP}\), which is commonly known as Haroutunian's form \cite{haroutunian68}. 
\begin{lemma}\label{lem:haroutunianform}
	For any \(\Wm:\inpS\to \pmea{\outA}\), \(\mP\in\pdis{\inpS}\), and \(\rate\in\reals{+}\)
	\begin{align}
	\label{eq:lem:haroutunianform}
	\spe{\rate,\!\Wm\!,\mP}
	&=\inf\nolimits_{\Vm:\RMI{1}{\mP}{\Vm}\leq\rate}
	\CRD{1}{\Vm}{\Wm}{\mP}.
	\end{align}
\end{lemma}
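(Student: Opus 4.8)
The plan is to read off both inequalities in \eqref{eq:lem:haroutunianform} from the variational identity
\begin{align}
\notag
\spe{\rate,\!\Wm\!,\mP}
&=\sup\nolimits_{\rno\in(0,1)}\inf\nolimits_{\Vm\in\pmea{\outA|\inpS}}\CRD{1}{\Vm}{\Wm}{\mP}+\tfrac{1-\rno}{\rno}\left(\RMI{1}{\mP}{\Vm}-\rate\right)
\end{align}
that is already obtained in the proof of Lemma \ref{lem:spherepacking-cc} from Definition \ref{def:spherepackingexponent} and the alternative expression for the Augustin information \cite[Lemma {\ref*{C-lem:information}-(\ref*{C-information:alternative})}]{nakiboglu19C}, together with the achievability assertions of Lemma \ref{lem:spherepacking-cc}. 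For the ``\(\leq\)'' direction I would, for each fixed \(\Vm\), bound the inner infimum by its value at \(\Vm\), which is exactly the estimate appearing at the end of the proof of Lemma \ref{lem:spherepacking-cc}:
\begin{align}
\notag
\spe{\rate,\!\Wm\!,\mP}
&\leq\sup\nolimits_{\rno\in(0,1)}\left[\CRD{1}{\Vm}{\Wm}{\mP}+\tfrac{1-\rno}{\rno}\left(\RMI{1}{\mP}{\Vm}-\rate\right)\right]
=\begin{cases}\CRD{1}{\Vm}{\Wm}{\mP}&\rate\geq\RMI{1}{\mP}{\Vm}\\\infty&\rate<\RMI{1}{\mP}{\Vm}\end{cases}
\end{align}
for every \(\Vm\in\pmea{\outA|\inpS}\); taking the infimum of the right-hand side over the \(\Vm\)'s with \(\RMI{1}{\mP}{\Vm}\leq\rate\) then gives \(\spe{\rate,\!\Wm\!,\mP}\leq\inf_{\Vm:\RMI{1}{\mP}{\Vm}\leq\rate}\CRD{1}{\Vm}{\Wm}{\mP}\).

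For the reverse inequality it suffices either to exhibit a feasible \(\Vm\) with \(\CRD{1}{\Vm}{\Wm}{\mP}\leq\spe{\rate,\!\Wm\!,\mP}\), or to observe that \(\spe{\rate,\!\Wm\!,\mP}\) is infinite. I would split into three cases using \(\RMI{0^{_{+}}\!}{\mP}{\Wm}\leq\RMI{1}{\mP}{\Wm}\). If \(\rate<\RMI{0^{_{+}}\!}{\mP}{\Wm}\), then \(\spe{\rate,\!\Wm\!,\mP}=\infty\) by Lemma \ref{lem:spherepacking} and nothing more is needed. If \(\rate\geq\RMI{1}{\mP}{\Wm}=\CRC{1}{\!\Wm\!}{\{\mP\}}\), then \(\spe{\rate,\!\Wm\!,\mP}=0\) by Lemma \ref{lem:spherepacking}, while \(\Vm=\Wm\) is feasible with \(\CRD{1}{\Wm}{\Wm}{\mP}=0\). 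If \(\RMI{0^{_{+}}\!}{\mP}{\Wm}\leq\rate<\RMI{1}{\mP}{\Wm}\), then \(\RMI{0^{_{+}}\!}{\mP}{\Wm}\neq\RMI{1}{\mP}{\Wm}\), so Lemma \ref{lem:spherepacking-cc} applies: for \(\rate\in(\RMI{0^{_{+}}\!}{\mP}{\Wm},\RMI{1}{\mP}{\Wm})\), the tilted channel \(\Vm=\Wma{\rns}{\qmn{\rns,\mP}}\) with \(\rns\) the order fixed by \eqref{eq:lem:spherepacking-cc:rate} has \(\RMI{1}{\mP}{\Vm}=\rate\) and \(\CRD{1}{\Vm}{\Wm}{\mP}=\spe{\rate,\!\Wm\!,\mP}\) by \eqref{eq:lem:spherepacking-cc:rate}--\eqref{eq:lem:spherepacking-cc:exponent}; and for the boundary rate \(\rate=\RMI{0^{_{+}}\!}{\mP}{\Wm}\) the ``furthermore'' clause of Lemma \ref{lem:spherepacking-cc} supplies such a \(\Vm\) whenever \(\spe{\rate,\!\Wm\!,\mP}\) is finite, while the subcase \(\spe{\rate,\!\Wm\!,\mP}=\infty\) requires no argument. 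Combining the two directions yields \eqref{eq:lem:haroutunianform}.

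I expect the only genuinely delicate points to be the two corners: the boundary rate \(\rate=\RMI{0^{_{+}}\!}{\mP}{\Wm}\), where one must fall back on the last sentence of Lemma \ref{lem:spherepacking-cc} rather than on \eqref{eq:lem:spherepacking-cc:rate}, and the degenerate situation \(\RMI{0^{_{+}}\!}{\mP}{\Wm}=\RMI{1}{\mP}{\Wm}\), where Lemma \ref{lem:spherepacking-cc} does not apply but the choice \(\Vm=\Wm\) together with \(\spe{\rate,\!\Wm\!,\mP}\in\{0,\infty\}\) closes the gap. Everything else is substitution into identities already established, so there is no substantial new obstacle.
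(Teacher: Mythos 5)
Your proposal is correct and follows essentially the same route as the paper's proof: the ``\(\leq\)'' direction via the alternative expression for the Augustin information and the max-min inequality, and the ``\(\geq\)'' direction by disposing of the trivial/degenerate rates and invoking Lemma \ref{lem:spherepacking-cc} (including its ``furthermore'' clause for the boundary rate \(\rate=\RMI{0^{_{+}}\!}{\mP}{\Wm}\)) to exhibit an achieving \(\Vm\). The only difference is that you spell out the case split more explicitly than the paper, which compresses the same argument into one sentence; no gap.
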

\begin{proof}[Proof of Lemma \ref{lem:haroutunianform}]
If \(\rate\in[\RMI{1}{\mP}{\Wm},\infty)\),
then \eqref{eq:lem:haroutunianform} holds because 
the expression on the right hand side of \eqref{eq:lem:haroutunianform}
is zero as a result of the substitution \(\Vm\!=\!\Wm\)
and the non-negativity of the \renyi divergence.

On the other hand, as a result of the definition of \(\spe{\rate,\!\Wm\!,\mP}\),
\cite[Lemma {\ref*{C-lem:information}-(\ref*{C-information:alternative})}]{nakiboglu19C},
and the max-min inequality we have
\begin{align}
\notag
\spe{\rate,\!\Wm\!,\mP}
&=\sup\nolimits_{\rno\in (0,1)}\inf\nolimits_{\Vm\in\pmea{\outA|\inpS}}
\CRD{1}{\Vm}{\Wm}{\mP}+\tfrac{1-\rno}{\rno} \left(\RMI{1}{\mP}{\Vm}-\rate\right).
\\
\notag
&\leq \inf\nolimits_{\Vm\in\pmea{\outA|\inpS}}\sup\nolimits_{\rno\in (0,1)}
\CRD{1}{\Vm}{\Wm}{\mP}+\tfrac{1-\rno}{\rno} \left(\RMI{1}{\mP}{\Vm}-\rate\right)
\\
\notag
&=
\inf\nolimits_{\Vm:\RMI{1}{\mP}{\Vm}\leq\rate}
\CRD{1}{\Vm}{\Wm}{\mP}.
\end{align}
Then \eqref{eq:lem:haroutunianform} holds whenever \(\spe{\rate,\!\Wm\!,\mP}\) is infinite,
i.e. for all \(\rate\!\in\![0,\RMI{0^{_{+}}\!}{\mP}{\Wm})\)
and possibly for \(\rate=\RMI{0^{_{+}}\!}{\mP}{\Wm}\), trivially 
and whenever \(\spe{\rate,\!\Wm\!,\mP}\) is finite 
as a result of Lemma \ref{lem:spherepacking-cc}. 
\end{proof}

Haroutunian's form implies the following sufficient condition for the optimality
of an order \(\rno\) in the definition of the SPE given in \eqref{eq:def:spherepackingexponent}.

\begin{lemma}\label{lem:spherepacking-optimality}
	For any \(\Wm\!:\!\inpS\!\to\!\pmea{\outA}\), \(\cset\!\subset\!\pdis{\inpS}\), 
	and \(\rate\in (\CRC{0^{_{+}}\!}{\Wm\!}{\cset},\CRC{1}{\Wm\!}{\cset})\) if
	there exists an \(\rns\in(0,1)\) 
	and a function \(\Vmn{\mP}\) of \(\mP\) from \(\cset\) to
	\(\pmea{\outA|\inpS}\)  satisfying the following two inequalities
	\begin{align}
	\label{eq:lem:spherepacking-optimality:hypothesis-rate}
	\CRD{1}{\Vmn{\mP}}{\qmn{\rns,\!\Wm\!,\cset}}{\mP}
	&\leq \rate
	&
	&\forall \mP\in\cset,
	\\
	\label{eq:lem:spherepacking-optimality:hypothesis-exponent}
	\CRD{1}{\Vmn{\mP}}{\Wm}{\mP}
	&\leq \tfrac{1-\rns}{\rns}(\CRC{\rns}{\Wm}{\cset}-\rate)
	&
	&\forall \mP\in\cset,
	\end{align}
	then \(\spe{\rate,\!\Wm\!,\cset}=\tfrac{1-\rns}{\rns}(\CRC{\rns}{\Wm}{\cset}-\rate)\).
\end{lemma}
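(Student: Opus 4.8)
The plan is to establish the two inequalities between \(\spe{\rate,\!\Wm\!,\cset}\) and \(\tfrac{1-\rns}{\rns}(\CRC{\rns}{\Wm}{\cset}-\rate)\) separately. The lower bound \(\spe{\rate,\!\Wm\!,\cset}\geq\tfrac{1-\rns}{\rns}(\CRC{\rns}{\Wm}{\cset}-\rate)\) is immediate from the definition of the SPE in \eqref{eq:def:spherepackingexponent}, because the supremum there runs over all \(\rno\in(0,1)\) and \(\rns\) is one admissible choice; no hypothesis is needed for this direction. All of the work lies in the matching upper bound.

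For the upper bound I would first pass to single input distributions via \eqref{eq:lem:spherepacking:compositionconstrained}, which gives \(\spe{\rate,\!\Wm\!,\cset}=\sup_{\mP\in\cset}\spe{\rate,\!\Wm\!,\mP}\); it then suffices to show \(\spe{\rate,\!\Wm\!,\mP}\leq\tfrac{1-\rns}{\rns}(\CRC{\rns}{\Wm}{\cset}-\rate)\) for each fixed \(\mP\in\cset\). The key observation is that the center-based hypothesis \eqref{eq:lem:spherepacking-optimality:hypothesis-rate} already implies what Haroutunian's form needs: by Definition \ref{def:information}, \(\RMI{1}{\mP}{\Vmn{\mP}}=\inf_{\mQ\in\pmea{\outA}}\CRD{1}{\Vmn{\mP}}{\mQ}{\mP}\), so evaluating the right-hand side at \(\mQ=\qmn{\rns,\!\Wm\!,\cset}\) and using \eqref{eq:lem:spherepacking-optimality:hypothesis-rate} yields \(\RMI{1}{\mP}{\Vmn{\mP}}\leq\rate\). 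Hence \(\Vmn{\mP}\) is admissible in the infimum defining Haroutunian's form \eqref{eq:lem:haroutunianform}, and Lemma \ref{lem:haroutunianform} together with the exponent hypothesis \eqref{eq:lem:spherepacking-optimality:hypothesis-exponent} gives \(\spe{\rate,\!\Wm\!,\mP}\leq\CRD{1}{\Vmn{\mP}}{\Wm}{\mP}\leq\tfrac{1-\rns}{\rns}(\CRC{\rns}{\Wm}{\cset}-\rate)\). Taking the supremum over \(\mP\in\cset\) finishes the argument.

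I do not expect a genuine analytic obstacle here; the content of the lemma is that it repackages the non-asymptotic \emph{rate-versus-exponent} bookkeeping into a single, easily verified sufficient condition, and the only step that needs a moment's care is recognizing that controlling \(\CRD{1}{\Vmn{\mP}}{\cdot}{\mP}\) against the specific reference measure \(\qmn{\rns,\!\Wm\!,\cset}\) automatically controls the smaller quantity \(\RMI{1}{\mP}{\Vmn{\mP}}\), which is exactly the feasibility requirement of Haroutunian's form. For context rather than for use in the derivation, I would also note that the hypothesis \(\rate\in(\CRC{0^{_{+}}\!}{\Wm\!}{\cset},\CRC{1}{\Wm\!}{\cset})\) ensures \(\rate\in\reals{+}\), so Lemma \ref{lem:haroutunianform} applies, and ---together with Lemma \ref{lem:spherepacking}--- that the asserted common value is finite, i.e. the equality is non-degenerate.
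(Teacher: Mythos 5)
Your proof is correct and takes essentially the same route as the paper: lower bound by definition, upper bound by showing \(\RMI{1}{\mP}{\Vmn{\mP}}\leq\rate\) so that \(\Vmn{\mP}\) is feasible in Haroutunian's form, then Lemma \ref{lem:haroutunianform} with \eqref{eq:lem:spherepacking-optimality:hypothesis-exponent} and \eqref{eq:lem:spherepacking:compositionconstrained}. The only cosmetic difference is in how you get \(\RMI{1}{\mP}{\Vmn{\mP}}\leq\rate\): you simply evaluate the infimum in Definition \ref{def:information} at \(\mQ=\qmn{\rns,\!\Wm\!,\cset}\), whereas the paper invokes the equality in \eqref{eq:augustinslaw} (the order-one Pythagorean decomposition of \(\CRD{1}{\Vmn{\mP}}{\qmn{\rns,\!\Wm\!,\cset}}{\mP}\)) and drops the non-negative residual; these are equivalent observations and your version is if anything a touch more direct.
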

For some channels, \(\!\Vmn{\mP}\!=\!\Wma{\rns}{\qmn{\rns,\Wm,\cset}}\) satisfies
both \eqref{eq:lem:spherepacking-optimality:hypothesis-rate} 
and
\eqref{eq:lem:spherepacking-optimality:hypothesis-exponent};
for these channels, the value of SPE can be determined using 
Lemma \ref{lem:spherepacking-optimality}.
However, for an arbitrary channel, rate, 
and the corresponding optimal order \(\rns\) in \eqref{eq:def:spherepackingexponent}
a \(\Vmn{\mP}\) satisfying both
\eqref{eq:lem:spherepacking-optimality:hypothesis-rate} 
and
\eqref{eq:lem:spherepacking-optimality:hypothesis-exponent}
might not exist,
e.g. the \(Z\)-channel discussed in Appendix \ref{sec:blahut}.
Had the sufficient condition for the optimality of the order 
\(\rns\) 
given in 
\eqref{eq:lem:spherepacking-optimality:hypothesis-rate} 
and
\eqref{eq:lem:spherepacking-optimality:hypothesis-exponent}
been also necessary, Blahut's proof in \cite{blahut74}
would have been correct; this, 
however, is not the case in general as we demonstrate
in Appendix \ref{sec:blahut}.
It is worth mentioning that for the channels satisfying
the necessary conditions given 
\eqref{eq:lem:spherepacking-optimality:hypothesis-rate} 
and
\eqref{eq:lem:spherepacking-optimality:hypothesis-exponent},
one can derive the SPB using the approach presented in  \cite{blahut74}.

\begin{proof}[Proof of Lemma \ref{lem:spherepacking-optimality}]
	Note that  as a result of \eqref{eq:augustinslaw} we have
	\begin{align}
	\notag
	\CRD{1}{\Vmn{\mP}}{\qmn{\rns,\!\Wm\!,\cset}}{\mP}
	&=\RMI{1}{\mP}{\Vmn{\mP}}
	+\RD{1}{\sum\nolimits_{\dinp}\mP(\dinp)\Vmn{\mP}(\dinp)}{\qmn{\rns,\!\Wm\!,\cset}}.
	\end{align}
	Thus \(\RMI{1}{\mP}{\Vmn{\mP}}\!\leq\!\rate\) 
	for all \(\mP\!\in\!\cset\) because the \renyi divergence is non-negative.
	Then \(\spe{\rate,\!\Wm\!,\mP}\!\leq\!\tfrac{1-\rns}{\rns}(\CRC{\rns}{\Wm}{\cset}-\rate)\)
	for all \(\mP\in\cset\) by Lemma \ref{lem:haroutunianform}. Then 
	\(\spe{\rate,\!\Wm\!,\cset}\leq\tfrac{1-\rns}{\rns}(\CRC{\rns}{\Wm}{\cset}-\rate)\)
	by \eqref{eq:lem:spherepacking:compositionconstrained}.
	On the other hand \(\spe{\rate,\!\Wm\!,\cset}\geq\tfrac{1-\rns}{\rns}(\CRC{\rns}{\Wm}{\cset}-\rate)\)
	by definition. 
	Thus \(\spe{\rate,\!\Wm\!,\cset}=\tfrac{1-\rns}{\rns}(\CRC{\rns}{\Wm}{\cset}-\rate)\).
\end{proof}

\subsection{Augustin's Variant of Gallager's Bound}\label{sec:innerbound}
The SPE is an upper bound on the exponential decay rate of the optimal error probability 
with block length, i.e. on the reliability function, for memoryless channels
satisfying rather mild hypotheses, with or without the list decoding, as a result of the SPBs
given Theorem \ref{thm:exponent-convex} and \ref{thm:exponent-cost-ologn} of \S \ref{sec:outerbound}.
For the list decoding, the SPE is also a lower bound 
on the exponential decay rate of the optimal error probability with block length
\cite[ex 5.20]{gallager}, \cite[ex 10.28]{csiszarkorner},\cite{elias57}.
The latter observation can be confirmed using standard results such as 
\cite{gallager65,poltyrev82}, albeit with minor modifications, as well.
In the following, we confirm this observation using a variant of Gallager's 
bound in terms of the Augustin information.
Recall that Gallager's bound is derived, customarily, for
the \renyi information, rather than the Augustin information.
The fixed point property described in \eqref{eq:augustinfixedpoint}
plays a critical role in the proof. 
We name this variant of Gallager's bound after 
Augustin because \cite[Lemma 36.1]{augustin78} of Augustin is 
the first achievability result making use of the fixed point property 
described in \eqref{eq:augustinfixedpoint},
to the best of out knowledge. 
\begin{lemma}\label{lem:AVGIB}
For any 
\(M,L\in \integers{+}\) s.t. \(L<M\),
\(\Wm\!:\inpS\to\pmea{\outA}\), \(\mP\in\pdis{\inpS}\),
\(\cinpS\subset \inpS\),
and \(\rno\in [\tfrac{1}{1+L},1)\)
there exists an \((M,L)\) channel code with an encoding function of the form
\(\enc:\mesS\to\cinpS\) satisfying
\begin{align}
\notag
\ln \Pe
&\leq \tfrac{\rno-1}{\rno} \left[\tau-\ln \tfrac{(M-1)e}{L}\right]-\tfrac{\ln \mP(\cinpS)}{\rno}
\end{align}
where \(\tau=\inf\nolimits_{\dinp \in \cinpS}\RD{\rno}{\Wm(\dinp)}{\qmn{\rno,\mP}}\).
\end{lemma}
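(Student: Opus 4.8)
The plan is to run a random coding argument with the codewords drawn i.i.d. according to the conditional distribution $\mP(\cdot\mid\cinpS)$, i.e. $\mP$ restricted to $\cinpS$ and renormalized, and then to bound the expected average list-decoding error probability using a Gallager-style union/Chernoff manipulation, but with the crucial twist that the ``output reference measure'' is taken to be the Augustin mean $\qmn{\rno,\mP}$ rather than the output distribution induced by the input distribution. First I would fix the threshold decoder: given $\dout$, the decoder forms the list of the $L$ messages $\dmes$ whose ``score'' $\der{\Wm(\enc(\dmes))}{\qmn{\rno,\mP}}(\dout)$ is largest (ties broken arbitrarily); a decoding error for message $\dmes$ occurs only if at least $L$ other codewords beat it. I would then bound the conditional error probability for a fixed transmitted codeword by the standard Gallager bound: for any $\rno\in[\tfrac{1}{1+L},1)$, the probability that $L$ or more of the $M-1$ competing codewords outscore the true one is at most $\bigl(\tfrac{M-1}{L}\bigr)^{\rno}$ times the $\rno$-th moment of a single-competitor ratio, yielding the familiar form with the factor $(M-1)/L$ and an $\rno\le 1$ regime (the constraint $\rno\ge\tfrac{1}{1+L}$ is exactly what makes the list-size-$L$ Markov/Gallager step valid).

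Next I would take the expectation over the random code. After the standard steps, the expected error probability is bounded by a quantity of the form $\bigl(\tfrac{(M-1)e}{L}\bigr)^{\rno-1}$ (absorbing constants) times $\exp\bigl(\tfrac{\rno-1}{\rno}\cdot(\text{something})\bigr)$, where the ``something'' is governed by $\int \bigl(\int \mP(\dif\dinp\mid\cinpS)\,(\der{\Wm(\dinp)}{\qmn{\rno,\mP}})^{\rno}\bigr)^{1/\rno}\qmn{\rno,\mP}(\dif\dout)$, or more precisely the per-letter exponent $\inf_{\dinp\in\cinpS}\RD{\rno}{\Wm(\dinp)}{\qmn{\rno,\mP}}$. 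This is where the fixed point property \eqref{eq:augustinfixedpoint}, $\Aop{\rno}{\mP}{\qmn{\rno,\mP}}=\qmn{\rno,\mP}$, does the heavy lifting: it guarantees that when we use $\qmn{\rno,\mP}$ as the reference measure and $\mP$ (the full distribution, not the restricted one) as the input distribution, the tilted channel $\Wma{\rno}{\qmn{\rno,\mP}}$ averaged against $\mP$ reproduces $\qmn{\rno,\mP}$ exactly; this collapses the double integral and lets us replace the awkward Gallager-type quantity by the clean expression $e^{(1-\rno)\RD{\rno}{\Wm(\dinp)}{\qmn{\rno,\mP}}}$ pointwise, and then lower-bound the exponent uniformly over $\cinpS$ by $\tau=\inf_{\dinp\in\cinpS}\RD{\rno}{\Wm(\dinp)}{\qmn{\rno,\mP}}$. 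The passage from the restricted input distribution (used for random coding over $\cinpS$) to the full $\mP$ (used in the fixed point identity) is precisely what produces the $-\tfrac{\ln\mP(\cinpS)}{\rno}$ correction term: each normalization by $\mP(\cinpS)$ in the restricted measure costs a factor $\mP(\cinpS)^{-1}$ which, after the $1/\rno$ exponentiation inside the Gallager bound, contributes $-\tfrac{1}{\rno}\ln\mP(\cinpS)$ to $\ln\Pe$.

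Finally I would extract, by the standard expurgation-free averaging argument, a single deterministic code achieving $\Pe$ no larger than its expectation, note that its encoding function maps into $\cinpS$ by construction, and assemble the bound $\ln\Pe\le\tfrac{\rno-1}{\rno}\bigl[\tau-\ln\tfrac{(M-1)e}{L}\bigr]-\tfrac{\ln\mP(\cinpS)}{\rno}$. The main obstacle I anticipate is handling the list-decoding combinatorics carefully enough that the factor is genuinely $(M-1)e/L$ rather than a looser bound: one needs the inequality $\binom{M-1}{L}\le\bigl(\tfrac{(M-1)e}{L}\bigr)^{L}$ together with the correct exponent bookkeeping so that after raising to the power $\rno$ and dividing by $L^{\rno}$ (or equivalently using the list-decoding version of Gallager's $\rho$-trick with $\rho=\tfrac{1-\rno}{\rno}\cdot L$ implicitly constrained to $(0,1]$) the exponent on $\tfrac{(M-1)e}{L}$ comes out to exactly $\tfrac{\rno-1}{\rno}$; a secondary technical point is justifying the measure-theoretic manipulations (Fubini, the definition of the tilted channel via Radon–Nikodym derivatives against a dominating measure $\rfm$) in the generality of abstract output spaces, which is routine given Definitions \ref{def:tiltedchannel}–\ref{def:Aoperator} but must be stated. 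I do not expect the convexity or finiteness hypotheses to matter here since $\qmn{\rno,\mP}$ and its fixed point identity are available unconditionally for every $\rno\in\reals{+}$ by \eqref{eq:augustinslaw}–\eqref{eq:augustinfixedpoint}.
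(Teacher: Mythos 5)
Your proposal follows essentially the same route as the paper: random coding via $\pmn{\cinpS}$ (i.e. $\mP$ conditioned on $\cinpS$), a list decoder scoring with $\der{\Wm(\cdot)}{\qmn{\rno,\mP}}$, a Gallager-style threshold/moment bound (where $\rno\ge\tfrac{1}{1+L}$ is exactly what makes the list-$L$ step work), the Stirling bound $\tfrac{1}{L}\ln\binom{M-1}{L}\le\ln\tfrac{M-1}{L}+1$ producing the factor $\tfrac{(M-1)e}{L}$, and — the crux — the fixed-point identity $\Aop{\rno}{\mP}{\qmn{\rno,\mP}}=\qmn{\rno,\mP}$ used after bounding $e^{(\rno-1)\RD{\rno}{\Wm(\dsta)}{\qmn{\rno,\mP}}}\le e^{(\rno-1)\tau}$ for $\dsta\in\cinpS$ and pulling out the $\mP(\cinpS)^{-1}$ renormalization, which yields the $-\tfrac{\ln\mP(\cinpS)}{\rno}$ correction. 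The paper actually proves Lemma~\ref{lem:AVGIB} as the $\lgm=0$ special case of Lemma~\ref{lem:LAVGIB}, but the underlying argument is the one you describe.
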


\end{proof}

\begin{proof}[Proof of Lemma \ref{lem:spb-cost-conjugate}]
The proof is identical to that of Lemma \ref{lem:spb-cost-ologn} and \eqref{eq:lem:spb-cost-ologn-alt}
except for the definition of \(\qmn{\rno,\tin}\) and the bound on \(\EXS{\vma{\rno}{\dmes}}{\abs{\cla{\rno,\tin}{\dmes}}^{\knd}}^{\sfrac{1}{\knd}}\).
In particular, we set \(\qmn{\rno,\tin}\!\in\!\pmea{\outA_{\tin}}\) to be
\(\tfrac{\blx}{\blx+1}\qma{\rno,\tin}{\epsilon} + \tfrac{1}{\blx+1}\qma{\sfrac{1}{2},\Wmn{\tin}}{\lgm_{\sfrac{1}{2}}}\)
for all \(\tin\leq \blx\) and bound 
\(\EXS{\vma{\rno}{\dmes}}{\abs{\cla{\rno,\tin}{\dmes}}^{\knd}}^{\sfrac{1}{\knd}}\)
as follows
	\begin{align}
	\notag
	\EXS{\vma{\rno}{\dmes}}{\abs{\cla{\rno,\tin}{\dmes}}^{\knd}}^{\sfrac{1}{\knd}}
	&\leq 3^{\sfrac{1}{\knd}}\tfrac{
\left[(1-\rno)\RD{\rno}{\Wmn{\tin}(\enc_{\tin}(\dmes))}{\qmn{\rno,\tin}}\right]\vee \knd}{\rno (1-\rno)}
	&&&
	&\mbox{by \cite[Lemma \ref*{B-lem:MomentBound}]{nakiboglu19B},}
	\\
	\notag
	&\leq 3^{\sfrac{1}{\knd}}\tfrac{
\left[(1-\rno)\RD{\rno}{\Wmn{\tin}(\enc_{\tin}(\dmes))}{\qma{\sfrac{1}{2},\Wmn{\tin}}{\lgm_{\sfrac{1}{2}}}}
+(1-\rno)\ln(1+\blx)\right]
\vee \knd}{\rno (1-\rno)}
	&&&
	&\mbox{by \cite[Lemma \ref*{C-lem:divergence-RM}]{nakiboglu19C}
	because \(\tfrac{\qma{\sfrac{1}{2},\Wmn{\tin}}{\lgm_{\sfrac{1}{2}}}}{\blx+1}\leq \qmn{\rno,\tin}\),}
	\\
	\notag
	&\leq 3^{\sfrac{1}{\knd}}\tfrac{
\left[\RD{\sfrac{1}{2}}{\Wmn{\tin}(\enc_{\tin}(\dmes))}{\qma{\sfrac{1}{2},\Wmn{\tin}}{\lgm_{\sfrac{1}{2}}}}
+(1-\rno)\ln(1+\blx)\right]	
\vee \knd}{\rno (1-\rno)}
	&&&
	&\mbox{by \cite[Thm. 3 and Proposition 2]{ervenH14},}
	\\
	\notag
	&\leq 3^{\sfrac{1}{\knd}} \tfrac{
\left[\RCL{\sfrac{1}{2}}{\Wmn{\tin}}{\lgm_{\sfrac{1}{2}}}+\lgm_{\sfrac{1}{2}}\cdot\costf_{\tin}(\enc_{\tin}(\dmes))
+(1-\rno)\ln(1+\blx)\right]
\vee \knd}{\rno (1-\rno)}
	&&&
	&\mbox{by \cite[Thm. \ref*{C-thm:Lminimax}]{nakiboglu19C}}.
	\end{align}
	Then using the definition of \(\gamma\) given in \eqref{eq:lem:spb-cost-conjugate:gamma}, we get
	\begin{align}
\notag
	\left[\sum\nolimits_{\tin=1}^{\blx} \EXS{\vma{\rno}{\dmes}}{\abs{\cla{\rno,\tin}{\dmes}}^{\knd}}\right]^{\sfrac{1}{\knd}}
	&\leq \tfrac{\gamma}{3\rno(1-\rno)}
&
&\forall \rno\in[\rnf,1).
	\end{align}
The rest of the proof is identical to that of
Lemma \ref{lem:spb-cost-ologn} and \eqref{eq:lem:spb-cost-ologn-alt}.

If \(\Wmn{\tin}\!=\!\Wm\!\) and \(\costf_{\tin}\!=\!\costf\) for all \(\tin\!\in\![1,\blx]\),
then \(\CRC{\rno}{\Wmn{[1,\blx]}}{\blx\costc}\!=\!\blx\CRC{\rno}{\Wm}{\costc}\) 
for all \(\costc\in\fcc{\costf}\)
and \(\RCL{\rno}{\Wmn{[1,\blx]}}{\lgm}\!=\!\blx\RCL{\rno}{\Wm}{\lgm}\) 
for all \(\lgm\in\reals{\geq0}^{\ell}\)
by  \cite[Lemmas \ref*{C-lem:CCcapacityproduct} and \ref*{C-lem:Lcapacityproduct}]{nakiboglu19C}.
Then \eqref{eq:lem:spb-cost-conjugate-stationary:gamma}
follows from  
\(\CRC{\rno}{\Wm}{\costc}=\RCL{\rno}{\Wm}{\lgm_{\rno,\Wm,\costc}}+\lgm_{\rno,\Wm,\costc}\cdot \costc\),
established in \cite[Lemma \ref*{C-lem:Lcapacity}-(\ref*{C-Lcapacity:interior})]{nakiboglu19C}.
\end{proof}
\section{Examples}\label{sec:examples}
As a result of \S\ref{sec:innerbound} and \S \ref{sec:outerbound},
we can conclude that the SPE governs the exponential decay  rate of the error 
probability of channel codes with list decoding on memoryless channels 
under rather mild hypotheses. 
The calculation of the SPE itself, however, is a
separate issue that is essential from a practical standpoint. 
In this section, we derive the SPE for 
various Gaussian and Poisson channels and demonstrate that
it is possible to obtain parametric forms for these channels
similar to the one given in Lemma \ref{lem:spherepacking-cc}
for \(\spe{\rate,\Wm,\mP}\). 
We believe these parametric forms are more straightforward
and intuitive than commonly used equivalent parametric forms 
that were previously derived. 
\subsection{Gaussian Channels}\label{sec:examples-gaussian}
We denote the probability density function of the zero mean 
Gaussian random variable with variance \(\sigma^{2}\) by 
\(\GausDen{\sigma^{2}}\), i.e.	
\begin{align}
\notag
\GausDen{\sigma^{2}}(\dsta)
&\DEF\tfrac{1}{\sqrt{2 \pi} \sigma} e^{-\frac{\dsta^{2}}{2\sigma^{2}}}
&
&\forall \dsta\in\reals{}.
\end{align}
With a slight abuse of notation, we denote the corresponding probability measure on 
\(\rborel{\reals{}}\) by \(\GausDen{\sigma^{2}}\), as well.	
\begin{example}[The Scalar Gaussian Channel]\label{eg:SGauss}
	Let \(\Wm\) be the scalar Gaussian channel with noise variance \(\sigma^2\) 
	and the associated cost function \(\costf\) be the quadratic one:
	\begin{align}
	\notag
	\Wm(\oev|\dinp)
	&=\int_{\oev} \GausDen{\sigma^{2}}(\dout-\dinp) \dif{\dout}
	&
	&\forall \oev\in\rborel{\reals{}},
	\\
	\notag
	\costf(\dinp)
	&=\dinp^{2}
	&
	&\forall \dinp\in\reals{}.
	\end{align}
The cost constrained Augustin capacity and center of this channel are 
	determined in \cite[Example \ref{C-eg:SGauss}]{nakiboglu19C}:
	\begin{align}
	\label{eq:eg:SGauss-capacity}
	\CRC{\rno}{\Wm}{\costc}
	&=\begin{cases}
	\tfrac{\rno \costc}{2(\rno \theta_{\rno,\sigma,\costc}+(1-\rno)\sigma^{2})}
	+\tfrac{1}{\rno-1}\ln\tfrac{(\theta_{\rno,\sigma,\costc})^{\sfrac{\rno}{2}}\sigma^{(1-\rno)}}{\sqrt{\rno\theta_{\rno,\sigma,\costc}+(1-\rno) \sigma^{2}}}
	&\rno\in\reals{+}\setminus\{1\}
	\\
	\tfrac{1}{2}\ln \left(1+\tfrac{\costc}{\sigma^{2}}\right)
	&\rno=1
	\end{cases},
	\\
	\label{eq:eg:SGauss-center}
	\qmn{\rno,\Wm,\costc}
	&=\GausDen{\theta_{\rno,\sigma,\costc}},
	\\
	\label{eq:eg:SGauss-center-variance}
	\theta_{\rno,\sigma,\costc}
	&\DEF\sigma^{2}+\tfrac{\costc}{2}-\tfrac{\sigma^{2}}{2\rno}+\sqrt{(\tfrac{\costc}{2}-\tfrac{\sigma^{2}}{2\rno})^{2}+ \costc\sigma^2}.
	\end{align}	
	It is worth mentioning that \(\CRC{\rno}{\Wm}{\costc}=\RMI{\rno}{\GausDen{\costc}}{\Wm}\)
	and \(\qmn{\rno,\Wm\!,\costc}=\qmn{\rno,\GausDen{\costc}}\)
	for all positive orders \(\rno\),
	i.e. zero mean Gaussian distribution 
	with variance \(\costc\) is the optimal input distribution for all orders.
	Thus \(\spe{\rate,\Wm,\costc}=\spe{\rate,\Wm,\GausDen{\costc}}\).

The SPE of the scalar Gaussian channel can be characterized using 
Lemma \ref{lem:spherepacking-optimality}.
To see how, first note that 
for any \(\theta>0\) and the corresponding the Gaussian probability measure \(\GausDen{\theta}\), 
the order \(\rno\) tilted channel 
\(\Wma{\rno}{\GausDen{\theta}}\), defined in \eqref{eq:def:tiltedchannel}, is given by  
\begin{align}
\label{eq:eg:SGauss-parametric-tiltedchannel}
\Wma{\rno}{\GausDen{\theta}}(\oev|\dinp)
&=\int_{\oev} \GausDen{\frac{\sigma^{2} \theta}{\rno\theta+(1-\rno)\sigma^{2}}}
\left(\dout-\tfrac{\rno \theta}{\rno\theta+(1-\rno)\sigma^{2}}\dinp\right)\dif{\dout}
&
&\forall \oev\in\rborel{\reals{}}.
\end{align}
Since 
\(\theta_{\rno,\sigma,\costc}\) is a root of the equality 
\(\theta^{2}-\theta[\costc+(2-\frac{1}{\rno}) \sigma^{2}]+(1-\frac{1}{\rno})\sigma^{4}=0\)
for \(\theta\)
by \cite[(\ref*{C-eq:eg:SGauss-Augustinoperator}) and 
(\ref*{C-eq:eg:SGauss-necessarycondition})]{nakiboglu19C},
one can confirm 
using \cite[(\ref*{C-eq:eg:SGauss-divergence})]{nakiboglu19C}
by substitution that 
\begin{align}
\label{eq:eg:SGauss-divergence-rate}
\CRD{1}{\Wma{\rno}{\GausDen{\theta_{\rno,\sigma,\costc}}}}{\GausDen{\theta_{\rno,\sigma,\costc}}}{\mP}
&=\tfrac{\rno^{2}\theta_{\rno,\sigma,\costc}}{2(\rno \theta_{\rno,\sigma,\costc}+(1-\rno)\sigma^{2})^{2}}(\EXS{\mP}{\costf}-\costc)
+\tfrac{1}{2}\ln \tfrac{\rno \theta_{\rno,\sigma,\costc}+(1-\rno) \sigma^{2}}{\sigma^{2}},
\\
\label{eq:eg:SGauss-divergence-spe}
\CRD{1}{\Wma{\rno}{\GausDen{\theta_{\rno,\sigma,\costc}}}}{\Wm}{\mP}
&=\tfrac{(1-\rno)^{2}\sigma^{2}}{2(\rno \theta_{\rno,\sigma,\costc}+(1-\rno)\sigma^{2})^{2}}(\EXS{\mP}{\costf}-\costc)
+\tfrac{(1-\rno)\costc}{2(\rno\theta_{\rno,\sigma,\costc}+(1-\rno)\sigma^{2})}
+\tfrac{1}{2}\!\ln \tfrac{\rno \theta_{\rno,\sigma,\costc}+(1-\rno) \sigma^{2}}{\theta_{\rno,\sigma,\costc}}.
\end{align}
Thus for each \(\rns\!\in\!(0,1)\), 
\(\Vmn{\mP}\!=\!\Wma{\rns}{\GausDen{\theta_{\rns,\sigma,\costc}}}\)
satisfies the hypotheses of Lemma \ref{lem:spherepacking-optimality} 
given in \eqref{eq:lem:spherepacking-optimality:hypothesis-rate} and 
\eqref{eq:lem:spherepacking-optimality:hypothesis-exponent}
for \(\rate\!=\!\tfrac{1}{2}\ln \tfrac{\rns \theta_{\rno,\sigma,\costc}+(1-\rns) \sigma^{2}}{\sigma^{2}}\)
as a result of \eqref{eq:eg:SGauss-capacity} 
and the constraint \(\EXS{\mP}{\costf}\leq\costc\).
Furthermore, 
\(\fX(\rno)\!\DEF\!\tfrac{1}{2}\ln \tfrac{\rno \theta_{\rno,\sigma,\costc}+(1-\rno) \sigma^{2}}{\sigma^{2}}\)
is a continuous and increasing function of \(\rno\)
satisfying  \(\lim\nolimits_{\rno\downarrow0}\fX(\rno)=0\)
and \(\fX(1)=\CRC{1}{\Wm}{\costc}\).
Thus the SPE can be written in the following parametric form
in terms of \(\rno\in[0,1]\) for all rates in \([0,\CRC{1}{\Wm}{\costc}]\): 
\begin{align}
\label{eq:eg:SGauss-parametric-rate}
\rate
&=\tfrac{1}{2}\ln \tfrac{\rno \theta_{\rno,\sigma,\costc}+(1-\rno) \sigma^{2}}{\sigma^{2}},
\\
\label{eq:eg:SGauss-parametric-spe}
\spe{\rate,\Wm,\costc}
&=\tfrac{(1-\rno)\costc}{2(\rno\theta_{\rno,\sigma,\costc}+(1-\rno)\sigma^{2})}
+\tfrac{1}{2}\ln \tfrac{\rno \theta_{\rno,\sigma,\costc}+(1-\rno) \sigma^{2}}{\theta_{\rno,\sigma,\costc}}.
\end{align}
Using \eqref{eq:eg:SGauss-divergence-rate} and \eqref{eq:eg:SGauss-divergence-spe}
we can express both the rate and the SPE in terms of
the tilted channel \(\Wma{\rno}{\GausDen{\theta_{\rno,\sigma,\costc}}}\). 
On the other hand, 
\(\GausDen{\theta_{\rno,\sigma,\costc}}\) is the output distribution
for the input distribution \(\GausDen{\costc}\) on 
\(\Wma{\rno}{\GausDen{\theta_{\rno,\sigma,\costc}}}\) because \(\GausDen{\theta_{\rno,\sigma,\costc}}\) 
is the Augustin mean \(\qmn{\rno,\GausDen{\costc}}\) 
for the input distribution \(\GausDen{\costc}\) satisfying 
the fixed point property
\(\Aop{\rno}{\GausDen{\costc}}{\qmn{\rno,\GausDen{\costc}}}=\qmn{\rno,\GausDen{\costc}}\), as well. Thus
we can rewrite \eqref{eq:eg:SGauss-parametric-rate} and \eqref{eq:eg:SGauss-parametric-spe},
using \eqref{eq:eg:SGauss-divergence-rate} and \eqref{eq:eg:SGauss-divergence-spe}, as follows:
\begin{align}
\label{eq:eg:SGauss-parametric-rate-alt}
\rate
&=\RMI{1}{\GausDen{\costc}}{\Wma{\rno}{\GausDen{\theta_{\rno,\sigma,\costc}}}},
\\
\label{eq:eg:SGauss-parametric-spe-alt}
\spe{\rate,\Wm,\costc}
&=\CRD{1}{\Wma{\rno}{\GausDen{\theta_{\rno,\sigma,\costc}}}}{\Wm}{\GausDen{\costc}}.
\end{align}
To obtain an expression for the  SPE  that does not depend on \(\theta_{\rno,\sigma,\costc}\) explicitly,
we first note that \eqref{eq:eg:SGauss-center-variance} and
\eqref{eq:eg:SGauss-parametric-rate} imply 
\begin{align}
\label{eq:eg:SGauss-rno}
\rno=\tfrac{e^{2\rate}-1}{2}\left(\sqrt{1+\tfrac{4\sigma^2}{\costc}\tfrac{e^{2\rate}}{e^{2\rate}-1}}-1\right).
\end{align}
On the other hand, \(\GausDen{\theta_{\rno,\sigma,\costc}}\) is the output distribution
for the input distribution \(\GausDen{\costc}\) on channel 
\(\Wma{\rno}{\GausDen{\theta_{\rno,\sigma,\costc}}}\).
Thus \eqref{eq:eg:SGauss-parametric-tiltedchannel} implies
\begin{align}
\notag
\tfrac{\sigma^{2} \theta_{\rno,\sigma,\costc}}{\rno\theta_{\rno,\sigma,\costc}+(1-\rno)\sigma^{2}}
+\left(\tfrac{\rno \theta_{\rno,\sigma,\costc}}{\rno\theta_{\rno,\sigma,\costc}+(1-\rno)\sigma^{2}}\right)^{2}
\costc
&=\theta_{\rno,\sigma,\costc}.
\end{align}
Thus 
\begin{align}
\notag
\tfrac{\sigma^{2}}{\theta_{\rno,\sigma,\costc}}
&=1-\tfrac{\rno\costc}{\rno\theta_{\rno,\sigma,\costc}+(1-\rno)\sigma^{2}}
\\
\label{eq:eg:SGauss-parametric-inversetheta}
&=1-\tfrac{\costc \rno}{\sigma^{2}e^{2\rate}},
\end{align}
where \eqref{eq:eg:SGauss-parametric-inversetheta} follows from \eqref{eq:eg:SGauss-parametric-rate}.

Using first
\eqref{eq:eg:SGauss-parametric-rate}\!
and
\!\eqref{eq:eg:SGauss-parametric-inversetheta}
in \eqref{eq:eg:SGauss-parametric-spe},
and then invoking \eqref{eq:eg:SGauss-rno}
we get the following expression
for SPE:
\begin{align}
\notag
\spe{\rate,\Wm,\costc}
&=\tfrac{1}{2}\tfrac{(1-\rno)\costc}{\sigma^{2}e^{2\rate}}
+\rate
+\tfrac{1}{2}\ln \left(1-\tfrac{\rno\costc }{\sigma^{2}e^{2\rate}}\right)
\\
\label{eq:eg:SGauss-spe}
&=\tfrac{\costc}{4\sigma^{2}}\left[1+
\tfrac{1}{e^{2\rate}}-(1-\tfrac{1}{e^{2\rate}})\sqrt{1+\tfrac{4\sigma^2}{\costc}\tfrac{e^{2\rate}}{e^{2\rate}-1}}\right]
+\tfrac{1}{2}\ln\left[e^{2\rate}-\tfrac{\costc}{\sigma^{2}}\left(\tfrac{e^{2\rate}-1}{2}\right)
\left(\sqrt{1+\tfrac{4\sigma^2}{\costc}\tfrac{e^{2\rate}}{e^{2\rate}-1}}-1\right)\right].
\end{align}
The expression given in  \eqref{eq:eg:SGauss-spe} for the SPE is equivalent to \cite[(7.4.33)]{gallager}.

The parametric characterization given in 
\eqref{eq:eg:SGauss-parametric-rate} and \eqref{eq:eg:SGauss-parametric-spe}
can be obtained by a more direct approach using the differentiability of 
\(\theta_{\rno,\sigma,\costc}\) and \(\CRC{\rno}{\Wm}{\costc}\)
in \(\rno\), as well.
In particular, since \(\theta_{\rno,\sigma,\costc}\) is a root of the equality 
	\(\theta^{2}-\theta[\costc+(2-\frac{1}{\rno}) \sigma^{2}]+(1-\frac{1}{\rno})\sigma^{4}=0\)
	for \(\theta\), we get the following closed form expression for the derivative of the 
	Augustin capacity with respect to the order:
\begin{align}
\notag
\pder{}{\rno}\CRC{\rno}{\Wm}{\costc}
&=\tfrac{1}{2(1-\rno)^{2}}
\left[\tfrac{(1-\rno)\costc}{\rno \theta_{\rno,\sigma,\costc}+(1-\rno)\sigma^{2}}
+\ln \tfrac{\rno \theta_{\rno,\sigma,\costc}+(1-\rno)\sigma^{2}}{\theta_{\rno,\sigma,\costc}}
\right]
+\tfrac{\theta_{\rno,\sigma,\costc}^{2}-\theta_{\rno,\sigma,\costc}[\costc+(2-\frac{1}{\rno}) \sigma^{2}]+(1-\frac{1}{\rno})\sigma^{4}}{2(\rno \theta_{\rno,\sigma,\costc}+(1-\rno)\sigma^{2})^{2}}\left[\tfrac{\rno}{1-\rno}+\tfrac{\rno^{2}}{\theta_{\rno,\sigma,\costc}}
\pder{}{\rno}\theta_{\rno,\sigma,\costc}\right]
\\
\label{eq:eg:SGauss-capacity-derivative}
&=\tfrac{1}{2(1-\rno)^{2}}
\left[\tfrac{(1-\rno)\costc}{\rno \theta_{\rno,\sigma,\costc}+(1-\rno)\sigma^{2}}
+\ln \tfrac{\rno \theta_{\rno,\sigma,\costc}+(1-\rno)\sigma^{2}}{\theta_{\rno,\sigma,\costc}}
\right].
\end{align} 
Using first \eqref{eq:eg:SGauss-capacity-derivative}, and then \eqref{eq:eg:SGauss-capacity} we get
\begin{align}
\notag
\der{}{\rno}\tfrac{1-\rno}{\rno}(\CRC{\rno}{\!\Wm\!}{\costc}-\rate)
&=\tfrac{1-\rno}{\rno}
\left(\tfrac{1}{2(1-\rno)^{2}}
\left[\tfrac{(1-\rno)\costc}{\rno \theta_{\rno,\sigma,\costc}+(1-\rno)\sigma^{2}}
+\ln \tfrac{\rno \theta_{\rno,\sigma,\costc}+(1-\rno)\sigma^{2}}{\theta_{\rno,\sigma,\costc}}
\right]
\right)
-\tfrac{1}{\rno^{2}}(\CRC{\rno}{\!\Wm\!}{\costc}-\rate)
\\
\notag
&=\tfrac{1}{\rno^{2}}\left(\rate-\tfrac{1}{2}\ln \tfrac{\rno\theta_{\rno,\sigma,\costc}+(1-\rno)\sigma^{2}}{\sigma^{2}}
\right).
\end{align} 
Then the derivative test implies the parametric form given in \eqref{eq:eg:SGauss-parametric-rate} and \eqref{eq:eg:SGauss-parametric-spe}
as a result of \eqref{eq:eg:SGauss-capacity}.
\end{example}
\begin{comment}
------------------------------------------------------------------------
\begin{remark}
	\begin{align}
	\notag
	\CRC{\rno}{\Wm}{\costc}
	&=\tfrac{\rno \costc}{2(\rno \theta+(1-\rno)\sigma^{2})}
	+\tfrac{1}{2(1-\rno)}\ln (\rno\theta+(1-\rno) \sigma^{2})
	+\tfrac{\rno}{2(\rno-1)}\ln \theta
	-\ln \sigma
	\\
	\notag
	\pder{}{\rno}\CRC{\rno}{\Wm}{\costc}
	&=\tfrac{\costc}{2(\rno \theta+(1-\rno)\sigma^{2})}
	-\tfrac{\rno\costc(\rno\theta'+\theta-\sigma^{2})}{2(\rno \theta+(1-\rno)\sigma^{2})^{2}}
	+\tfrac{1}{2(\rno-1)^{2}}\ln \tfrac{\rno \theta+(1-\rno)\sigma^{2}}{\theta}
	+\tfrac{1}{2(1-\rno)}\tfrac{\rno\theta'+\theta-\sigma^{2}}{\rno\theta+(1-\rno) \sigma^{2}}
	+\tfrac{\rno}{2(\rno-1)}\tfrac{\theta'}{\theta}
	\\
	\notag
	&=\tfrac{\costc(\sigma^{2}-\rno^{2}\theta')}{2(\rno \theta+(1-\rno)\sigma^{2})^{2}}
	+\tfrac{1}{2(\rno-1)^{2}}\ln \tfrac{\rno \theta+(1-\rno)\sigma^{2}}{\theta}
	+\tfrac{\rno\theta'(\theta-\sigma^{2})}{2(\rno\theta+(1-\rno) \sigma^{2})\theta}
	+\tfrac{(\theta-\sigma^{2})}{2(1-\rno)(\rno\theta+(1-\rno) \sigma^{2})}
	\\
	\notag
	&=\tfrac{-\costc\theta+(\theta^{2}-\theta(2-\frac{1}{\rno}) \sigma^{2}+(1-\frac{1}{\rno})\sigma^{4})}{2(\rno \theta+(1-\rno)\sigma^{2})^{2}\theta}\rno^{2}\theta'
	+\tfrac{(1-\rno)\costc\sigma^{2}+\rno(\theta^{2}-\theta(2-\frac{1}{\rno}) \sigma^{2}+(1-\frac{1}{\rno})\sigma^{4})}{2(1-\rno)(\rno \theta+(1-\rno)\sigma^{2})^{2}}
	+\tfrac{1}{2(\rno-1)^{2}}\ln \tfrac{\rno \theta+(1-\rno)\sigma^{2}}{\theta}
	\\
	\notag
	&=\tfrac{\costc}{2(1-\rno)(\rno \theta+(1-\rno)\sigma^{2})}
	+\tfrac{1}{2(\rno-1)^{2}}\ln \tfrac{\rno \theta+(1-\rno)\sigma^{2}}{\theta}
	\end{align}	
\end{remark}
\begin{remark}
	\begin{align}
	\notag
	\rno
	&=\tfrac{e^{2\rate}-1}{2}\left(\sqrt{1+\tfrac{4\sigma^2}{\costc}\tfrac{e^{2\rate}}{e^{2\rate}-1}}-1\right)
	&
	1-\rno
	&=\tfrac{1}{2}\left[e^{2\rate}+1-(e^{2\rate}-1)\sqrt{1+ \tfrac{4\sigma^2}{\costc}\tfrac{e^{2\rate}}{e^{2\rate}-1}}\right]
	\\
	\notag
	\rno-1+e^{2\rate}
	&=\tfrac{e^{2\rate}-1}{2}\left[1+
	\sqrt{1+\tfrac{4\sigma^2}{\costc}\tfrac{e^{2\rate}}{e^{2\rate}-1}}
	\right]
	&
	\tfrac{e^{2\rate}-1}{\rno+e^{2\rate}-1}
	&=\tfrac{\costc}{2\sigma^{2}}\tfrac{e^{2\rate}-1}{e^{2\rate}}\left[\sqrt{1+\tfrac{4\sigma^2}{\costc}\tfrac{e^{2\rate}}{e^{2\rate}-1}}-1\right]
	\\
	\notag
	\CRD{1}{\Wma{\rno}{\qmn{\rno,\GausDen{\costc}}}}{\Wm\!}{\GausDen{\costc}}
	&=\tfrac{1}{2}
	\left[\ln \tfrac{e^{2\rate} \rno}{\rno+e^{2\rate}-1}+\tfrac{\costc}{\sigma^{2}}\tfrac{(1-\rno)}{e^{2\rate}}\right]
	\\
	\notag
	&=\tfrac{1}{2}
	\left[\ln\left(e^{2\rate}-\tfrac{e^{2\rate}(e^{2\rate}-1)}{\rno+e^{2\rate}-1}\right)+\tfrac{\costc}{\sigma^{2}}\tfrac{(1-\rno)}{e^{2\rate}} \right]
	\\
	\notag
	&=\tfrac{1}{2}\ln\left[e^{2\rate}-
	\tfrac{\costc}{2\sigma^{2}}(e^{2\rate}-1)
	\left(\sqrt{1+\tfrac{4\sigma^2}{\costc}\tfrac{e^{2\rate}}{e^{2\rate}-1}}-1\right)\right]
	\\
	\notag
	&\qquad+\tfrac{\costc}{4\sigma^{2}}\left[1+\tfrac{1}{e^{2\rate}}-(1-\tfrac{1}{e^{2\rate}})\sqrt{1+\tfrac{4\sigma^2}{\costc}\tfrac{e^{2\rate}}{e^{2\rate}-1}}
	\right]
	\end{align}
\end{remark}
------------------------------------------------------------------------

\begin{example}[The Parallel Gaussian Channels]\label{eg:PGauss}
Let \(\Wmn{[1,\blx]}\) be the product of scalar Gaussian channels with noise variance  \(\sigma_{\ind}^{2}\) 
for \(\ind\in\{1,\ldots,\blx\}\) and the cost function \(\costf_{[1,\blx]}\) be the additive quadratic one, i.e. 
	\begin{align}
	\notag
	\Wmn{[1,\blx]}(\oev|\dinp_{1}^{\blx})
	&=\int_{\oev}\left[
	\prod\nolimits_{\ind=1}^{\blx}\GausDen{\sigma_{\ind}^{2}}(\dout_{\ind}-\dinp_{\ind})
	\right]\dif{\dout_{1}^{\blx}}
	&
	&\forall \oev\in\rborel{\reals{}^{\blx}},
	\\
	\notag
	\costf_{[1,\blx]}(\dinp_{1}^{\blx})
	&=\sum\nolimits_{\ind=1}^{\blx} \dinp_{\ind}^{2}
	&
	&\forall \dinp_{1}^{\blx}\in\reals{}^{\blx}.
	\end{align}
	The constrained Augustin capacity and center of \(\Wmn{[1,\blx]}\) were determined in 
	\cite[Example \ref{C-eg:PGauss}]{nakiboglu19C}:
	\begin{align}
\label{eq:eg:PGauss-capacity}
\CRC{\rno}{\!\Wmn{[1,\blx]}\!}{\costc}
&=\sum\nolimits_{\ind=1}^{\blx} \CRC{\rno}{\Wmn{\ind}}{\costc_{\rno,\ind}},
\\
\label{eq:eg:PGauss-capacitycenter}
\qmn{\rno,\!\Wmn{[1,\blx]}\!,\costc}
&=\bigotimes\nolimits_{\ind=1}^{\blx}\GausDen{\theta_{\rno,\sigma_{\ind},\costc_{\rno,\ind}}},
\\
\label{eq:eg:PGauss-input-variance}
\costc_{\rno,\ind}
&=
\tfrac{\abp{\rno-2\sigma_{\ind}^{2}\lgm_{\rno}}}{2\lgm_{\rno}(\rno +2(\rno-1)\sigma_{\ind}^2\lgm_{\rno})},
\end{align}
where \(\theta_{\rno,\sigma,\costc}\) is defined in \eqref{eq:eg:SGauss-center-variance}
and \(\lgm_{\rno}\) is determined by \(\sum_{\ind} \costc_{\rno,\ind}=\costc\)
uniquely.\footnote{The constraint \(\sum_{\ind} \costc_{\rno,\ind}=\costc\) determines \(\lgm_{\rno}\)
uniquely because the expression on the right hand side of \eqref{eq:eg:PGauss-input-variance} 
is a nonincreasing function of  \(\lgm_{\rno}\).} 
Furthermore,
\(\theta_{\rno,\sigma_{\ind},\costc_{\rno,\ind}}\) can be expressed 
in terms of \(\sigma_{\ind}\) and \(\lgm_{\rno}\) without explicitly referring to 
\(\costc_{\rno,\ind}\) as follows:
\begin{align}
\label{eq:eg:PGauss-center-variance}
\theta_{\rno,\sigma_{\ind},\costc_{\rno,\ind}}
&=\sigma_{\ind}^{2}+\abp{\tfrac{1}{2\lgm_{\rno}}-\tfrac{\sigma_{\ind}^{2}}{\rno}}.
\end{align}
On the other hand
\(\der{}{\costc_{\ind}}\CRC{\rno}{\!\Wmn{\ind}}{\costc_{\ind}}\vert_{\costc_{\ind}=\costc_{\rno,\ind}}\!=\!\lgm_{\rno}\)
for all \(\ind\)'s with a positive \(\costc_{\rno,\ind}\) and
\(\der{}{\costc_{\ind}}\CRC{\rno}{\!\Wmn{\ind}}{\costc_{\ind}}\vert_{\costc_{\ind}=\costc_{\rno,\ind}}\!\leq\!\lgm_{\rno}\)
for all \(\ind\)'s.
Then using the chain rule of derivatives 
together with 
\eqref{eq:eg:SGauss-capacity-derivative}
and \eqref{eq:eg:SGauss-capacity} we get
\begin{align}
\notag
\der{}{\rno}\tfrac{1-\rno}{\rno}(\CRC{\rno}{\!\Wmn{[1,\blx]}\!}{\costc}-\rate)
&=\tfrac{1-\rno}{\rno}
\sum\nolimits_{\ind=1}^{\blx}
\left[\left.\pder{}{\rnf}\CRC{\rnf}{\Wmn{\ind}}{\costc_{\rno,\ind}}\right\vert_{\rnf=\rno}
+\left.\pder{}{\costc}\CRC{\rno}{\Wmn{\ind}}{\costc}\right\vert_{\costc=\costc_{\rno,\ind}}
\der{}{\rno}\costc_{\rno,\ind}
\right]
-\tfrac{1}{\rno^{2}}(\CRC{\rno}{\!\Wmn{[1,\blx]}\!}{\costc}-\rate),
\\
\notag
&=\tfrac{1}{\rno^{2}}\left[\rate-
\tfrac{1}{2}\sum\nolimits_{\ind=1}^{\blx}
\ln \tfrac{\rno\theta_{\rno,\sigma_{\ind},\costc_{\rno,\ind}}+(1-\rno)\sigma_{\ind}^{2}}{\sigma_{\ind}^{2}}
\right]+\tfrac{1-\rno}{\rno} \lgm_{\rno}\sum\nolimits_{\ind=1}^{\blx}\der{}{\rno}\costc_{\rno,\ind},
\\
\notag
&=\tfrac{1}{\rno^{2}}\left[\rate-
\tfrac{1}{2}\sum\nolimits_{\ind=1}^{\blx}
\ln \tfrac{\rno\theta_{\rno,\sigma_{\ind},\costc_{\rno,\ind}}+(1-\rno)\sigma_{\ind}^{2}}{\sigma_{\ind}^{2}}
\right].
\end{align} 
Thus we obtain the following parametric form for
\(\spe{\rate,\Wmn{[1,\blx]},\costc}\) in term of \(\rno\) for all \(\rate\in
[0,\CRC{\rno}{\!\Wmn{[1,\blx]}\!}{\costc}]\).
\begin{align}
\label{eq:eg:PGauss-parametric-rate}
\rate
&=\tfrac{1}{2} 
\sum\nolimits_{\ind=1}^{\blx}
\ln \tfrac{\rno\theta_{\rno,\sigma_{\ind},\costc_{\rno,\ind}}+(1-\rno)\sigma_{\ind}^{2}}{\sigma_{\ind}^{2}},
\\
\label{eq:eg:PGauss-parametric-spe}
\spe{\rate,\Wmn{[1,\blx]},\costc}
&=\tfrac{1}{2}
\sum\nolimits_{\ind=1}^{\blx}
\left[\tfrac{(1-\rno)\costc_{\rno,\ind}}{\rno\theta_{\rno,\sigma_{\ind},\costc_{\rno,\ind}}+(1-\rno)\sigma_{\ind}^{2}}
+\ln \tfrac{\rno\theta_{\rno,\sigma_{\ind},\costc_{\rno,\ind}}+(1-\rno)\sigma_{\ind}^{2}}{\theta_{\rno,\sigma_{\ind},\costc_{\rno,\ind}}}
\right].
\end{align}
Thus \(\spe{\rate,\Wmn{[1,\blx]},\costc}\!=\!\CRD{1}{\Vmn{\rno}}{\Wmn{[1,\blx]}}{\varPhi_{\rno}}\)
for \(\rate\!=\!\RMI{1}{\varPhi_{\rno}}{\Vmn{\rno}}\)
where the input distribution
\(\varPhi_{\rno}\) is a zero mean the Gaussian distribution with the diagonal covariance
matrix whose eigenvalues  are \(\costc_{\rno,1},\ldots,\costc_{\rno,\blx}\)
and \(\Vmn{\rno}\) is the order \(\rno\) tilted channel between
\(\Wmn{[1,\blx]}\) and \(\qmn{\rno,\varPhi_{\rno}}\).

If \(\sigma_{\ind}^{2}\!\geq\!\tfrac{\rno}{2\lgm_{\rno}}\) for an \(\ind\),  
then \(\costc_{\rno,\ind}\!=\!0\) and the corresponding 
terms in the sums given in \eqref{eq:eg:PGauss-parametric-rate}
and \eqref{eq:eg:PGauss-parametric-spe} are zero. 
In \cite{ebert66}, Ebert provided an alternative parametric form for 
the SPE relying on this observation.
To obtain Ebert's characterization first note that 
\begin{align}
\notag
\rno\theta_{\rno,\sigma_{\ind},\costc_{\rno,\ind}}+(1-\rno)\sigma_{\ind}^{2}
&=\tfrac{\rno}{2\lgm_{\rno}}\vee \sigma_{\ind}^{2}
&
&\forall \ind\in\{1,\ldots,\blx\}
\end{align}
by \eqref{eq:eg:PGauss-center-variance}. Thus
\eqref{eq:eg:PGauss-center-variance},
\eqref{eq:eg:PGauss-parametric-rate}, \eqref{eq:eg:PGauss-parametric-spe} and the constraint
\(\sum_{\ind} \costc_{\rno,\ind}=\costc\) imply the following parametric form in terms of 
\(\ebertsthreshold=\tfrac{\rno}{2\lgm_{\rno}}\), which is  
equivalent to Ebert's characterization 
\cite[(7.5.28), (7.5.32), (7.5.34)]{gallager},
\cite[p. 294]{ebert65}, \cite[(20)]{ebert66}
\begin{align}
\label{eq:eg:PGauss-ebert-rate}
\rate
&=\tfrac{1}{2}\sum\nolimits_{\ind:\sigma_{\ind}^2\leq\ebertsthreshold}\ln \tfrac{\ebertsthreshold}{\sigma_{\ind}^{2}},
\\
\label{eq:eg:PGauss-ebert-rno}
\costc
&=\tfrac{1}{\rno}\sum\nolimits_{\ind=1}^{\blx}\tfrac{\abp{\ebertsthreshold-\sigma_{\ind}^{2}}}{1 +(\rno-1)\frac{\sigma_{\ind}^2}{\ebertsthreshold}},
\\
\label{eq:eg:PGauss-ebert-spe}
\spe{\rate,\Wmn{[1,\blx]},\costc}
&=\tfrac{(1-\rno)\costc}{2\ebertsthreshold}
+\tfrac{1}{2}
\sum\nolimits_{\ind:\sigma_{\ind}^2\leq \ebertsthreshold}
\ln \tfrac{\rno\ebertsthreshold}{\ebertsthreshold-(1-\rno)\sigma_{\ind}^{2}}.
\end{align}
Note that one does not need to determine \(\lambda_{\rno}\)
or invoke \(\ebertsthreshold=\tfrac{\rno}{2\lgm_{\rno}}\)
in the above expressions.
One can first determine \(\ebertsthreshold\) using \eqref{eq:eg:PGauss-ebert-rate}, 
and then determine \(\rno\) using \eqref{eq:eg:PGauss-ebert-rno}, in order to determine
\(\spe{\rate,\Wmn{[1,\blx]},\costc}\), as noted by Ebert in \cite{ebert65} and \cite{ebert66}.
\end{example}

\subsection{Poisson Channels}\label{sec:examples-poisson}
Let \(\tlx \in \reals{+}\) and \(\mA,\mB\in\reals{\geq0}\) such that \(\mA\leq\mB\).
Then for the Poisson channel \(\Pcha{}:\fXS\to\pmea{\outA}\), 
the input set \(\fXS\) is the set of all measurable functions of the form 
\(\fX:(0,\tlx]\to[\mA,\mB]\), 
the output set \(\outS\) is the set of all nondecreasing, right-continuous, 
integer valued  functions on \((0,\tlx]\),
the \(\sigma\)-algebra of the output events \(\outA\) is the Borel \(\sigma\)-algebra 
for the topology generated by the Skorokhod metric on \(\outS\),
and \(\Pcha{}(\fX)\) is the Poisson point process with deterministic intensity function \(\fX\) 
for all \(\fX\in\fXS\).
With a slight abuse of notation we denote the Poisson process with constant intensity 
\(\gamma\) by \(\Pcha{}(\gamma)\).
The cost function \(\costf:\fXS\to\reals{\geq0}\) is
\begin{align}
\notag	
\costf(\fX)
&\DEF\tfrac{1}{\tlx}\int_{0}^{\tlx}  \fX(\tin) \dif{\tin}.
\end{align}
In \cite[{\S\ref*{A-sec:examples-poisson}}]{nakiboglu19A},
the (unconstrained) \renyi capacities and centers of various Poisson channels 
are determined. These 
expressions 
are equal to the corresponding Augustin capacities and centers
because \(\RCL{\rno}{\Wm\!}{0}=\GCL{\rno}{\Wm\!}{0}\) for any \(\Wm\) 
and \(\qma{\rno,\Wm}{0}=\qga{\rno,\Wm}{0}\) for any \(\Wm\) with finite 
\(\RCL{\rno}{\Wm\!}{0}\) by 
\cite[Thms. \ref*{C-thm:Lminimax} and \ref*{C-thm:Gminimax}]{nakiboglu19C}.
\begin{example}[The Poisson Channels With Given Average Intensity]\label{eg:poissonchannel-mean}
\cite[Example \ref*{A-eg:poissonchannel-mean}]{nakiboglu19A}
	considers \(\Pcha{\costc}:\fXS^{\costc}\to\pmea{\outA}\) 
	where \(\fXS^{\costc}=\{\fX\in\fXS:\costf(\fX)=\costc\}\)
	and \(\Pcha{\costc}(\fX)=\Pcha{}(\fX)\) for all \(\fX\in\fXS^{\costc}\).
	The \renyi  capacity and center of \(\Pcha{\costc}\)
	---hence the Augustin capacity and center of \(\Pcha{\costc}\)--- 
	are given in \cite[{(\ref*{A-eq:poissonchannel-mean-capacity}) and 
		(\ref*{A-eq:poissonchannel-mean-center})}]{nakiboglu19A} to be
	\begin{align}		
	\label{eq:poissonchannel-mean-capacity}
	\RC{\rno}{\Pcha{\costc}}
	&=\begin{cases}
	\tfrac{\rno}{\rno-1}(\pint_{\rno,\costc}-\costc)\tlx
	&\rno\neq 1
	\\
	\left(
	\tfrac{\costc-\mA}{\mB-\mA} \mB\ln \tfrac{\mB}{\costc}
	+\tfrac{\mB-\costc}{\mB-\mA} \mA\ln \tfrac{\mA}{\costc}
	\right)\tlx
	&\rno=1
	\end{cases},
	\\
	\label{eq:poissonchannel-mean-center}
	\qmn{\rno,\Pcha{\costc}}
	&=\Pcha{}(\pint_{\rno,\costc}),
	\\
	\label{eq:poissonchannel-mean-center-intensity}
	\pint_{\rno,\costc}
	&\DEF \left(\tfrac{\costc-\mA}{\mB-\mA} \mB^{\rno}+\tfrac{\mB-\costc}{\mB-\mA} \mA^{\rno}\right)^{\sfrac{1}{\rno}}.
	\end{align}
	Since the expression for \(\RC{\rno}{\Pcha{\costc}}\) is differentiable in \(\rno\);
	we obtain the following parametric expression for the SPE 
	using the derivative test
	\begin{align}
	\label{eq:eg:poissonchannel-mean-parametric-rate}
	\rate
	&=\left(\tfrac{\costc-\mA}{\mB-\mA}
	\mB^{\rno}\pint_{\rno,\costc}^{1-\rno}\ln\tfrac{\mB^{\rno}\pint_{\rno,\costc}^{1-\rno}}{\pint_{\rno,\costc}}+
	\tfrac{\mB-\costc}{\mB-\mA}
	\mA^{\rno}\pint_{\rno,\costc}^{1-\rno}\ln\tfrac{\mA^{\rno}\pint_{\rno,\costc}^{1-\rno}}{\pint_{\rno,\costc}}
	\right)\tlx,
	\\
	\label{eq:eg:poissonchannel-mean-parametric-spe}
	\spe{\rate,\Pcha{\costc}}
	&=\left(\costc-\pint_{\rno,\costc}
	+\tfrac{\costc-\mA}{\mB-\mA}\mB^{\rno}\pint_{\rno,\costc}^{1-\rno}
	\ln\tfrac{\mB^{\rno}\pint_{\rno,\costc}^{1-\rno}}{\mB}
	+\tfrac{\mB-\costc}{\mB-\mA}\mA^{\rno}\pint_{\rno,\costc}^{1-\rno}
	\ln\tfrac{\mA^{\rno}\pint_{\rno,\costc}^{1-\rno}}{\mA}
	\right)\tlx.
	\end{align}
	There is an alternative parametric characterization, which is considerably easier to remember
	in terms of the tilted channels. In order to derive that expression, first note that 
	\cite[(\ref*{A-eq:poissonRND}), (\ref*{A-eq:poissondivergence})]{nakiboglu19A}
	and the definition of the tilted channel given in \eqref{eq:def:tiltedchannel} 
	imply that
	\begin{align}
	\label{eq:eg:poissonchannel:tiltingformula}
	{\Pcha{}}_{\rno}^{\Pcha{}(\gX)}(\fX)
	&=\Pcha{}(\fX^{\rno}\gX^{(1-\rno)}). 
	\end{align}
	Then using \cite[(\ref*{A-eq:poissondivergence})]{nakiboglu19A}, 
	\eqref{eq:poissonchannel-mean-center}, \eqref{eq:poissonchannel-mean-center-intensity}, 
	\eqref{eq:eg:poissonchannel-mean-parametric-rate}, \eqref{eq:eg:poissonchannel-mean-parametric-spe},
	we get
	\begin{align}
	\label{eq:eg:poissonchannel-mean-parametric-alternative}
	\rate
	&=\RD{1}{{\Pcha{}}_{\rno}^{\qmn{\rno,\Pcha{\costc}}}(\fX_{opt})}{\qmn{\rno,\Pcha{\costc}}},
	\\
	\label{eq:eg:poissonchannel-mean-parametric-spe-alternative}
	\spe{\rate,\Pcha{\costc}}
	&=\RD{1}{{\Pcha{}}_{\rno}^{\qmn{\rno,\Pcha{\costc}}}(\fX_{opt})}{\Pcha{}(\fX_{opt})},
	\end{align}
	where \(\fX_{opt}\) is any \(\{\mA,\mB\}\) valued function in \(\fXS^{\costc}\), i.e.
	any function \(\fX_{opt}:(0,\tlx]\to\{\mA,\mB\}\) satisfying
	\(\costf(\fX_{opt})=\costc\).
\end{example}

\begin{example}[The Poisson Channels With Constrained Average Intensity]\label{eg:poissonchannel-constrained}
Let us first confirm that the constrained Augustin capacity \(\CRC{\rno}{\Pcha{}}{\costc}\)
and the constrained Augustin center \(\qmn{\rno,\Pcha{},\costc}\) are given 
by\footnote{Note that 
\(\CRC{\rno}{\Pcha{}}{\costc}\!=\!\RC{\rno}{\Pcha{\leq\costc}}\) and 
\(\qmn{\rno,\Pcha{},\costc}\!=\!\qmn{\rno,\Pcha{\leq\costc}}\)
for the Poisson Channel \(\Pcha{\leq\costc}\!:\!\{\fX\!\in\!\fXS\!:\!\costf(\fX)\!\leq\!\costc\}\!\to\!\pmea{\outA}\)
considered in \cite[Example \ref*{A-eg:poissonchannel-constrained}]{nakiboglu19A}.}
\begin{align}
\label{eq:poissonchannel-constrained-capacity}
\CRC{\rno}{\Pcha{}}{\costc}
&=\RC{\rno}{\Pcha{\costc \wedge \costc_{\rno}}},
\\
\label{eq:poissonchannel-constrained-center}
\qmn{\rno,\Pcha{},\costc}
&=\qmn{\rno,\Pcha{\costc \wedge \costc_{\rno}}},
\end{align}
where \(\RC{\rno}{\Pcha{\costc}}\) and \(\qmn{\rno,\Pcha{\costc}}\) are determined by
\eqref{eq:poissonchannel-mean-capacity},
\eqref{eq:poissonchannel-mean-center}, 
and
\eqref{eq:poissonchannel-mean-center-intensity},
\(\costc_{\rno}\) is a decreasing function of the order 
\(\rno\) defined as
\begin{align}
\label{eq:poissonchannel-bounded-optimal}
\costc_{\rno}
&\DEF\begin{cases}
\rno^{\frac{\rno}{1-\rno}}(\tfrac{\mB-\mA}{\mB^{\rno}-\mA^{\rno}})^{\frac{1}{1-\rno}}
+\tfrac{\mA\mB^{\rno}-\mB \mA^{\rno}}{\mB^{\rno}-\mA^{\rno}}
&
\rno\neq 1
\\
e^{-1}\mB^{\frac{\mB}{\mB-\mA}}\mA^{-\frac{\mA}{\mB-\mA}}
&
\rno=1
\end{cases}.
\end{align}
To establish \eqref{eq:poissonchannel-constrained-capacity} and \eqref{eq:poissonchannel-constrained-center},
first note that \(\RC{\rno}{\Pcha{\costc \wedge \costc_{\rno}}}\leq\CRC{\rno}{\Pcha{}}{\costc}\)
because \(\RC{\rno}{\Pcha{\costc \wedge \costc_{\rno}}}=\CRC{\rno}{\Pcha{}}{\pdis{\fXS^{\costc \wedge \costc_{\rno}}}}\)
and \(\pdis{\fXS^{\costc \wedge \costc_{\rno}}}\subset\cset(\costc)\),
where \(\cset(\costc)=\{\mP\in\pdis{\fXS}:\EXS{\mP}{\costf(\fX)}\leq\costc\}\).	
On the other hand, invoking first
\cite[(\ref*{A-eq:poissonchannel-constrained-A})]{nakiboglu19A},
and then 
\cite[(\ref*{A-eq:poissonchannel-mean-capacity-alt})]{nakiboglu19A},
we get
\begin{align}
\notag
\RD{\rno}{\Pcha{}(\fX)}{\Pcha{}(\pint_{\rno,\costc \wedge \costc_{\rno}})}
&\leq \tfrac{\mB-\costf(\fX)}{\mB-\mA}\RD{\rno}{\Pcha{}(\mA)}{\Pcha{}(\pint_{\rno,\costc \wedge \costc_{\rno}})}
+\tfrac{\costf(\fX)-\mA}{\mB-\mA} \RD{\rno}{\Pcha{}(\mB)}{\Pcha{}(\pint_{\rno,\costc \wedge \costc_{\rno}})}
\\
\notag
&=\RC{\rno}{\Pcha{\costc \wedge \costc_{\rno}}}+
\tfrac{\costc \wedge \costc_{\rno}-\costf(\fX)}{\mB-\mA}
\left[\RD{\rno}{\Pcha{}(\mA)}{\Pcha{}(\pint_{\rno,\costc \wedge \costc_{\rno}})}
-\RD{\rno}{\Pcha{}(\mB)}{\Pcha{}(\pint_{\rno,\costc \wedge \costc_{\rno}})}\right]
&
&\forall \fX\in\fXS.
\end{align}
Since \(\RD{\rno}{\Pcha{}(\mA)}{\Pcha{}(\pint_{\rno,\costc_{\rno}})}\!=\!
\RD{\rno}{\Pcha{}(\mB)}{\Pcha{}(\pint_{\rno,\costc_{\rno}})}\)
by \cite[(\ref*{A-eq:poissonchannel-constrained-B}) and (\ref*{A-eq:poissonchannel-constrained-C})]{nakiboglu19A},
we get
\begin{align}
\notag
\RD{\rno}{\Pcha{}(\fX)}{\Pcha{}(\pint_{\rno,\costc \wedge \costc_{\rno}})}
&\leq \RC{\rno}{\Pcha{\costc \wedge \costc_{\rno}}}+
\IND{\costc<\costc_{\rno}}\tfrac{\costc-\costf(\fX)}{\mB-\mA}
\left[\RD{\rno}{\Pcha{}(\mA)}{\Pcha{}(\pint_{\rno,\costc})}
-\RD{\rno}{\Pcha{}(\mB)}{\Pcha{}(\pint_{\rno,\costc})}\right]
&
&\forall \fX\in\fXS.
\end{align}
On the other hand \(\RD{\rno}{\Pcha{}(\mA)}{\Pcha{}(\pint_{\rno,\costc})}
\!\leq\!\RD{\rno}{\Pcha{}(\mB)}{\Pcha{}(\pint_{\rno,\costc})}\)
for all \(\costc\leq \costc_{\rno}\) by \cite[(\ref*{A-eq:poissonchannel-constrained-B})]{nakiboglu19A};
consequently, we have
\begin{align}
\notag
\CRD{\rno}{\Pcha{}(\fX)}{\Pcha{}(\pint_{\rno,\costc \wedge \costc_{\rno}})}{\mP}
&\leq \RC{\rno}{\Pcha{\costc \wedge \costc_{\rno}}}
&
&\forall \mP:\EXS{\mP}{\costf}\leq\costc.
\end{align}
Thus \eqref{eq:poissonchannel-constrained-capacity}
and
\eqref{eq:poissonchannel-constrained-center}
follow from 
\cite[Lemma \ref*{C-lem:capacityunion}]{nakiboglu19C}
because 
\(\Pcha{}(\pint_{\rno,\costc \wedge \costc_{\rno}})=\qmn{\rno,\Pcha{\costc \wedge \costc_{\rno}}}\)
by \eqref{eq:poissonchannel-mean-center}.

Since the expression for \(\CRC{\rno}{\Pcha{}}{\costc}\) given in 
\eqref{eq:poissonchannel-constrained-capacity} is differentiable in \(\rno\),
we can use the derivative test to determine optimal order for the SPE
defined in \eqref{eq:def:spherepackingexponent}. 
We obtain the following parametric form as a result
\begin{align}
\label{eq:eg:poissonchannel-bounded-parametric-rate}
\rate
&=\left(\tfrac{\costc\wedge\costc_{\rno}-\mA}{\mB-\mA}
\mB^{\rno}\pint_{\rno,\costc\wedge\costc_{\rno}}^{1-\rno}\ln\tfrac{\mB^{\rno}\pint_{\rno,\costc\wedge\costc_{\rno}}^{1-\rno}}{\pint_{\rno,\costc\wedge\costc_{\rno}}}+
\tfrac{\mB-\costc\wedge\costc_{\rno}}{\mB-\mA}
\mA^{\rno}\pint_{\rno,\costc\wedge\costc_{\rno}}^{1-\rno}\ln\tfrac{\mA^{\rno}\pint_{\rno,\costc\wedge\costc_{\rno}}^{1-\rno}}{\pint_{\rno,\costc\wedge\costc_{\rno}}}
\right)\tlx,
\\
\label{eq:eg:poissonchannel-bounded-parametric-spe}
\spe{\rate,\Pcha{},\costc}
&=\left(\costc\wedge\costc_{\rno}-\pint_{\rno,\costc\wedge\costc_{\rno}}
+\tfrac{\costc\wedge\costc_{\rno}-\mA}{\mB-\mA}\mB^{\rno}\pint_{\rno,\costc\wedge\costc_{\rno}}^{1-\rno}
\ln\tfrac{\mB^{\rno}\pint_{\rno,\costc\wedge\costc_{\rno}}^{1-\rno}}{\mB}
+\tfrac{\mB-\costc\wedge\costc_{\rno}}{\mB-\mA}\mA^{\rno}\pint_{\rno,\costc\wedge\costc_{\rno}}^{1-\rno}
\ln\tfrac{\mA^{\rno}\pint_{\rno,\costc\wedge\costc_{\rno}}^{1-\rno}}{\mA}
\right)\tlx.
\end{align}
Using  \cite[(\ref*{A-eq:poissondivergence})]{nakiboglu19A}, 
\eqref{eq:poissonchannel-mean-center},
\eqref{eq:eg:poissonchannel:tiltingformula},
\eqref{eq:poissonchannel-constrained-center},
\eqref{eq:eg:poissonchannel-bounded-parametric-rate},
\eqref{eq:eg:poissonchannel-bounded-parametric-spe},
we get the following parametric characterization
\begin{align}
\label{eq:eg:poissonchannel-bounded-parametric-alternative}
\rate
&=\RD{1}{{\Pcha{}}_{\rno}^{\qmn{\rno,\Pcha{},\costc}}(\fX_{\rno})}{\qmn{\rno,\Pcha{},\costc}},
\\
\label{eq:eg:poissonchannel-bounded-parametric-spe-alternative}
\spe{\rate,\Pcha{},\costc}
&=\RD{1}{{\Pcha{}}_{\rno}^{\qmn{\rno,\Pcha{},\costc}}(\fX_{\rno})}{\Pcha{}(\fX_{\rno})},
\end{align}
where \(\fX_{\rno}\) is any \(\{\mA,\mB\}\) valued function in \(\fXS^{\costc\wedge \costc_{\rno}}\), i.e.
any function \(\fX_{\rno}:(0,\tlx]\to\{\mA,\mB\}\) satisfying
\(\costf(\fX_{\rno})=\costc\wedge\costc_{\rno}\).
\end{example}
	\begin{comment}
\begin{remark}
	\begin{align}
	\notag
	\pder{}{\rno}
	\tfrac{1-\rno}{\rno}\left(\RC{\rno}{\Pcha{\costc}}-\rate\right)
	&=\tfrac{\rate}{\rno^{2}}-\pint_{\rno,\costc}\left[
	\tfrac{- \ln \pint_{\rno,\costc}}{\rno}+\tfrac{1}{\rno \pint_{\rno,\costc}^{\rno}}
	\left(\tfrac{\costc-\mA}{\mB-\mA}\mB^{\rno}\ln\mB+\tfrac{\mB-\costc}{\mB-\mA}\mA^{\rno}\ln\mA\right)\right]\tlx
	\\
	\notag
	&=\tfrac{\tlx}{\rno^{2}}\left[\tfrac{\rate}{\tlx}
	-\pint_{\rno,\costc}\left(\tfrac{\costc-\mA}{\mB-\mA}\tfrac{\mB^{\rno}}{\pint_{\rno,\costc}^{\rno}}
	\ln\tfrac{\mB^{\rno}}{\pint_{\rno,\costc}^{\rno}}+
	\tfrac{\mB-\costc}{\mB-\mA}\tfrac{\mA^{\rno}}{\pint_{\rno,\costc}^{\rno}}
	\ln\tfrac{\mA^{\rno}}{\pint_{\rno,\costc}^{\rno}}\right)\right]
	\\
	\notag
	\spe{\rate,\Pcha{\costc}}
	&=\tlx\left[
	\costc-\pint_{\rno,\costc}\left(1+\tfrac{\costc-\mA}{\mB-\mA}\tfrac{\mB^{\rno}}{\pint_{\rno,\costc}^{\rno}}
	\ln\tfrac{\mB^{1-\rno}}{\pint_{\rno,\costc}^{1-\rno}}+
	\tfrac{\mB-\costc}{\mB-\mA}\tfrac{\mA^{\rno}}{\pint_{\rno,\costc}^{\rno}}
	\ln\tfrac{\mA^{1-\rno}}{\pint_{\rno,\costc}^{1-\rno}}\right)\right]
	\\
	\notag
	\pder{}{\rno}
	\tfrac{1-\rno}{\rno}\left(\RC{\rno}{\Pcha{}}-\rate\right)
	&=\tfrac{\rate}{\rno^{2}}+\pder{}{\rno}\left[(1-\rno)
	e^{\frac{\rno \ln \rno}{1-\rno}+\frac{\ln(\mB-\mA)}{1-\rno}-\frac{\ln(\mB^{\rno}-\mA^{\rno})}{1-\rno}}
	+\tfrac{\mA\mB^{\rno}-\mB \mA^{\rno}}{\mB^{\rno}-\mA^{\rno}}\right]
	\\
	\notag
	&=\tfrac{\rate}{\rno^{2}}
	+\tfrac{\mA\mB^{\rno}\ln \mB-\mB \mA^{\rno}\ln\mA}{\mB^{\rno}-\mA^{\rno}}
	-\tfrac{\mA\mB^{\rno}-\mB \mA^{\rno}}{(\mB^{\rno}-\mA^{\rno})^2}(\mB^{\rno}\ln\mB-\mA^{\rno}\ln\mA)
	\\
	\notag
	&\qquad+
	e^{\frac{\rno \ln \rno}{1-\rno}+\frac{\ln(\mB-\mA)}{1-\rno}-\frac{\ln(\mB^{\rno}-\mA^{\rno})}{1-\rno}}
	\left[\ln \rno-\tfrac{\mB^{\rno}\ln\mB-\mA^{\rno}\ln\mA}{\mB^{\rno}-\mA^{\rno}}
	+\tfrac{\rno \ln \rno}{1-\rno}+\tfrac{\ln(\mB-\mA)}{1-\rno}-\tfrac{\ln(\mB^{\rno}-\mA^{\rno})}{1-\rno} \right]
	\\
	\notag
	&=\tfrac{\rate}{\rno^{2}}
	+\tfrac{(\mB-\mA) \mB^{\rno}\mA^{\rno}}{(\mB^{\rno}-\mA^{\rno})^2}\ln\tfrac{\mB}{\mA}
	+\tfrac{1}{\rno}(\tfrac{\rno(\mB-\mA)}{\mB^{\rno}-\mA^{\rno}})^{\frac{1}{1-\rno}}
	\left[-\tfrac{\mB^{\rno}\ln\mB-\mA^{\rno}\ln\mA}{\mB^{\rno}-\mA^{\rno}}
	+\tfrac{1}{1-\rno}\ln\tfrac{\rno(\mB-\mA)}{\mB^{\rno}-\mA^{\rno}}\right]
	\\
	\notag
	&=\tfrac{\tlx}{\rno^{2}}
	\left[\tfrac{\rate}{\tlx}-
	(\tfrac{\rno(\mB-\mA)}{\mB^{\rno}-\mA^{\rno}})^{\frac{1}{1-\rno}}
	\left[\tfrac{\mB^{\rno}\ln\mB^{\rno}-\mA^{\rno}\ln\mA^{\rno}}{\mB^{\rno}-\mA^{\rno}}
	-\tfrac{\rno}{1-\rno}\ln\tfrac{\rno(\mB-\mA)}{\mB^{\rno}-\mA^{\rno}}\right]
	+\rno^{2}\tfrac{(\mB-\mA) \mB^{\rno}\mA^{\rno}}{(\mB^{\rno}-\mA^{\rno})^2}\ln\tfrac{\mB}{\mA}\right]
	\\
	\notag
	&=\tfrac{\tlx}{\rno^{2}}
	\left[\tfrac{\rate}{\tlx}-
	\pint_{\rno}
	\left[\tfrac{\mB^{\rno}}{\mB^{\rno}-\mA^{\rno}}\ln \mB^{\rno}
	-\tfrac{\mA^{\rno}}{\mB^{\rno}-\mA^{\rno}}\ln\mA^{\rno}
	-\ln \pint_{\rno}^{\rno}
	\right]
	+\pint_{\rno}^{1-\rno}\tfrac{\mB^{\rno}\mA^{\rno}}{\mB^{\rno}-\mA^{\rno}}\ln\tfrac{\mB^{\rno}}{\mA^{\rno}}\right]
	\\
	\notag
	&=\tfrac{\tlx}{\rno^{2}}
	\left[\tfrac{\rate}{\tlx}-
	\pint_{\rno}
	\left[\tfrac{\mB^{\rno}}{\mB^{\rno}-\mA^{\rno}}(1-\tfrac{\mA^{\rno}}{\pint_{\rno}^{\rno}})\ln \mB^{\rno}
	-\tfrac{\mA^{\rno}}{\mB^{\rno}-\mA^{\rno}}(1-\tfrac{\mB^{\rno}}{\pint_{\rno}^{\rno}})\ln\mA^{\rno}
	-\ln \pint_{\rno}^{\rno}
	\right]
	\right]
	\\
	\notag
	&=\tfrac{\tlx}{\rno^{2}}
	\left[\tfrac{\rate}{\tlx}-
	\pint_{\rno}
	\left[\tfrac{\pint_{\rno}^{\rno}-\mA^{\rno}}{\mB^{\rno}-\mA^{\rno}}\tfrac{\mB^{\rno}}{\pint_{\rno}^{\rno}}
	\ln \tfrac{\mB^{\rno}}{\pint_{\rno}^{\rno}}
	+\tfrac{\mB^{\rno}-\pint_{\rno}^{\rno}}{\mB^{\rno}-\mA^{\rno}}\tfrac{\mA^{\rno}}{\pint_{\rno}^{\rno}}
	\ln \tfrac{\mA^{\rno}}{\pint_{\rno}^{\rno}}
	\right] \right]
	\\
	\notag
	\spe{\rate,\Pcha{}}
	&=\tlx\left[\tfrac{\mA\mB^{\rno}-\mB \mA^{\rno}}{\mB^{\rno}-\mA^{\rno}}
	+\tfrac{1-\rno}{\rno}\pint_{\rno}\left(1
	-\tfrac{\pint_{\rno}^{\rno}-\mA^{\rno}}{\mB^{\rno}-\mA^{\rno}}\tfrac{\mB^{\rno}}{\pint_{\rno}^{\rno}}
	\ln \tfrac{\mB^{\rno}}{\pint_{\rno}^{\rno}}
	-\tfrac{\mB^{\rno}-\pint_{\rno}^{\rno}}{\mB^{\rno}-\mA^{\rno}}\tfrac{\mA^{\rno}}{\pint_{\rno}^{\rno}}
	\ln \tfrac{\mA^{\rno}}{\pint_{\rno}^{\rno}}
	\right)\right]
	\\
	\notag
	&=\tlx\left[\costc_{\rno}-\tfrac{\pint_{\rno}}{\rno}
	+\tfrac{1-\rno}{\rno}\pint_{\rno}\left(1
	-\tfrac{\pint_{\rno}^{\rno}-\mA^{\rno}}{\mB^{\rno}-\mA^{\rno}}\tfrac{\mB^{\rno}}{\pint_{\rno}^{\rno}}
	\ln \tfrac{\mB^{\rno}}{\pint_{\rno}^{\rno}}
	-\tfrac{\mB^{\rno}-\pint_{\rno}^{\rno}}{\mB^{\rno}-\mA^{\rno}}\tfrac{\mA^{\rno}}{\pint_{\rno}^{\rno}}
	\ln \tfrac{\mA^{\rno}}{\pint_{\rno}^{\rno}}
	\right)\right]
	\\
	\notag
	&=\tlx\left[\costc_{\rno}-\pint_{\rno}\left(1
	+\tfrac{\pint_{\rno}^{\rno}-\mA^{\rno}}{\mB^{\rno}-\mA^{\rno}}\tfrac{\mB^{\rno}}{\pint_{\rno}^{\rno}}
	\ln \tfrac{\mB^{1-\rno}}{\pint_{\rno}^{1-\rno}}
	+\tfrac{\mB^{\rno}-\pint_{\rno}^{\rno}}{\mB^{\rno}-\mA^{\rno}}\tfrac{\mA^{\rno}}{\pint_{\rno}^{\rno}}
	\ln \tfrac{\mA^{1-\rno}}{\pint_{\rno}^{1-\rno}}
	\right)\right]
	\\
	\notag
	&=\tlx\left[\costc_{\rno}-\pint_{\rno}
	+\tfrac{\costc_{\rno}-\mA}{\mB-\mA}
	\mB^{\rno}\pint_{\rno}^{1-\rno}
	\ln \tfrac{\mB^{\rno}\pint_{\rno}^{1-\rno}}{\mB}	
	+\tfrac{\mB-\costc_{\rno}}{\mB-\mA}
	\mA^{\rno}\pint_{\rno}^{1-\rno}
	\ln \tfrac{\mA^{\rno}\pint_{\rno}^{1-\rno}}{\mA}
	\right]
	\end{align}
\end{remark}
\section{Discussion}\label{sec:conclusion}
We have applied Augustin's method to derive SPBs for two families of memoryless channels. 
For the stationary memoryless channels with convex composition constraints,
the novel observation behind Augustin's method is:
\begin{align}
\label{eq:Augustinsprinciple}
\lim\nolimits_{\rnf\to\rno} \sup\nolimits_{\mP\in\cset} \CRD{\rno}{\Wm}{\qmn{\rnf,\Wm,\cset}}{\mP}
&=\CRC{\rno}{\Wm}{\cset}
&
&\forall \rno\in(0,1). 
\end{align}
Note that the results established for the convex composition constrained stationary memoryless channels also hold for the cost constrained stationary memoryless channels because any cost constraint on a stationary memoryless channel can be expressed as a convex composition constraint, as well. 
For the non-stationary cost constrained memoryless channels we have employed
\eqref{eq:Augustinsprinciple} together with the convex conjugation techniques.
Theorem \ref{thm:exponent-cost-ologn}  improves a similar result by Augustin,
i.e. \cite[Thm. 36.6]{augustin78},
in terms of the approximation error terms. 
The prefactor of Theorem \ref{thm:exponent-cost-ologn} is of the form \(e^{-\bigo{\ln\blx}}\), 
rather than \(e^{-\bigo{\sqrt{\blx}}}\) similar to \cite[Thm. 36.6]{augustin78}. 
Also, unlike  \cite[Thm. 36.6]{augustin78}, Theorem \ref{thm:exponent-cost-ologn}
does not assume the cost functions to be bounded
and thus holds for the Gaussian models considered in 
\cite{shannon59,ebert65,ebert66,richters67}, as well.

For classical-quantum channels, the SPB was established in \cite{dalai13A}.
Following this breakthrough, there has been a reviewed interest in the 
SPB for classical-quantum channels \cite{dalai17,dalaiW17,chengH18,chengHT19}.
Augustin's method, however, has not been applied to any quantum 
information theoretic model.
Successful applications of Augustin's method will allow us to get rid of 
the stationarity, and finite input set hypotheses, at the very least.

The Augustin's variant of Gallager's bound discussed in \S \ref{sec:innerbound}
is not widely known. 
Our main aim in \S \ref{sec:innerbound} was to present this approach in its simplest form. 
Thus while bounding \(\EX{\Pe}\), we were content with passing from 
\eqref{eq:boundontheensembleaverage} to \eqref{eq:lem:LIB-General}. 
Using more careful analysis and bounding the deviation of the order \(\rno\) A-L information 
random variable, i.e. \(\ln\tfrac{\fX_{\dinp}}{\hX_{\dinp}}\) for \(\mQ=\qmn{\rno,\mP}\),
one can obtain sharper bounds similar to the ones in \cite{altugW14D,scarlettMF14,honda15}.
The random coding bound for the classical-quantum channels has been 
established in \cite{burnashevH98}. 
It is suggested in \cite[p. 5606]{dalaiW17} that the reliance of \cite{burnashevH98} 
on the codes generated with i.i.d. symbols might make it hard to modify the proof 
to the constant composition, i.e. composition constrained, case.
We think the Augustin's variant of Gallager's bound might be helpful in overcoming this issue.

As a side note, let us point out that \S\ref{sec:innerbound} and \S \ref{sec:outerbound}
imply that SPE is the reliability function for certain fading channels,
i.e. for certain channels with state, 
provided that the list decoding is allowed, even in the non-stationary case.
In particular, both the fast fading channels with 
no state information (i.e. with statistical state information)
and the fast fading channels with state information only at the receiver are
cost constrained memoryless channels. 
Thus for the channels considered in \cite{richters67} and 
the ones considered in \cite[\S 4]{telatar99} the reliability function under list decoding
is equal to the SPE.
This is the case for the models with per antenna power constraints 
considered in \cite{vu11,caoOSS16,loyka17}, as well, 
because the channels considered in \cite{vu11,caoOSS16,loyka17}
are cost constrained memoryless channels albeit
with multiple constraints.
The determination of the SPE for these channels, however,
is a separate issue, as we have noted in \S \ref{sec:examples}.

The optimal prefactor of the SPB was known for specific channels since 
the early days of the information theory, see for example 
\cite{elias55B,dobrushin62B,shannon59}.
In recent years, there has been a reviewed interest in establishing such sharp SPBs 
under various symmetry hypothesis
\cite{altugW19,altugW11,altugW14A,lancho19,lanchoODKV19A,lanchoODKV19B}.
The parametric characterization of the SPE given in 
\eqref{eq:lem:spherepacking-cc:rate} and \eqref{eq:lem:spherepacking-cc:exponent}
is used to establish such bounds for constant composition codes in \cite{nakiboglu19-ISIT}.
A refined SPB can be established in every single one of the cases considered 
in \cite{elias55B,dobrushin62B,shannon59,altugW19,altugW11,altugW14A,lancho19,lanchoODKV19A,lanchoODKV19B},
using analogous parametric characterizations as demonstrated by \cite{nakiboglu19F}.
This is one of the reasons for us to present the parametric characterizations given in
\eqref{eq:eg:SGauss-parametric-rate-alt},
\eqref{eq:eg:SGauss-parametric-spe-alt},
\eqref{eq:eg:poissonchannel-mean-parametric-alternative},
\eqref{eq:eg:poissonchannel-mean-parametric-spe-alternative},
\eqref{eq:eg:poissonchannel-bounded-parametric-alternative},
and
\eqref{eq:eg:poissonchannel-bounded-parametric-spe-alternative}.
It turns out that one can strengthen the strong converses in terms of their prefactors 
using analogous parametric characterizations 
under appropriate symmetry hypothesis, as well, see \cite{chengN20A}.

\begin{comment}
The similarities between the parametric forms given
\eqref{eq:eg:poissonchannel-bounded-parametric-alternative}
and	\eqref{eq:eg:poissonchannel-bounded-parametric-spe-alternative}
and the ones inf
\eqref{eq:eg:poissonchannel-mean-parametric-alternative}
and \eqref{eq:eg:poissonchannel-mean-parametric-spe-alternative}
\begin{quotation}
From \cite[5606]{dalaiW17}\\	
\emph{We observe that it is not easy to envision a
modification of the proof derived in \cite{burnashevH98} to obtain a matching
achievability for constant-composition codes, since the bound
derived in \cite{burnashevH98} uses in a substantial way the properties of
random codes generated with i.i.d. symbols.}
\end{quotation}

\appendix
\numberwithin{equation}{subsection}
\def\thesubsection{\Alph{subsection}}
\def\thesubsectiondis{\Alph{subsection}.} 
\subsection{Blahut's Approach}\label{sec:blahut}
In \cite{blahut74}, Blahut derives a lower bound to the error probability 
of the channel codes without using constant composition arguments.
Blahut claims that the exponential decay rate of his bound is equal to the 
SPE, see \cite[Thm. 19]{blahut74}. 
Blahut claims the equality of the aforementioned exponent and the SPE 
in other publications too, 
see  \cite[Lemma 1]{blahut76} and \cite[Thm. 10.1.4]{blahut}.
We show in the following that for the \(Z\)-channel the exponent of the 
Blahut's bound is infinite for any rate less than the 
channel capacity.
Hence, \cite[Thm. 19]{blahut74}, \cite[Lemma 1]{blahut76}, and \cite[Thm. 10.1.4]{blahut} 
are all incorrect.
More importantly, we show that even the best bound that can be 
obtained using Blahut's method is strictly inferior to the SPB
in terms of its exponential decay rate with the block length.

For any \(\Wm:\inpS\to\pmea{\outA}\) and \(\rate\in [0,\RC{1}{\Wm}]\),
let \(\GX(\rate,\Wm,\mP,\mQ)\) be 
\begin{align}
\notag
\GX(\rate,\Wm,\mP,\mQ)
&\DEF \inf\nolimits_{\Vmn{}:\CRD{1}{\Vmn{}}{\mQ}{\mP}\leq \rate} \CRD{1}{\Vmn{}}{\Wm}{\mP}
&
&\forall \mP\in\pdis{\inpS}, \mQ\in\pmea{\outA}.
\end{align}
Since \(\CRD{1}{\Vmn{}}{(\mP\!\Vmn{})}{\mP}=\RMI{1}{\mP}{\Vmn{}\!~}\) where \((\mP\!\Vmn{})=\sum_{\dinp}\mP(\dinp)\!\Vm(\dinp)\), 
the alternative expression for \(\spe{\rate,\!\Wm\!}\) given in \eqref{eq:lem:haroutunianform} implies
\begin{align}
\label{eq:blahutspherepacking}
\spe{\rate,\Wm\!}
&=\sup\nolimits_{\mP\in\pdis{\inpS}} \inf\nolimits_{\mQ\in\pmea{\outA}} \GX(\rate,\Wm,\mP,\mQ)
&
&\forall \rate\in [0,\RC{1}{\Wm}].
\end{align}
Thus the max-min inequality implies
\begin{align}
\notag
\spe{\rate,\Wm\!}
&\leq  \inf\nolimits_{\mQ\in\pmea{\outA}} \sup\nolimits_{\mP\in\pdis{\inpS}} \GX(\rate,\Wm,\mP,\mQ)
&
&\forall \rate\in [0,\RC{1}{\Wm}].
\end{align}
The initial part of the proof of \cite[Thm. 19]{blahut74} establishes the following bound 
on the error probability of the codes on a stationary product channel with the component channel
\(\Wm\) whose input set \(\inpS\) and output set \(\outS\) are finite:
\begin{align}
\label{eq:blahutbound}
\Pem{\max,\blx}
&\geq \bigo{1}e^{-\smallo{\blx}-\blx\sup\nolimits_{\mP\in\pdis{\inpS}} \GX(\rate,\Wm,\mP,\mQ)} 
&
&\forall \mQ\in \pmea{\outA}.
\end{align}
The second half of the proof of \cite[Thm. 19]{blahut74} claims that
\(\sup\nolimits_{\mP\in\pdis{\inpS}}\GX(\rate,\Wm,\mP,\qmn{\rno_{\rate},\pmn{\rate}})\) is equal to 
\(\spe{\rate,\Wm\!}\) for some \((\rno_{\rate},\pmn{\rate})\) pair satisfying the following 
equalities\footnote{Blahut mentions only 
	the first equality explicitly.} 
\begin{align}
\notag
\spe{\rate,\Wm\!}
&=\spe{\rate,\Wm\!,\pmn{\rate}},
\\
\notag
&=\tfrac{1-\rno_{\rate}}{\rno_{\rate}}(\RMI{\rno_{\rate}}{\pmn{\rate}}{\Wm}-\rate).
\end{align}
When considered together with \eqref{eq:blahutspherepacking},
the second half of the proof of \cite[Thm. 19]{blahut74} asserts that 
\begin{align}
\label{eq:blahutmaxmin}
\sup\nolimits_{\mP\in\pdis{\inpS}}\GX(\rate,\Wm,\mP,\qmn{\rno_{\rate},\pmn{\rate}})
\mathop{=}^{?}\sup\nolimits_{\mP\in\pdis{\inpS}}\inf\nolimits_{\mQ\in\pmea{\outA}}  \GX(\rate,\Wm,\mP,\mQ).
\end{align}
\cite[Lemma 1]{blahut76} and \cite[Thm. 10.1.4]{blahut} 
imply the same equality when considered together with \eqref{eq:blahutspherepacking},
as well. 
In order to disprove \eqref{eq:blahutmaxmin},  we consider 
the \(Z\)-channel, which is 
a discrete channel with the input set \(\inpS=\{1,2\}\)
and the output set \(\outS=\{\mA,\mB\}\) such that 
\begin{align}
\notag
\Wm=
\left[
\begin{array}{cc}
1 & 0 \\ 
\varepsilon & 1-\varepsilon   
\end{array} 
\right].
\end{align}
We determine the order \(\rno\) Augustin capacity and the order \(\rno\) Augustin center using the identities
\begin{comment}
\begin{align}
\notag
\mbox{Using~}
\RD{\rno}{\Wm(1)}{\qmn{\rno,\pmn{\rno}}}
&=\RD{\rno}{\Wm(2)}{\qmn{\rno,\pmn{\rno}}}
\mbox{~ we get}
\\
\notag
(\pmn{\rno}+(1-\pmn{\rno})\varepsilon^{\rno})^{\frac{1-\rno}{\rno}}
&=\varepsilon^{\rno} (\pmn{\rno}+(1-\pmn{\rno})\varepsilon^{\rno})^{\frac{1-\rno}{\rno}}
+(1-\varepsilon)^{\rno}((1-\pmn{\rno})(1-\varepsilon)^{\rno})^{\frac{1-\rno}{\rno}}
\\
\notag
(1-\varepsilon^{\rno} )(\pmn{\rno}+(1-\pmn{\rno})\varepsilon^{\rno})^{\frac{1-\rno}{\rno}}
&=(1-\varepsilon)(1-\pmn{\rno})^{\frac{1-\rno}{\rno}}
\\
\notag
(1-\varepsilon^{\rno})^{\frac{\rno}{1-\rno}}(\pmn{\rno}+(1-\pmn{\rno})\varepsilon^{\rno})
&=(1-\varepsilon)^{\frac{\rno}{1-\rno}}(1-\pmn{\rno})
\\
\notag
(1-\varepsilon^{\rno})^{\frac{1}{1-\rno}}\pmn{\rno}+
(1-\varepsilon^{\rno})^{\frac{\rno}{1-\rno}}\varepsilon^{\rno}
&=(1-\varepsilon)^{\frac{\rno}{1-\rno}}(1-\pmn{\rno})
\\
\notag
((1-\varepsilon^{\rno})^{\frac{1}{1-\rno}}+(1-\varepsilon)^{\frac{\rno}{1-\rno}})
\pmn{\rno}
&=(1-\varepsilon)^{\frac{\rno}{1-\rno}}-\varepsilon^{\rno}(1-\varepsilon^{\rno})^{\frac{\rno}{1-\rno}}
\\
\notag
\pmn{\rno}
&=\tfrac{(1-\varepsilon)^{\frac{\rno}{1-\rno}}-\varepsilon^{\rno}(1-\varepsilon^{\rno})^{\frac{\rno}{1-\rno}}}{(1-\varepsilon^{\rno})^{\frac{1}{1-\rno}}+(1-\varepsilon)^{\frac{\rno}{1-\rno}}}
&
1-\pmn{\rno}
&=\tfrac{(1-\varepsilon^{\rno})^{\frac{\rno}{1-\rno}}}{(1-\varepsilon^{\rno})^{\frac{1}{1-\rno}}+(1-\varepsilon)^{\frac{\rno}{1-\rno}}}
\\
\notag
\qmn{\rno}
&=\begin{bmatrix}
\tfrac{(1-\varepsilon)^{\frac{1}{1-\rno}}}{(1-\varepsilon)^{\frac{1}{1-\rno}}+(1-\varepsilon^{\rno})^{\frac{1}{1-\rno}}(1-\varepsilon)}
&
\tfrac{(1-\varepsilon^{\rno})^{\frac{1}{1-\rno}}(1-\varepsilon)}{(1-\varepsilon)^{\frac{1}{1-\rno}}+(1-\varepsilon^{\rno})^{\frac{1}{1-\rno}}(1-\varepsilon)}
\end{bmatrix}
\end{align}
\(\RC{\rno}{\Wm}=\RD{\rno}{\Wm(1)}{\qmn{\rno,\Wm}}\) 
and
\(\RC{\rno}{\Wm}=\RD{\rno}{\Wm(2)}{\qmn{\rno,\Wm}}\): 
\begin{align}
\label{eq:erasurechannelcapacity}
\RC{\rno}{\Wm}
&=\ln \left(1+\left(\tfrac{1-\varepsilon^{\rno}}{(1-\varepsilon)^{\rno}}\right)^{\frac{1}{1-\rno}}\right),
\\
\label{eq:erasurechannelcenter}
\qmn{\rno,\Wm}(\mA)
&=\tfrac{(1-\varepsilon)^{\frac{\rno}{1-\rno}}}{(1-\varepsilon)^{\frac{\rno}{1-\rno}}+(1-\varepsilon^{\rno})^{\frac{1}{1-\rno}}}.
\end{align}
Then for \(\mQ=\qmn{\rno,\Wm}\) the tilted channel \(\Wma{\rno}{\mQ}\) defined in \eqref{eq:def:tiltedchannel} is 
\begin{align}
\notag
\Wma{\rno}{\qmn{\rno,\Wm}}=
\left[
\begin{array}{cc}
1 & 0 \\ 
\varepsilon^{\rno} & 1-\varepsilon^{\rno}   
\end{array} 
\right].
\end{align}
Furthermore, one can confirm by substitution that the order \(\rno\) Augustin center \(\qmn{\rno,\Wm}\) is the fixed point of 
the order \(\rno\) Augustin operator defined in \eqref{eq:def:Aoperator} for the prior 
\(\pmn{\rno,\Wm}\) satisfying
\begin{align}
\notag
\pmn{\rno,\Wm}(1)
&=\tfrac{(1-\varepsilon)^{\frac{\rno}{1-\rno}}-\varepsilon^{\rno}(1-\varepsilon^{\rno})^{\frac{\rno}{1-\rno}}}{(1-\varepsilon)^{\frac{\rno}{1-\rno}}+(1-\varepsilon^{\rno})^{\frac{1}{1-\rno}}}.
\end{align}
Thus the order \(\rno\) Augustin center \(\qmn{\rno,\Wm}\) is equal to the order \(\rno\) Augustin mean for the prior \(\pmn{\rno,\Wm}\),
i.e. \(\qmn{\rno,\Wm}\!=\!\qmn{\rno,\pmn{\rno,\Wm}}\), by \cite[{Lemma\ref*{C-lem:information}-(\ref*{C-information:zto},\ref*{C-information:oti})}]{nakiboglu19C}
and \(\RMI{\rno}{\pmn{\rno,\Wm}}{\Wm}=\CRD{\rno}{\Wm}{\qmn{\rno,\Wm}}{\pmn{\rno,\Wm}}\).
Consequently \(\RMI{\rno}{\pmn{\rno,\Wm}}{\Wm}=\RC{\rno}{\Wm}\), as well.

Note that \(\RC{\rno}{\Wm}\) given in \eqref{eq:erasurechannelcapacity}  
is a differentiable function of \(\rno\) such that
\begin{align}
\notag
\pder{}{\rno}\RC{\rno}{\Wm}
&=\tfrac{\qmn{\rno}(\mB)}{(1-\rno)^{2}}\left(\ln\tfrac{1-\varepsilon^{\rno}}{1-\varepsilon}
+\tfrac{\varepsilon^{\rno}}{1-\varepsilon^{\rno}}\ln\tfrac{\varepsilon^{\rno}}{\varepsilon} \right),
\\
\notag
&=\tfrac{1}{(1-\rno)^{2}}\CRD{1}{\Wma{\rno}{\qmn{\rno,\Wm}}
}{\Wm}{\pmn{\rno,\Wm}}.
\end{align}
Then using first the identity \(\RMI{\rno}{\pmn{\rno,\Wm}}{\Wm}=\RC{\rno}{\Wm}\)
and then \cite[{(\ref*{C-eq:lem:information:alternative:opt})}]{nakiboglu19C},
we get
\begin{align}
\notag
\pder{}{\rno}\tfrac{1-\rno}{\rno}(\RC{\rno}{\Wm}-\rate)
&=\tfrac{1}{\rno^2}\left(\rate-\RC{\rno}{\Wm}+(1-\rno)\rno \pder{}{\rno}\RC{\rno}{\Wm}\right)
\\
\notag
&=\tfrac{1}{\rno^2}\left(\rate-
\RMI{\rno}{\pmn{\rno,\Wm}}{\Wm}+\tfrac{\rno}{1-\rno}\CRD{1}{\Wma{\rno}{\qmn{\rno,\Wm}}
}{\Wm}{\pmn{\rno,\Wm}}\right)
\\
\notag
&=\tfrac{1}{\rno^{2}}\left(\rate-\RMI{1}{\pmn{\rno,\Wm}}{\Wma{\rno}{\qmn{\rno,\Wm}}
}\right).
\end{align}
One can confirm numerically that \(\RMI{1}{\pmn{\rno,\Wm}}{\Wma{\rno}{\qmn{\rno,\Wm}}
}\) is increasing function of \(\rno\)
for any \(\varepsilon\in(0,1)\). Thus we can express the rate and the corresponding SPE in the following 
parametric form
\begin{align}
\notag
\rate(\rno)
&=\RMI{1}{\pmn{\rno,\Wm}}{\Wma{\rno}{\qmn{\rno,\Wm}}},
\\
\notag
\spe{\rate(\rno),\Wm}
&=\CRD{1}{\Wma{\rno}{\qmn{\rno,\Wm}}
}{\Wm}{\pmn{\rno,\Wm}}.
\end{align}
Thus for \(\rate=\rate(\rnf)\) we have \((\rno_{\rate},\pmn{\rate})=(\rnf,\pmn{\rnf,\Wm})\)
and \(\qmn{\rno_{\rate},\pmn{\rate}}=\qmn{\rnf,\Wm}\). Then
\begin{align}
\notag
\sup\nolimits_{\mP\in\pdis{\inpS}} \GX(\rate,\Wm,\mP,\qmn{\rno_{\rate},\pmn{\rate}})
&\geq \left.\GX(\rate,\Wm,\mP,\qmn{\rno_{\rate},\Wm})\right\vert_{\mP:\mP(1)=1}.
\\
\notag
&=\inf\nolimits_{\mV:\RD{1}{\mV}{\qmn{\rno_{\rate},\Wm}}\leq \rate}  \RD{1}{\mV}{\Wm(1)} 
\end{align}
On the other hand, \(\mV(\mB)>0\)
for all \(\mV\)  satisfying \(\RD{1}{\mV}{\qmn{\rno_{\rate},\Wm}}\leq\rate\)
because \(\RD{1}{\mV}{\qmn{\rno_{\rate},\Wm}}=\ln \tfrac{1}{\qmn{\rno_{\rate},\Wm}(\mA)}=\RC{\rno_{\rate}}{\Wm}\) 
whenever \(\mV(\mB)=0\)
and \(\RC{\rno_{\rate}}{\Wm}>\rate\).
Furthermore, \(\RD{1}{\mV}{\Wm(1)}=\infty\) whenever \(\mV(\mB)>0\). Thus
\begin{align}
\notag
\inf\nolimits_{\mV:\RD{1}{\mV}{\qmn{\rno_{\rate},\Wm}}\leq \rate}  \RD{1}{\mV}{\Wm(1)} 
&=\infty.
\end{align}
Thus the bound established in the initial part of the proof of \cite[Thm. 19]{blahut74}, i.e. \eqref{eq:blahutbound}, 
is trivial for \(\mQ=\qmn{\rno_{\rate},\pmn{\rate}}\).
Furthermore, 
\cite[Thm. 19]{blahut74}, \cite[Lemma 1]{blahut76}, and \cite[Thm. 10.1.4]{blahut} 
are all incorrect
because \(\spe{\rate,\Wm\!}<\infty\) for all \(\rate\in [0,\RC{1}{\Wm}]\) for the
\(Z\)-channels.\footnote{Recently, Yang argued that Blahut's method can be used to derive the SPB
	if the minimax equality given in \cite[(3.63)]{yang15} holds. 
	Thus as a result of our analysis we can conclude that \cite[(3.63)]{yang15} does not holds in general. 
	This fact can be derived using the absence of the minimax equality for \(\GX(\rate,\Wm,\mP,\mQ)\) 
	without relying on the reasoning in \cite{yang15}, as well.}

Using other \(\mQ\)'s one can obtain non-trivial bounds from \eqref{eq:blahutbound};
those bounds, however, do not imply the SPB either. 
If \(\mQ(\mA)<e^{-\rate}\) then \(\sup\nolimits_{\mP\in\pdis{\inpS}} \GX(\rate,\Wm,\mP,\qmn{\rno_{\rate},\pmn{\rate}})=\infty\)
as a result of the analysis presented above.
We analyze the case \(\mQ(\mA)\geq e^{-\rate}\) in the following.
\begin{align}
\notag
\sup\nolimits_{\mP\in\pdis{\inpS}} \GX(\rate,\Wm,\mP,\mQ)
&\geq 
\inf\nolimits_{\Vmn{}:\CRD{1}{\Vmn{}}{\mQ}{\pmn{\rate}}\leq \rate} \CRD{1}{\Vmn{}}{\Wm}{\pmn{\rate}}
\\
\notag
&\geq\inf\nolimits_{\Vmn{}}\CRD{1}{\Vmn{}}{\Wm}{\pmn{\rate}}+\tfrac{1-\rno_{\rate}}{\rno_{\rate}}(\CRD{1}{\Vmn{}}{\mQ}{\pmn{\rate}}-\rate)
\\
\notag
&=\tfrac{1-\rno_{\rate}}{\rno_{\rate}}(\CRD{\rno_{\rate}}{\Wm}{\mQ}{\pmn{\rate}}-\rate)
&
&\mbox{by \cite[Thm. 30]{ervenH14},}
\\
\notag
&\geq \tfrac{1-\rno_{\rate}}{\rno_{\rate}}(\RMI{\rno_{\rate}}{\pmn{\rate}}{\Wm}+\RD{\rno_{\rate}}{\qmn{\rno_{\rate},\mP_{\rate}}}{\mQ}-\rate)
&
&\mbox{by \cite[{Lemma\ref*{C-lem:information}-(\ref*{C-information:zto})}]{nakiboglu19C},}
\\
\notag
&=\spe{\rate,\Wm\!}+\tfrac{1-\rno_{\rate}}{\rno_{\rate}}\RD{\rno_{\rate}}{\qmn{\rno_{\rate},\mP_{\rate}}}{\mQ}
\\
\notag
&\geq \spe{\rate,\Wm\!}+\tfrac{1-\rno_{\rate}}{2}\lon{\mQ-\qmn{\rno_{\rate},\mP_{\rate}}}^{2}
&
&\mbox{by \cite[Thm. 31]{ervenH14}}.
\end{align} 
On the other hand, \(\qmn{\rno_{\rate},\mP_{\rate}}(\mA)<e^{-\rate}\) because \(\qmn{\rno_{\rate},\mP_{\rate}}=\qmn{\rno_{\rate},\Wm}\) and \(\RC{\rno_{\rate}}{\Wm\!}>\rate\).
Thus 
\begin{align}
\notag
\inf\nolimits_{\mQ\in\pmea{\outA}} \sup\nolimits_{\mP\in\pdis{\inpS}} \GX(\rate,\Wm,\mP,\mQ)
&\geq \spe{\rate,\Wm\!}
+2(1-\rno_{\rate})(e^{-\rate}-\qmn{\rno_{\rate},\Wm}(\mA))^{2}
\\
\notag
&=\sup\nolimits_{\mP\in\pdis{\inpS}} \inf\nolimits_{\mQ\in\pmea{\outA}} \GX(\rate,\Wm,\mP,\mQ)
+2(1-\rno_{\rate})(e^{-\rate}-\qmn{\rno_{\rate},\Wm}(\mA))^{2}.
\end{align}
Hence, it is not possible to derive the SPB using Blahut's method, 
as it is presented in \cite{blahut74}. 
When the input set is finite, one can overcome this problem by employing 
composition based expurgations. 
But that approach had been presented by Haroutunian in \cite{haroutunian68}, 
before \cite{blahut74}. 

\begin{comment}
\begin{remark}
The minimax equality assumed by Blahut has recently been expressed 
in an alternative form in \cite[(3.63)]{yang15} 
as a minimax equality for \(\widetilde{E}(\rate,\mP\mtimes\Wm,\mP\otimes\mQ)\)
for \(\widetilde{E}(\rate,\mP,\mQ)\) is defined in \cite[(3.61)]{yang15}:
\begin{align}
\notag
\widetilde{E}(\rate,\mP,\mQ)
&\DEF\sup\left\{E_{0}:\liminf\nolimits_{\blx\to\infty} -\tfrac{1}{\blx}\ln 
\beta_{1-e^{-\blx E_{0}}}(\mP^{(\blx)},\mQ^{(\blx)})\geq \rate \right\}
\end{align}
where \(\mP^{(\blx)}=\mP\otimes\cdots\otimes\mP\), \(\mQ^{(\blx)}=\mQ\otimes\cdots\otimes\mQ\),
and \(\beta\) is defined in \cite[(3.10)]{yang15}
\begin{align}
\notag
\beta_{\gamma}(\mP,\mQ)
&\DEF\inf\nolimits_{\fX:0\leq\fX\leq1,\EXS{\mP}{\fX}\geq \gamma} \EXS{\mQ}{\fX}.
\end{align}
Although  \(\widetilde{E}(\rate,\mP\mtimes\Wm,\mP\otimes\mQ)\) is not 
necessarily equal to \(\GX(\rate,\Wm,\mP,\mQ)\),
the non-existence of the minimax equality 
for \(\GX(\rate,\Wm,\mP,\mQ)\) implies that  \cite[(3.63)]{yang15} 
cannot hold for arbitrary \(\Wm\)'s because of 
\eqref{eq:blahut-alternative-A} and \eqref{eq:blahut-alternative-B},
which we derive next assuming \(\Wm\) is a discrete channel.
First note that \cite[Thm. 5]{shannonGB67A} and \cite[Thm. 30]{ervenH14} imply
\begin{align}
\notag
\widetilde{E}(\rate,\mP,\mQ)
&=\sup\nolimits_{\rno\in[\rnf,1]} 
\tfrac{1-\rno}{\rno}(\RD{\rno}{\mP}{\mQ}-\rate)
&
&\forall\rate\geq \RD{\rnf}{\mP}{\mQ}.
\end{align}
Let \(\set{Q}_{\mP,\rate}\) be \(\conv{\{\qgn{\rno,\mP}:\rno\in[\rnf,1)\}}\)
and \(\rnf\) in \((0,1)\) be such that \(\GMI{\rnf}{\mP}{\Wm}=\rate\)
for the \renyi information \(\GMI{\rnf}{\mP}{\Wm}\) and 
the \renyi mean \(\qgn{\rno,\mP}\) defined in
\cite[\S\ref*{C-sec:information-renyi}]{nakiboglu19C}.
Then using Sion's minimax theorem,
\(\RCL{\rno}{\Wm}{}=\GCL{\rno}{\Wm}{}\)
implied by \cite[Thms. \ref*{C-thm:Lminimax}, \ref*{C-thm:Gminimax}]{nakiboglu19C},
and \eqref{eq:blahutspherepacking}  we get

Let \(\set{Q}_{\mP,\rate}\DEF\conv{\{\qgn{\rno,\mP}\in\pmea{\outA}:\rno\in[\rnf,1)\}}\)
where \(\rate\DEF\GMI{\rnf}{\mP}{\Wm}\)
and the \renyi information \(\GMI{\rnf}{\mP}{\Wm}\), and 
the \renyi mean \(\qgn{\rno,\mP}\) are defined in
\cite[\S\ref*{C-sec:information-renyi}]{nakiboglu19C}.
Then using Sion's minimax theorem,  \(\RCL{\rno}{\Wm}{}=\GCL{\rno}{\Wm}{}\)
implied by \cite[Thms. \ref*{C-thm:Lminimax}, \ref*{C-thm:Gminimax}]{nakiboglu19C}
and \eqref{eq:blahutspherepacking} we get
\begin{align}
\notag
\sup\nolimits_{\mP\in\pdis{\inpS}}\inf\nolimits_{\mQ\in\pmea{\outA}}
\widetilde{E}(\rate,\mP\mtimes\Wm,\mP\otimes\mQ)
&\leq \sup\nolimits_{\mP\in\pdis{\inpS}}
\inf\nolimits_{\mQ\in\set{Q}_{\mP,\rate}}
\widetilde{E}(\rate,\mP\mtimes\Wm,\mP\otimes\mQ)
\\
\notag
&=\sup\nolimits_{\mP\in\pdis{\inpS}}
\inf\nolimits_{\mQ\in\set{Q}_{\mP,\rate}}
\sup\nolimits_{\rno\in[\rnf,1]} 
\tfrac{1-\rno}{\rno}(\RD{\rno}{\mP\mtimes\Wm}{\mP\otimes\mQ}-\rate)
\\
\notag
&=\sup\nolimits_{\mP\in\pdis{\inpS}}
\sup\nolimits_{\rno\in[\rnf,1]} 
\inf\nolimits_{\mQ\in\set{Q}_{\mP,\rate}}
\tfrac{1-\rno}{\rno}(\RD{\rno}{\mP\mtimes\Wm}{\mP\otimes\mQ}-\rate)
\\
\notag
&=\sup\nolimits_{\mP\in\pdis{\inpS}}
\sup\nolimits_{\rno\in[\rnf,1]} 
\tfrac{1-\rno}{\rno}(\GMI{\rno}{\mP}{\Wm}-\rate)
\\
\notag
&=\sup\nolimits_{\rno\in(0,1]} 
\tfrac{1-\rno}{\rno}(\RC{\rno}{\Wm}-\rate)
\\
\label{eq:blahut-alternative-A}
&=\sup\nolimits_{\mP\in\pdis{\inpS}}\inf\nolimits_{\mQ\in\pmea{\outA}}
\GX(\rate,\Wm,\mP,\mQ).
\end{align}
Using \cite[Thm. 30]{ervenH14},  Sion's minimax theorem 
and the definition of \(\GX(\rate,\Wm,\mP,\mQ)\) we get
\begin{align}
\notag
\inf\nolimits_{\mQ\in\pmea{\outA}}\sup\nolimits_{\mP\in\pdis{\inpS}}
\widetilde{E}(\rate,\mP\mtimes\Wm,\mP\otimes\mQ)
&\geq \inf\nolimits_{\mQ\in\pmea{\outA}}\sup\nolimits_{\dinp\in\inpS}
\widetilde{E}(\rate,\Wm(\dinp),\mQ)
\\
\notag
&=\inf\nolimits_{\mQ\in\pmea{\outA}}\sup\nolimits_{\dinp\in\inpS}
\sup\nolimits_{\rno\in(0,1)}\tfrac{1-\rno}{\rno}(\RD{\rno}{\Wm(\dinp)}{\mQ}-\rate)
\\
\notag
&=\inf\nolimits_{\mQ\in\pmea{\outA}}\sup\nolimits_{\rno\in(0,1)}
\sup\nolimits_{\mP\in\pdis{\inpS}} \tfrac{1-\rno}{\rno}(\CRD{\rno}{\Wm}{\mQ}{\mP}-\rate)
\\
\notag
&=\inf\nolimits_{\mQ\in\pmea{\outA}}\sup\nolimits_{\mP\in\pdis{\inpS}}
\sup\nolimits_{\rno\in(0,1)}\inf\nolimits_{\Vm\in\pmea{\outA|\inpS}} 
\CRD{1}{\Vm}{\Wm}{\mP}+ \tfrac{1-\rno}{\rno}(\CRD{1}{\Vm}{\mQ}{\mP}-\rate)
\\
\notag
&=\inf\nolimits_{\mQ\in\pmea{\outA}}\sup\nolimits_{\mP\in\pdis{\inpS}}
\inf\nolimits_{\Vm\in\pmea{\outA|\inpS}}\sup\nolimits_{\rno\in(0,1)} 
\CRD{1}{\Vm}{\Wm}{\mP}+ \tfrac{1-\rno}{\rno}(\CRD{1}{\Vm}{\mQ}{\mP}-\rate)
\\
\notag
&=\inf\nolimits_{\mQ\in\pmea{\outA}}\sup\nolimits_{\mP\in\pdis{\inpS}}
\inf\nolimits_{\Vm:\CRD{1}{\Vm}{\mQ}{\mP}\leq\rate} 
\CRD{1}{\Vm}{\Wm}{\mP}
\\
\label{eq:blahut-alternative-B}
&=\inf\nolimits_{\mQ\in\pmea{\outA}}\sup\nolimits_{\mP\in\pdis{\inpS}}
\GX(\rate,\Wm,\mP,\mQ)
\end{align}
\end{remark}

\subsection{R-G Information Measures}\label{sec:RG-informationmeasures}
The order one R-G information measures are equal to 
the corresponding order one A-L information measures by definition. 
Thus our discussion will be confined to orders other than one.
\begin{definition}\label{def:Ginformation}
	For any \(\rno\in\reals{+}\setminus\{1\}\), channel \(\Wm:\inpS\to \pmea{\outA}\) with 
	a cost function \(\costf:\inpS\to \reals{\geq0}^{\ell}\), \(\mP\in \pdis{\inpS}\),  
	and \(\lgm \in \reals{\geq0}^{\ell}\) 
	\emph{the order \(\rno\) \renyi\!\!-Gallager (R-G) information 
		for the input distribution \(\mP\) 
		and the Lagrange multiplier \(\lgm\)} is
	\begin{align}
	\notag
	\GMIL{\rno}{\mP}{\Wm}{\lgm}
	&\DEF 
	\inf\nolimits_{\mQ\in \pmea{\outA}} \RD{\rno}{\mP \mtimes  \Wm e^{\frac{1-\rno}{\rno}\lgm\cdot\costf}}{\mP\otimes \mQ}.
	\end{align}
\end{definition}
If \(\lgm\) is a vector of zeros, then the R-G information is the \renyi information.
Similar to the \renyi information, the R-G information has a closed form expression, described
in terms of a mean achieving the infimum in its definition.
\begin{definition}\label{def:Gmean}
	For any \(\rno\in\reals{+}\), channel \(\Wm:\inpS\to \pmea{\outA}\) with 
	a cost function \(\costf:\inpS\to \reals{\geq0}^{\ell}\), \(\mP\in \pdis{\inpS}\),  
	and \(\lgm \in \reals{\geq0}^{\ell}\),
	\emph{the order \(\rno\) mean measure for the input distribution \(\mP\) and
		the Lagrange multiplier \(\lgm\)} is
	\begin{align}
	\label{eq:def:Gmeanmeasure}
	\der{\mma{\rno,\mP}{\lgm}}{\rfm}
	&\DEF\left[\sum\nolimits_{\dinp} \mP(\dinp) e^{(1-\rno)\lgm \cdot \costf(\dinp)}\left(\der{\Wm(\dinp)}{\rfm}\right)^{\rno}\right]^{\frac{1}{\rno}}.
	\end{align}
	\emph{The order \(\rno\) \renyi\!\!-Gallager (R-G) mean for the input distribution \(\mP\) 
		and the Lagrange multiplier \(\lgm\)} is 
	\begin{align}
	\notag
	\qga{\rno,\mP}{\lgm}
	&\DEF \tfrac{\mma{\rno,\mP}{\lgm}}{\lon{\mma{\rno,\mP}{\lgm}}}.
	\end{align}
\end{definition}
Both \(\mma{\rno,\mP}{\lgm}\) and \(\qga{\rno,\mP}{\lgm}\) depend on the Lagrange multiplier 
\(\lgm\) for \(\rno\!\in\!\reals{+}\!\setminus\!\{1\}\).
Furthermore, one can confirm by substitution that 
\begin{align}
\notag
\RD{\rno}{\mP \mtimes  \Wm e^{\frac{1-\rno}{\rno}\lgm\cdot\costf}}{\mP\otimes \mQ}
&=\GMIL{\rno}{\mP}{\Wm}{\lgm}
+\RD{\rno}{\qga{\rno,\mP}{\lgm}}{\mQ}
&
&\rno\in\reals{+}\setminus\{1\}.
\end{align}
Then as a result of \cite[Lemma \ref*{C-lem:divergence-pinsker}]{nakiboglu19C} we have
\begin{align}
\notag
\GMIL{\rno}{\mP}{\Wm}{\lgm}
&=\RD{\rno}{\mP \mtimes  \Wm e^{\frac{1-\rno}{\rno}\lgm\cdot\costf}}{\mP\otimes \qga{\rno,\Wm}{\lgm}}
&
&
\\
\label{eq:Ginformation-neq-alternative}
&=\tfrac{\rno}{\rno-1}\ln \lon{\mma{\rno,\mP}{\lgm}}.
&
&\rno\in\reals{+}\setminus\{1\}.
\end{align}
Using the definitions of the A-L information and the R-G information 
together with the Jensen's inequality and the concavity of the natural logarithm 
function we get
\begin{align}
\label{eq:RGAL-information-zto}
\RMIL{\rno}{\mP}{\Wm}{\lgm}
&\geq \GMIL{\rno}{\mP}{\Wm}{\lgm}
&
&\rno \in (0,1]
\\
\label{eq:RGAL-information-oti}
\RMIL{\rno}{\mP}{\Wm}{\lgm}
&\leq \GMIL{\rno}{\mP}{\Wm}{\lgm}
&
&\rno \in [1,\infty).
\end{align}
\begin{definition}
	For any \(\rno\in\reals{+}\), channel  \(\Wm:\inpS\to \pmea{\outA}\) with 
	a cost function \(\costf:\inpS\to \reals{\geq0}^{\ell}\), and \(\lgm \in \reals{\geq0}^{\ell}\), 
	\emph{the order \(\rno\) \renyi\!\!-Gallager (R-G) capacity for the Lagrange multiplier \(\lgm\)} is
	\begin{align}
	\label{eq:def:Gcapacity}
	\GCL{\rno}{\Wm}{\lgm}
	&\DEF \sup\nolimits_{\mP\in \pdis{\inpS}} \GMIL{\rno}{\mP}{\Wm}{\lgm}.
	\end{align}
\end{definition} 
Although inequalities in \eqref{eq:RGAL-information-zto} and \eqref{eq:RGAL-information-oti}
are strict for most input distributions, 
as a result of \cite[Thm. \ref*{C-thm:Gminimax}]{nakiboglu19C}, we have
\begin{align}
\label{eq:thm:Gminimaxradius}
\GCL{\rno}{\Wm\!}{\lgm}
&=\RRL{\rno}{\Wm\!}{\lgm}.
\end{align}
Thus \(\GCL{\rno}{\Wm\!}{\lgm}=\RCL{\rno}{\Wm\!}{\lgm}\) by \eqref{eq:thm:Lminimaxradius}.
This is the reason why in terms of determining the optimal performance either family 
can be used.
The following lemma is, in essence, a restatement of \cite[Thm. 8]{gallager65},
which is the result that popularized the use of R-G information measures 
in cost constrained problems, see \cite{ebert65,ebert66,richters67}.

\begin{lemma}\label{lem:LGIB}
	For any \(\ell,M,L\in \integers{+}\) s.t. \(L<M\),
	\(\Wm:\inpS\to\pmea{\outA}\),
	\(\costf:\inpS\to\reals{\geq0}^{\ell}\),
	\(\mP\in\pdis{\inpS}\), \(\cinpS\subset \inpS\), 
	and \(\rno\in [\tfrac{1}{1+L},1)\) 
	there exists an \((M,L)\) channel code with an encoding function of the form
	\(\enc:\mesS\to\cinpS\) such that
	\begin{align}
	\label{eq:lem:LGIB}
	\ln \Pe 
	&\leq  \tfrac{\rno-1}{\rno} \left[\GMIL{\rno}{\mP}{\Wm}{\lgm}
	+(\inf\nolimits_{\dinp \in \cinpS}  \lgm \cdot \costf(\dinp))
	-\ln \tfrac{(M-1)e}{L}
	\right]
	-\tfrac{\ln \mP(\cinpS)}{\rno}.
	\end{align}
\end{lemma}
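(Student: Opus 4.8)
The plan is to derive Lemma \ref{lem:LGIB} as a direct corollary of Lemma \ref{lem:LAVGIB} by relating the R-G information appearing in \eqref{eq:lem:LGIB} to the $\tau$ parameter appearing in \eqref{eq:lem:LAVGIB} for a suitable choice of the auxiliary measure. Recall that Lemma \ref{lem:LAVGIB} is proved for an arbitrary $\mQ\in\pmea{\outA}$ with $\qmn{\rno,\mP}\AC\mQ$ only at the stage of \eqref{eq:lem:LIB-General}; the specialization $\mQ=\qmn{\rno,\mP}$ was made afterwards to bring in the Augustin mean and its fixed point property. The present statement is instead obtained by specializing \eqref{eq:lem:LIB-General} to the R-G mean $\mQ=\qga{\rno,\mP}{\lgm}$ rather than the Augustin mean, and using the closed-form identity \eqref{eq:Ginformation-neq-alternative} in place of the Augustin fixed point property.

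First I would invoke \eqref{eq:lem:LIB-General} from the proof of Lemma \ref{lem:LAVGIB}, which holds for any $\mQ\in\pmea{\outA}$:
\begin{align}
\notag
\ln \EX{\Pe}
&\leq \ln \EXS{\mQ}{\left(\sum\nolimits_{\dinp} \pmn{\cinpS}(\dinp) e^{(1-\rno)\lgm\cdot\costf(\dinp)}\left(\der{\Wm(\dinp)}{\mQ}\right)^{\rno}\right)^{\frac{1}{\rno}}}
+\tfrac{\rno-1}{\rno}\left[(\inf\nolimits_{\dinp\in\cinpS}\lgm\cdot\costf(\dinp))-\ln \tfrac{(M-1)e}{L}\right].
\end{align}
Next I would bound the first term on the right. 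Using $\pmn{\cinpS}(\dinp)=\IND{\dinp\in\cinpS}\mP(\dinp)/\mP(\cinpS)\leq \mP(\dinp)/\mP(\cinpS)$ and the definition \eqref{eq:def:Gmeanmeasure} of the mean measure $\mma{\rno,\mP}{\lgm}$, the integrand is at most $\mP(\cinpS)^{-1/\rno}\der{\mma{\rno,\mP}{\lgm}}{\mQ}$, so the first term is at most $-\tfrac{1}{\rno}\ln\mP(\cinpS)+\ln\int \der{\mma{\rno,\mP}{\lgm}}{\mQ}\,\mQ(\dif\dout)=-\tfrac{1}{\rno}\ln\mP(\cinpS)+\ln\lon{\mma{\rno,\mP}{\lgm}}$. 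By \eqref{eq:Ginformation-neq-alternative}, $\ln\lon{\mma{\rno,\mP}{\lgm}}=\tfrac{\rno-1}{\rno}\GMIL{\rno}{\mP}{\Wm}{\lgm}$ for $\rno\in(0,1)$, which is exactly the R-G information term in \eqref{eq:lem:LGIB}. Combining the two terms and using that there exists a code with $\Pe\leq\EX{\Pe}$ yields \eqref{eq:lem:LGIB} with an encoding function of the form $\enc:\mesS\to\cinpS$; the Stirling estimate $\tfrac{1}{L}\ln\binom{M-1}{L}\leq\ln\tfrac{M-1}{L}+1$ carried over from the proof of Lemma \ref{lem:LAVGIB} is what produced the factor $e$ inside the logarithm. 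Strictly speaking one should take $\mQ=\qga{\rno,\Wm}{\lgm}$ (or any measure dominating $\mma{\rno,\mP}{\lgm}$) to guarantee $\int\der{\mma{\rno,\mP}{\lgm}}{\mQ}\mQ(\dif\dout)=\lon{\mma{\rno,\mP}{\lgm}}$, which holds whenever $\mma{\rno,\mP}{\lgm}\AC\mQ$; I would simply note this and proceed.

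I do not expect a serious obstacle here, since the argument is a short reduction. The only point that needs care is the choice of the dominating measure $\mQ$: one must check that $\mma{\rno,\mP}{\lgm}$ is finite (equivalently $\GMIL{\rno}{\mP}{\Wm}{\lgm}<\infty$) — if it is infinite the bound is vacuous — and that the Radon--Nikodym derivative manipulations are legitimate, which follows from $\mP(\cinpS)>0$ (the claim being trivial otherwise) and the absolute-continuity requirement that is implicit in \eqref{eq:def:Gmeanmeasure}. Given the author's stated purpose — that Lemma \ref{lem:LGIB} is ``in essence, a restatement of \cite[Thm. 8]{gallager65}'' and is included only for comparison with the Augustin variant — the natural presentation is to remark that the proof is identical to that of Lemma \ref{lem:LAVGIB} up to \eqref{eq:lem:LIB-General}, after which one substitutes the R-G mean and invokes \eqref{eq:Ginformation-neq-alternative} instead of the Augustin fixed point property \eqref{eq:augustinfixedpoint}.
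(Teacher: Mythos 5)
Your proposal matches the paper's own proof essentially line for line: both start from \eqref{eq:lem:LIB-General}, use $\pmn{\cinpS}(\dinp)\leq\mP(\dinp)/\mP(\cinpS)$ to extract $\mP(\cinpS)^{-1/\rno}$ and extend the sum from $\cinpS$ to $\inpS$, and then invoke \eqref{eq:def:Gmeanmeasure} and \eqref{eq:Ginformation-neq-alternative} to identify the remaining expectation with $\tfrac{\rno-1}{\rno}\GMIL{\rno}{\mP}{\Wm}{\lgm}$. Your added remark about choosing $\mQ$ to dominate $\mma{\rno,\mP}{\lgm}$ so that the $\Lon{}$-norm identity is legitimate is a reasonable clarification, but the approach is the same.
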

\begin{proof}[Proof of Lemma \ref{lem:LGIB}]
We follow the proof of Lemma \ref{lem:LAVGIB} up to \eqref{eq:lem:LIB-General}.
As result of \eqref{eq:def:samplingdistribution} and \eqref{eq:lem:LIB-General}
we have
\begin{align}
\notag
\ln \EX{\Pe} 
&\leq 
\ln \EXS{\mQ}{
	\left(\sum\nolimits_{\dinp\in\cinpS} \tfrac{\mP(\dinp)}{\mP(\cinpS)}
	e^{(1-\rno)\lgm\cdot\costf(\dinp)}
	\left(\der{\Wm(\dinp)}{\mQ}\right)^{\rno}
	\right)^{\frac{1}{\rno}}}
+\tfrac{\rno-1}{\rno}\left[(\inf\nolimits_{\dinp\in\cinpS}\lgm\cdot\costf(\dinp))
-\ln \tfrac{(M-1)e}{L}\right]
\\
\notag
&\leq 
\ln \EXS{\mQ}{
		\left(\sum\nolimits_{\dinp\in\inpS} \mP(\dinp)
		e^{(1-\rno)\lgm\cdot\costf(\dinp)}
		\left(\der{\Wm(\dinp)}{\mQ}\right)^{\rno}
		\right)^{\frac{1}{\rno}}}
	+\tfrac{\rno-1}{\rno}\left[(\inf\nolimits_{\dinp\in\cinpS}\lgm\cdot\costf(\dinp))
	-\ln \tfrac{(M-1)e}{L}\right]
-\tfrac{\ln\mP(\cinpS)}{\rno}.
	\end{align}
	Since there exists a code with \(\Pe\) less than or equal to \(\EX{\Pe}\),
the existence of a code satisfying \eqref{eq:lem:LGIB}
with an encoding function of the form \(\enc:\mesS\to\cinpS\)
follows from \eqref{eq:def:Gmeanmeasure} and 
\eqref{eq:Ginformation-neq-alternative}.
\end{proof}
\begin{comment}
	We establish the existence of the code with the desired properties through a random coding argument. 
	We use the  ensemble employed in the proof of Lemma \ref{lem:LAVGIB} for the encoder. 
	Decoder will  decide using the score function \(\gX\) defined in \eqref{eq:def:Glikelihood}
	rather than \(\tfrac{\fX}{\hX}\) defined in \eqref{eq:def:likelihood} and \eqref{eq:def:weight}.
	\begin{align}
	\label{eq:def:Glikelihood}
	\gX_{\dinp}(\dout)
	&\DEF \der{\Wm(\dinp)}{\qga{\rno,\mP}{\lgm}}e^{\frac{1-\rno}{\rno}\lgm\cdot \costf(\dinp)}
	&
	&\forall \dinp\in\inpS,\dout\in\outS.
	\end{align} 
	Given a channel code \((\enc,\dec)\), let \(\Um\) be
	\begin{align}
	\notag
	\Um&\DEF \tfrac{1}{M} \sum\nolimits_{\dmes\in \mesS}
	\EXS{\Wm(\enc(\dmes))}{\IND{\dmes\notin\dec(\dout)}}e^{\frac{1-\rno}{\rno}\lgm\cdot \costf(\enc(\dmes))}.
	\end{align}
	Then for any channel code \((\enc,\dec)\) of the form \(\enc:\mesS\to\cinpS\) we have
	\begin{align}
	\label{eq:LGIB-1}
	\Um
	&\geq \Pe e^{\frac{1-\rno}{\rno}\inf\nolimits_{\dinp \in \cinpS}  \lgm \cdot \costf(\dinp)}.
	\end{align}
	In order to bound \(\Pe\) we bound \(\Um\) through \(\EX{\Um}\). 
	In order to bound \(\EX{\Um}\) we can use the conditional expectation of \(\Um\)
	over the ensemble for the message with the highest index as we did for \(\Pe\)
	in the proof of Lemma \ref{lem:LAVGIB}. Then
	\begin{align}
	\notag
	\EX{\Um} 
	&\leq 
	\sum\nolimits_{\dinp} \pmn{\cinpS}(\dinp)
	\EXS{\qga{\rno,\mP}{\lgm}}{\IND{\gX_{\dinp}\leq \gamma}\gX_{\dinp}}
	+\tbinom{M-1}{L}
	\sum\nolimits_{\dinp} \pmn{\cinpS}(\dinp)
	\EXS{\qga{\rno,\mP}{\lgm}}{\IND{\gX_{\dinp}>\gamma}\left[\sum\nolimits_{\dsta} \pmn{\cinpS}(\dsta)
		\IND{\gX_{\dsta}\geq \gX_{\dinp}} \right]^{L}
		\gX_{\dinp}}.
	\end{align}
	We bound two terms separately and invoke 
	\eqref{eq:def:Gmeanmeasure}, \eqref{eq:def:Gmean}, \eqref{eq:Ginformation-neq-alternative}
	together with \(\tfrac{1}{1+L}\leq \rno\leq 1\) while bounding the second one. 
	\begin{align}
	\notag
	\EXS{\qga{\rno,\mP}{\lgm}}{\IND{\gX_{\dinp}\leq \gamma}\gX_{\dinp}}
	&\leq \gamma^{1-\rno}
	\EXS{\qga{\rno,\mP}{\lgm}}{\IND{\gX_{\dinp}\leq \gamma} (\gX_{\dinp})^{\rno}}
	\\
	\notag
	\EXS{\qga{\rno,\mP}{\lgm}}{\IND{\gX_{\dinp}>\gamma}\left[\sum\nolimits_{\dsta} \pmn{\cinpS}(\dsta)
		\IND{\gX_{\dsta}\geq \gX_{\dinp}} \right]^{L}
		\gX_{\dinp}}
	&\leq
	\EXS{\qga{\rno,\mP}{\lgm}}{\IND{\gX_{\dinp}>\gamma} 
		\left[\sum\nolimits_{\dsta} \pmn{\cinpS}(\dsta)
		\IND{\gX_{\dsta}\geq \gX_{\dinp}} (\gX_{\dsta})^{\rno}\right]^{L}
		(\gX_{\dinp})^{1-L\rno}}
	\\
	\notag
	&\leq 
	\EXS{\qga{\rno,\mP}{\lgm}}{\IND{\gX_{\dinp}>\gamma} 
		\left[\tfrac{e^{(\rno-1)\GMIL{\rno}{\mP}{\Wm}{\lgm}}}{\mP(\cinpS)}\right]^{L}
		(\gX_{\dinp})^{1-L\rno}}
	\\
	\notag
	&\leq\left[\tfrac{e^{(\rno-1)\GMIL{\rno}{\mP}{\Wm}{\lgm}}}{\mP(\cinpS)}\right]^{L}
	\gamma^{1-\rno-\rno L}
	\EXS{\qga{\rno,\mP}{\lgm}}{\IND{\gX_{\dinp}>\gamma} (\gX_{\dinp})^{\rno}}.
	\end{align}
	If we set \(\gamma=\left[\tbinom{M-1}{L}\right]^{\frac{1}{L\rno}}
	\left[\tfrac{e^{(\rno-1)\GMIL{\rno}{\mP}{\Wm}{\lgm}}}{\mP(\cinpS)}\right]^{\frac{1}{\rno}}\),
	we get 
	\begin{align}
	\notag
	\EX{\Um}
	&\leq \gamma^{1-\rno}\EXS{\qga{\rno,\mP}{\lgm}}{(\gX_{\dinp})^{\rno}}
	\\
	\notag
	&\leq \gamma^{1-\rno}\tfrac{e^{(\rno-1)\GMIL{\rno}{\mP}{\Wm}{\lgm}}}{\mP(\cinpS)}
	\\
	\notag
	&=\exp\left(\tfrac{\rno-1}{\rno}\left[\GMIL{\rno}{\mP}{\Wm}{\lgm}-\tfrac{1}{L}\ln \tbinom{M-1}{L}\right]
	-\tfrac{\ln \mP(\cinpS)}{\rno}\right).
	\end{align}
	Note that there exists a code with \(\Um\) less than or equal to \(\EX{\Um}\).
	Then the existence of  a code satisfying \eqref{eq:lem:LGIB} 
	follows from 	\eqref{eq:LAVGIB-1} and \eqref{eq:LGIB-1}.
\subsection{Omitted Proofs}\label{sec:omitted-proofs}
\begin{proof}[Proof of Lemma \ref{lem:spherepacking}]
	\(\spe{\rate,\Wm\!,\cset}\) is convex in \(\rate\), because the pointwise supremum 
	of a family of convex  functions is convex and 
	\(\tfrac{1-\rno}{\rno}(\CRC{\rno}{\Wm\!}{\cset}-\rate)\) is convex in \(\rate\) 
	for any \(\rno\in(0,1)\).
	\(\spe{\rate,\Wm\!,\cset}\) is nonincreasing in \(\rate\) as a result of an analogous 
	argument.
	The continuity and finiteness claims are proved while establishing \eqref{eq:lem:spherepacking}. 
	
	Recall that \(\CRC{\rno}{\Wm\!}{\cset}\) is a nondecreasing function of the order \(\rno\) by 
	\cite[Lemma \ref*{C-lem:capacityO}-(\ref*{C-capacityO-ilsc})]{nakiboglu19C}.
	\begin{itemize} 
		\item If \(\CRC{0^{_{+}}\!}{\Wm\!}{\cset}=\infty\) then \(\RC{1/2}{\Wm\!}=\infty\) 
		and \(\spe{\rate,\Wm\!,\cset}=\infty\) for all \(\rate\in\reals{\geq0}\).
		On the other hand \(\rate<\CRC{0^{_{+}}\!}{\Wm\!}{\cset}\) for all 
		\(\rate\in\reals{\geq0}\). Hence \eqref{eq:lem:spherepacking} holds
		and claims about the continuity and the finiteness of \(\spe{\rate,\Wm\!,\cset}\) are void.
		\item If \(\CRC{0^{_{+}}\!}{\Wm\!}{\cset}<\infty\) and \(\CRC{0^{_{+}}\!}{\Wm\!}{\cset}=\CRC{1}{\Wm\!}{\cset}\) 
		then \(\spe{\rate,\Wm\!,\cset}=\infty\) for all \(\rate\in[0,\CRC{1}{\Wm\!}{\cset})\)
		and  \(\spe{\rate,\Wm\!,\cset}=0\) for all \(\rate\geq\CRC{1}{\Wm\!}{\cset}\).
		Hence \eqref{eq:lem:spherepacking}
		and claims about the continuity and the finiteness of \(\spe{\rate,\Wm\!,\cset}\) 
		hold.
		
		\item If \(\CRC{0^{_{+}}\!}{\Wm\!}{\cset}<\infty\) and 
		\(\CRC{0^{_{+}}\!}{\Wm\!}{\cset}\neq\CRC{1}{\Wm\!}{\cset}\) then
		\(\spe{\rate,\Wm\!,\cset}=\infty\) for all \(\rate\in[0,\CRC{0^{_{+}}\!}{\Wm\!}{\cset})\). 
		For \(\rate\geq\CRC{0^{_{+}}\!}{\Wm\!}{\cset}\), 
		the non-negativity of \(\tfrac{1-\rno}{\rno}(\CRC{\rno}{\Wm\!}{\cset}-\rate)\) 
		imply the restrictions given in \eqref{eq:lem:spherepacking}.
		
		As a result of \eqref{eq:lem:spherepacking}, \(\spe{\rate,\Wm\!,\cset}\) is finite for all \(\rate>\lim_{\rnf\downarrow 0}\CRC{\rnf}{\Wm\!}{\cset}\).
		Thus, \(\spe{\rate,\Wm\!,\cset}\) is continuous on \((\CRC{0^{_{+}}\!}{\Wm\!}{\cset},\infty)\) 
		by  \cite[Thm. 6.3.3]{dudley}.
		In order to extent the continuity to \([\CRC{0^{_{+}}\!}{\Wm\!}{\cset},\infty)\), note that 
		the function \(\tfrac{1-\rno}{\rno}(\CRC{\rno}{\Wm\!}{\cset}-\rate)\) is decreasing and continuous in \(\rate\)
		for each \(\rno\) in \((0,1)\). 
		Thus \(\spe{\rate,\Wm\!,\cset}\) is a nonincreasing  and lower semicontinuous function of \(\rate\). 
		Hence \(\spe{\rate,\Wm\!,\cset}\) is continuous from the right and hence
		at \(\rate=\CRC{0^{_{+}}\!}{\Wm\!}{\cset}\).
	\end{itemize}
\end{proof}

\begin{proof}[Proof of Corollary \ref{cor:cost}]
	\(\tfrac{1-\rno}{\rno}\left(\CRC{\rno}{\Wmn{[1,\blx]}}{\blx\costc}
	-\ln\tfrac{M}{L}\right)\geq \spe{\ln\tfrac{M}{L},\Wmn{[1,\blx]}\!,\blx\costc}-\tfrac{1}{\blx}\)
	for an \(\rno\in(\rnt,1)\)  by	Lemma \ref{lem:spherepacking}.
	There exists a \(\mP\!\in\!\pdis{\widetilde{\inpS}_{1}^{\blx}}\) 
	of the form 
	\(\mP\!=\!\bigotimes_{\tin=1}^{\blx}\pmn{\tin}\) satisfying
	\(\EXS{\mP}{\costf_{[1,\blx]}}\!\leq\!\blx(\widetilde{\costc}\!-\!\delta)\). 
	Let \(\widehat{\costc}\DEF\costc-\tfrac{3 e \varsigma}{\blx}\).
	There exists a \(\widetilde{\mP}\!\in\!\pdis{\widetilde{\inpS}_{1}^{\blx}}\) of the form 
	\(\widetilde{\mP}\!=\!\bigotimes_{\tin=1}^{\blx}\widetilde{\pmn{\tin}}\) satisfying
	both \(\RMI{\rno}{\widetilde{\mP}}{\widetilde{\Wm}_{[1,\blx]}}
	\!\geq\!\CRC{\rno}{\widetilde{\Wm}_{[1,\blx]}}{\blx\widehat{\costc}}\!-\epsilon\)
	and \(\EXS{\widetilde{\mP}}{\costf_{[1,\blx]}}\leq \blx\widehat{\costc}\) by
	\cite[Lemmas \ref*{C-lem:information:product} and \ref*{C-lem:CCcapacityproduct}]{nakiboglu19C}.
	Let \(\widehat{\Wm}_{[1,\blx]}\!:\!\widehat{\inpS}_{1}^{\blx}\!\to\!\pmea{\outA_{1}^{\blx}}\) be 
	a  product channel 
	satisfying \(\widehat{\Wm}_{[1,\blx]}(\dinp_{1}^{\blx})=\Wmn{[1,\blx]}(\dinp_{1}^{\blx})\) 
	for all \(\dinp_{1}^{\blx}\in\widehat{\inpS}_{1}^{\blx}\)
	and \(\widehat{\inpS}_{\tin}=\supp{\widetilde{\pmn{\tin}}}\cup\supp{\pmn{\tin}}\)
	for all \(\tin\in\{1,\ldots,\blx\}\).
	Then \(\CRC{\rno}{\widehat{\Wm}_{[1,\blx]}}{\blx\widehat{\costc}}\!\geq\! \CRC{\rno}{\widetilde{\Wm}_{[1,\blx]}}{\blx\widehat{\costc}}\!-\epsilon\)
	and \(\widetilde{\costc}\leq \widehat{\costc}\) 
	by the  construction.
	There exists a \(\widehat{\mP}\in\pdis{\widehat{\inpS}_{1}^{\blx}}\) satisfying 
	both \(\RMI{\rno}{\widehat{\mP}}{\widehat{\Wm}_{[1,\blx]}}=\CRC{\rno}{\widehat{\Wm}_{[1,\blx]}}{\blx\widehat{\costc}}\)
	and \(\EXS{\widehat{\mP}}{\costf_{[1,\blx]}}\leq\blx \widehat{\costc}\)
	by \cite[{Lemma \ref*{C-lem:capacityFLB}}]{nakiboglu19C}
	because \(\widehat{\inpS}_{1}^{\blx}\) is finite.
	Furthermore, we can assume that \(\widehat{\mP}\) is of the form 
	\(\widehat{\mP}=\bigotimes_{\tin=1}^{\blx}\widehat{\pmn{\tin}}\)
	without loss of generality by
	\cite[Lemma \ref*{C-lem:information:product}]{nakiboglu19C}
	because the cost function \(\costf_{[1,\blx]}\) is additive.
	
	Note that 
	\(\EXS{\widehat{\pmn{\tin}}}{\abs{\costf_{\tin}-\EXS{\widehat{\pmn{\tin}}}{\costf_{\tin}}}^{\knd}}^{\sfrac{1}{\knd}}
	\leq \varsigma\)
	for all \(\knd\in\reals{+}\) and \(\tin\in\{1,\ldots,\blx\}\) because \(\costf_{\tin}\) is 
	positive and less than \(\varsigma\) 
	with probability one under \(\widehat{\pmn{\tin}}\).
	Then using \cite[Lemma \ref*{B-lem:berryesseenN}]{nakiboglu19B} for \(\knd=\ln \blx\) we get
	\begin{align}
	\notag
	\widehat{\mP}(\abs{\costf_{[1,\blx]}(\dinp)-\EXS{\widehat{\mP}}{\costf_{[1,\blx]}}}<3 \varsigma e)
	&\geq \tfrac{1}{2\sqrt{\blx}}.
	\end{align}
	On the other hand, there exists a
	\(\widehat{\lgm}\!=\!\lgm_{\rno,\!\widehat{\Wm}_{[1,\blx]},\!\blx\widehat{\costc}}\) 
	satisfying
	\(\CRC{\rno}{\!\widehat{\Wm}_{[1,\blx]}\!}{\blx\widehat{\costc}}\!=\!
	\RCL{\rno}{\!\widehat{\Wm}_{[1,\blx]}\!}{\widehat{\lgm}}
	\!+\!\widehat{\lgm} \blx\widehat{\costc}\) 
	by \cite[Lemma \ref*{C-lem:Lcapacity}-(\ref*{C-Lcapacity:interior})]{nakiboglu19C}
	because 
	\(\blx\widehat{\costc}\) is in the interior of 
	the feasible cost constraints \(\widehat{\Wm}_{[1,\blx]}\) by construction.
	Furthermore,
	\(\RMIL{\rno}{\widehat{\mP}}{\widehat{\Wm}_{[1,\blx]}}{\widehat{\lgm}}\!=\!\RCL{\rno}{\widehat{\Wm}_{[1,\blx]}\!}{\widehat{\lgm}}\)
	by \cite[Lemma \ref*{C-lem:Lcapacity}-(\ref*{C-Lcapacity:optimal})]{nakiboglu19C}
	because  \(\RMI{\rno}{\widehat{\mP}}{\widehat{\Wm}_{[1,\blx]}}\!=\!\CRC{\rno}{\widehat{\Wm}_{[1,\blx]}}{\blx\widehat{\costc}}\) 
	and \(\EXS{\widehat{\mP}}{\costf_{[1,\blx]}}\!\leq\!\blx \widehat{\costc}\).
	As a result
	\(\RD{\rno}{\Wmn{[1,\blx]}(\dinp_{1}^{\blx})}{\qmn{\rno,\widehat{\mP}}}-\widehat{\lgm}\costf_{[1,\blx]}(\dinp_{1}^{\blx})
	=\RCL{\rno}{\widehat{\Wm}_{[1,\blx]}\!}{\widehat{\lgm}}\)
	for all \(\dinp_{1}^{\blx}\!\in\!\supp{\widehat{\mP}}\)
	and 
	\(\widehat{\lgm}\blx\widehat{\costc}=
	\widehat{\lgm}\EXS{\widehat{\mP}}{\costf_{[1,\blx]}}\).
	Applying Lemma \ref{lem:LAVGIB} for
	\(\cinpS\!=\!\{\dinp_{1}^{\blx}\!\in\!\supp{\widehat{\mP}}:\abs{\costf_{[1,\blx]}(\dinp_{1}^{\blx})-\EXS{\widehat{\mP}}{\costf_{[1,\blx]}}}<3 \varsigma e\}\)
	we can conclude that there exists an \((M,L)\) channel code satisfying
	\begin{align}
	\notag
	\ln \Pe
	&\leq \tfrac{\rno-1}{\rno}\left[\RCL{\rno}{\widehat{\Wm}_{[1,\blx]}}{\widehat{\lgm}}
	+\inf\nolimits_{\dinp_{1}^{\blx}\in \cinpS}\widehat{\lgm}\costf_{[1,\blx]}(\dinp_{1}^{\blx})
	-\ln \tfrac{(M-1)e}{L}\right]+\tfrac{\ln 4\blx}{2\rno}
	\\
	\notag
	&\leq \tfrac{\rno-1}{\rno}\left[\CRC{\rno}{\widehat{\Wm}_{[1,\blx]}}{\blx\widehat{\costc}}
	-3 \varsigma e\widehat{\lgm}
	-\ln \tfrac{(M-1)e}{L}\right]+\tfrac{\ln 4\blx}{2\rno}.
	\end{align}
	Since \(\CRC{\rno}{\widehat{\Wm}_{[1,\blx]}}{\blx\widehat{\costc}}
	+\widehat{\lgm}\blx (\costc-\widehat{\costc})
	\geq \CRC{\rno}{\widehat{\Wm}_{[1,\blx]}}{\blx\costc}\)
	by \cite[Lemma \ref*{C-lem:CCcapacity}-(\ref*{C-CCcapacity:interior})]{nakiboglu19C} 
	we get
	\begin{align}
	\notag
	\ln \Pe
	&\leq -\spe{\ln\tfrac{M}{L},\Wmn{[1,\blx]}\!,\costc}+\tfrac{1}{\blx}+\tfrac{1-\rno}{\rno}(
	6 \varsigma e\widehat{\lgm}+2\epsilon+1)
	+\tfrac{\ln 4\blx}{2\rno}.
	\end{align}
	Then \eqref{eq:cor:cost} holds 
	because \(\tfrac{1-\rno}{\rno}\CRC{\rno}{\Wmn{[1,\blx]}}{\blx\costc}\) is nonincreasing in \(\rno\)
	by \cite[{Lemma \ref*{C-lem:capacityO}-(\ref*{C-capacityO-decreasing})}]{nakiboglu19C},
	provided that 
	\(\widehat{\lgm}\leq 
	\tfrac{\CRC{\rno}{\Wmn{[1,\blx]}}{\blx\costc}}{\blx\delta}\).
	In order to see why such a bound holds 
	first note that \(\widehat{\costc}\leq \costc\) by definition and thus
	\(\CRC{\rno}{\widehat{\Wm}_{[1,\blx]}}{\blx\widehat{\costc}} 
	\leq\CRC{\rno}{\widehat{\Wm}_{[1,\blx]}}{\blx\costc}\).
	Furthermore,
	\begin{align}
	\notag
	\CRC{\rno}{\widehat{\Wm}_{[1,\blx]}}{\blx\widehat{\costc}} 
	&=\inf\nolimits_{\lgm\geq \widehat{\lgm}}
	\RCL{\rno}{\widehat{\Wm}_{[1,\blx]}}{\lgm}+\lgm \blx \widehat{\costc} 
	\\
	\notag
	&> \blx\delta \widehat{\lgm} 
	+\inf\nolimits_{\lgm\geq \widehat{\lgm}} 
	\RCL{\rno}{\widehat{\Wm}_{[1,\blx]}}{\lgm}+\blx\lgm (\widehat{\costc}-\delta)
	\\
	\notag
	&\geq \blx \delta \widehat{\lgm}+
	\CRC{\rno}{\widehat{\Wm}_{[1,\blx]}}{\blx(\widehat{\costc}-\delta)} 
	\\
	\notag
	&\geq \blx \delta \widehat{\lgm}.
	\end{align}
~\vspace{-0.9cm}\\	
\end{proof}

\begin{proof}[Proof of Lemma \ref{lem:avspherepacking}]
	\(\CRC{\rno}{\Wm\!}{\cset}\leq \CRCI{\rno}{\Wm\!}{\cset}{\epsilon}\) by
	\eqref{eq:def:avcapacity} and	
	\cite[Lemma \ref*{C-lem:capacityO}-(\ref*{C-capacityO-ilsc},\ref*{C-capacityO-decreasing})]{nakiboglu19C}.
	Then as a result of the expressions for  \(\spe{\rate,\Wm,\cset}\) given in 
	\eqref{eq:lem:spherepacking} and the definition of \(\spa{\epsilon}{\rate,\Wm,\cset}\)
	given in \eqref{eq:def:avspherepacking} we have
	\begin{align}
	\label{eq:avspherepacking-1}
	\spe{\rate,\Wm,\cset}&\leq\spa{\epsilon}{\rate,\Wm,\cset}
	&
	&\forall \rate\in \reals{\geq 0}. 
	\end{align} 
	Let us proceed with bounding \(\spa{\epsilon}{\rate,\Wm,\cset}-\spe{\rate,\Wm,\cset}\) 
	from above for \(\rate\in[\CRC{\rnf}{\Wm\!}{\cset},\infty)\).
	\begin{align}
	\notag
	\tfrac{1-\rno}{\rno} \left(\CRCI{\rno}{\Wm\!}{\cset}{\epsilon}-\rate\right)
	&=\tfrac{1}{\epsilon}\int_{\rno-\epsilon\rno}^{\rno+\epsilon(1-\rno)} 
	\left(\tfrac{1-\rno}{\rno}\vee \tfrac{1-\rnt}{\rnt}\right) \CRC{\rnt}{\Wm\!}{\cset} \dif{\rnt}- \tfrac{1-\rno}{\rno} \rate 
	\\
	\notag
	&=\tfrac{1}{\epsilon}\tfrac{1-\rno}{\rno} \int_{\rno}^{\rno+\epsilon(1-\rno)} 
	(\CRC{\rnt}{\Wm\!}{\cset}-\rate) \dif{\rnt}
	+
	\tfrac{1}{\epsilon}\int_{\rno-\epsilon\rno}^{\rno} 
	\tfrac{1-\rnt}{\rnt}(\CRC{\rnt}{\Wm\!}{\cset} -\rate)\dif{\rnt}
	+
	\tfrac{\rate}{\epsilon}\int_{\rno-\epsilon\rno}^{\rno} 
	\tfrac{\rno-\rnt}{\rnt\rno}\dif{\rnt}
	\\
	\label{eq:avspherepacking-2}
	&\leq \tfrac{1}{\epsilon}\tfrac{1-\rno}{\rno} \int_{\rno}^{\rno+\epsilon(1-\rno)} 
	(\CRC{\rnt}{\Wm\!}{\cset}-\rate) \dif{\rnt}
	+
	\tfrac{1}{\epsilon}\int_{\rno-\epsilon\rno}^{\rno} 
	\tfrac{1-\rnt}{\rnt}(\CRC{\rnt}{\Wm\!}{\cset} -\rate)\dif{\rnt}
	+\tfrac{\epsilon}{1-\epsilon}\rate.
	\end{align} 
	We bound \(\spa{\epsilon}{\rate,\Wm,\cset}\) by bounding the expression in \eqref{eq:avspherepacking-2} separately on two intervals for \(\rno\). 
	
	In order to bound the expression in \eqref{eq:avspherepacking-2}  for \(\rno\in[\rnf,1)\), 
	we use the fact that \(\sup_{\rnt\in(0,1)}\tfrac{1-\rnt}{\rnt}(\CRC{\rnt}{\Wm\!}{\cset} -\rate)=\spe{\rate,\Wm,\cset}\).
	\begin{align}
	\notag
	\tfrac{1-\rno}{\rno} \left(\CRCI{\rno}{\Wm\!}{\cset}{\epsilon}-\rate\right)
	&\leq \tfrac{1}{\epsilon}\tfrac{1-\rno}{\rno} \int_{\rno}^{\rno+\epsilon(1-\rno)} 
	(\CRC{\rnt}{\Wm\!}{\cset}-\rate) \dif{\rnt}
	+
	\tfrac{1}{\epsilon}\int_{\rno-\epsilon\rno}^{\rno} 
	\tfrac{1-\rnt}{\rnt}(\CRC{\rnt}{\Wm\!}{\cset} -\rate)\dif{\rnt}
	+\tfrac{\epsilon}{1-\epsilon}\rate
	\\
	\notag
	&\leq \tfrac{1}{\epsilon}\tfrac{1-\rno}{\rno} \int_{\rno}^{\rno+\epsilon(1-\rno)} 
	\tfrac{\rnt}{1-\rnt}\spe{\rate,\Wm,\cset}  \dif{\rnt}
	+
	\tfrac{1}{\epsilon}\int_{\rno-\epsilon\rno}^{\rno} 
	\spe{\rate,\Wm,\cset}\dif{\rnt}
	+\tfrac{\epsilon}{1-\epsilon}\rate
	\\
	\notag
	&\leq \spe{\rate,\Wm,\cset}+
	\tfrac{\epsilon}{1-\epsilon}
	\tfrac{1-\rno}{\rno}\spe{\rate,\Wm,\cset}
	+\tfrac{\epsilon}{1-\epsilon} \rate.
	\end{align} 
	Since \((1-\rno)\spe{\rate,\Wm,\cset}+\rno \rate\leq (\rate\vee \spe{\rate,\Wm,\cset})\) for all \(\rno\in (0,1)\),
	we have
	\begin{align}
	\label{eq:avspherepacking-3new}
	\tfrac{1-\rno}{\rno} \left(\CRCI{\rno}{\Wm\!}{\cset}{\epsilon}-\rate\right)
	&\leq \spe{\rate,\Wm,\cset}+
	\tfrac{\epsilon}{1-\epsilon}
	\tfrac{\rate\vee \spe{\rate,\Wm,\cset}}{\rnf}
	&
	\rno
	&\in [\rnf,1).
	\end{align}

In order to bound the expression in \eqref{eq:avspherepacking-2}  for \(\rno\!\in\!(0,\rnf]\), 
recall that \(\CRC{\rno}{\Wm\!}{\cset}\) is nondecreasing in \(\rno\) by 
\cite[Lemma \ref*{C-lem:capacityO}-(\ref*{C-capacityO-ilsc})]{nakiboglu19C}.
Thus for any \(\rate\geq\CRC{\rnf}{\Wm\!}{\cset}\) we have  
\begin{align}
\notag
\tfrac{1-\rno}{\rno} \left(\CRCI{\rno}{\Wm\!}{\cset}{\epsilon}-\rate\right)
&\leq \tfrac{1}{\epsilon}\tfrac{1-\rno}{\rno} \int_{\rnf}^{\rno+\epsilon(1-\rno)} 
\tfrac{\rnt}{1-\rnt}\spe{\rate,\Wm,\cset}  \dif{\rnt}
\IND{\rno\in[\frac{\rnf-\epsilon}{1-\epsilon},\rnf]}
+
\tfrac{\epsilon}{1-\epsilon}\rate
\\
\notag
&\leq \tfrac{1}{\epsilon}\tfrac{\rno+\epsilon(1-\rno)}{\rno(1-\epsilon)}
\int_{\rnf}^{\rno+\epsilon(1-\rno)} \spe{\rate,\Wm,\cset}\dif{\rnt} 
\IND{\rno\in[\frac{\rnf-\epsilon}{1-\epsilon},\rnf]}
+
\tfrac{\epsilon}{1-\epsilon}\rate
\\
\notag
&= \left[
\tfrac{\rno(1-\epsilon) +2\epsilon-\rnf}{\epsilon}
-\tfrac{\rnf-\epsilon}{\rno(1-\epsilon)}
\right]
\spe{\rate,\Wm,\cset} 
\IND{\rno\in[\frac{\rnf-\epsilon}{1-\epsilon},\rnf]}
+
\tfrac{\epsilon}{1-\epsilon}\rate
\\
\label{eq:avspherepacking-4new}
&\leq (1-\rnf)\spe{\rate,\Wm,\cset}+
\tfrac{\epsilon}{1-\epsilon} \tfrac{(1-\rnf)\spe{\rate,\Wm,\cset}+\rnf\rate}{\rnf}
&
\rno
&\in (0,\rnf].
\end{align} 
\eqref{eq:lem:avspherepacking} follows from 
\eqref{eq:avspherepacking-1}, \eqref{eq:avspherepacking-3new}, 
and \eqref{eq:avspherepacking-4new}.
	
	On the other hand \(\CRC{\rno}{\Wm\!}{\cset}\) is nondecreasing and \(\tfrac{1-\rno}{\rno}\CRC{\rno}{\Wm\!}{\cset}\) 
	is nonincreasing in \(\rno\) by 
	\cite[Lemma \ref*{C-lem:capacityO}-(\ref*{C-capacityO-ilsc},\ref*{C-capacityO-decreasing})]{nakiboglu19C}.
	Then as a result of the expression for \(\spe{\rate,\Wm,\cset}\) given in 
	\eqref{eq:lem:spherepacking}, 
	we have \(\spe{\rate,\Wm,\cset}\leq \tfrac{1-\rnf}{\rnf}\rate\) for all \(\rate\in[\CRC{\rnf}{\Wm\!}{\cset},\infty)\).
	Hence,
	\begin{align}
	\label{eq:avspherepacking-6}
	\rate\vee \spe{\rate,\Wm,\cset}
	&\leq \sfrac{\rate}{\rnf}
	&
	&\forall \rate\in[\CRC{\rnf}{\Wm\!}{\cset},\infty).
	\end{align}
	\eqref{eq:lem:avspherepackingB} follows from \eqref{eq:lem:avspherepacking} and \eqref{eq:avspherepacking-6}.
	\end{proof}

\begin{proof}[Proof of Theorem \ref{thm:exponent-convex}]
	We prove Theorem \ref{thm:exponent-convex} using Lemmas \ref{lem:avspherepacking} and \ref{lem:spb-convex}.
	We are free to choose different values for \(\knd\) and \(\epsilon\)  for different 
	values of \(\blx\), provided that the  hypotheses of Lemmas \ref{lem:avspherepacking} and \ref{lem:spb-convex} 
	are satisfied. 
	
As a result of Assumption \ref{assumption:convex-ologn} 
there exists a \(K\in [1,\infty)\) and an \(\blx_{0}\in \integers{+}\) such that 
	\begin{align}
\notag
	\max\nolimits_{\tin:\tin\leq\blx} \RC{\sfrac{1}{2}}{\Uma{\cset}{(\tin)}} &\leq K \ln (\blx)
	&
	&\forall\blx\geq \blx_{0}.
	\end{align}
	Let \(\knd_{\blx}\) be \(\knd_{\blx}=K \ln(1+\blx)\). Then 
	\begin{align}
	\label{eq:exponent-convex-2}
	\gamma_{\blx}
	&\leq 40 (K+1)\ln(1+\blx)
	&
	&\forall\blx\geq \blx_{0}.
	\end{align}
	\(\CRC{\rno}{\Wm\!}{\cset}\) is nondecreasing in \(\rno\) by  
	\cite[Lemma \ref*{C-lem:capacityO}-(\ref*{C-capacityO-ilsc})]{nakiboglu19C}
	and \(\tfrac{1-\rno}{\rno}\CRC{\rno}{\Wm\!}{\cset} \) is nonincreasing in \(\rno\) on \((0,1)\) by 
	\cite[Lemma \ref*{C-lem:capacityO}-(\ref*{C-capacityO-decreasing})]{nakiboglu19C}.
	Thus, we can bound \(\CRCI{\rno}{\Wm\!}{\cset}{\epsilon}\) using its definition given 
	in \eqref{eq:def:avcapacity}:
	\begin{align}
	\notag
	\CRCI{\rno}{\Wm\!}{\cset}{\epsilon}
&=\tfrac{1}{\epsilon}\int_{\rno-\epsilon\rno}^{\rno+\epsilon(1-\rno)} 
\left[ 1\vee \left(\tfrac{\rno}{1-\rno}\tfrac{1-\rnt}{\rnt}\right)\right] \CRC{\rnt}{\Wm\!}{\cset}  \dif{\rnt}
\\
\notag
&\leq \tfrac{1}{\epsilon}\int_{\rno-\epsilon\rno}^{\rno+\epsilon(1-\rno)} 
\left[ 1\vee \left(\tfrac{\rno}{1-\rno}\tfrac{1-\rnt}{\rnt}\right)\right] 
\left[ 1\vee \left(\tfrac{1-\rno}{\rno}\tfrac{\rnt}{1-\rnt}\right)\right]
\CRC{\rno}{\Wm\!}{\cset}  \dif{\rnt}
\\
\notag
&=\tfrac{\CRC{\rno}{\Wm\!}{\cset}}{\epsilon}\int_{\rno-\epsilon\rno}^{\rno+\epsilon(1-\rno)} 
\left[ \left(\tfrac{\rno}{1-\rno}\tfrac{1-\rnt}{\rnt}\right)\vee \left(\tfrac{1-\rno}{\rno}\tfrac{\rnt}{1-\rnt}\right)\right]
\dif{\rnt}
\\
\notag
&\leq \left(1+\tfrac{\epsilon}{1-\epsilon}\tfrac{\rno^{2}+(1-\rno)^{2}}{\rno(1-\rno)}\right)
	\CRC{\rno}{\Wm\!}{\cset}.
	\end{align}
Then for \(\epsilon_{\blx}=\sfrac{1}{\blx}\), 	\eqref{eq:exponent-convex-2} imply that
	\begin{align}
	\label{eq:exponent-convex-3}
	\blx \CRCI{\rnf}{\Wm\!}{\cset}{\epsilon}+\tfrac{\gamma_{\blx}}{1-\rnf}
	+\ln\tfrac{8 e^{3}\blx^{1.5} }{\epsilon}
	&\leq \blx \CRC{\rnf}{\Wm\!}{\cset}+\tfrac{\blx}{\blx-1}\tfrac{\CRC{\rnf}{\Wm\!}{\cset}}{\rnf(1-\rnf)}
	+\tfrac{40(K+1)\ln(1+\blx)}{1-\rnf}+3\ln (2 e \blx)
		&
	&\forall\blx\geq \blx_{0}.
	\end{align}
Thus the hypothesis of Theorem \ref{thm:exponent-convex} implies 
the hypothesis of Lemma \ref{lem:spb-convex} for all \(\blx\) large enough. 
Consequently, for all \(\blx\) large enough Lemma \ref{lem:spb-convex} and
\eqref{eq:exponent-convex-2} implies that
	\begin{align}
	\label{eq:exponent-convex-4}
	\Pe
	&\geq  \left(\tfrac{(1+\blx)^{-80(K+1)}}{8 e^{3}\blx^{2.5}}\right)^{\sfrac{1}{\rnf}}  
	e^{-\blx \spa{\epsilon_{\blx}}{\frac{1}{\blx}\ln\frac{M_{\blx}}{L_{\blx}},\Wm\!,\cset}}.
	\end{align}
	On the other hand,  Lemma \ref{lem:avspherepacking}, \eqref{eq:thm:exponent-convex-hypothesis},
	and the monotonicity of \(\CRC{\rno}{\Wm\!}{\cset}\) in \(\rno\) imply that for \(\blx\) 
	large enough
	\begin{align}
	\label{eq:exponent-convex-5}
	\spa{\epsilon_{\blx}}{\tfrac{1}{\blx}\ln\tfrac{M_{\blx}}{L_{\blx}},\Wm\!,\cset}
	&\leq\spe{\tfrac{1}{\blx}\ln\tfrac{M_{\blx}}{L_{\blx}},\Wm\!,\cset}
	+\tfrac{\CRC{\rno_{1}}{\Wm\!}{\cset}}{(\blx-1)\rnf^{2}}.
	\end{align}
	\eqref{eq:thm:exponent-convex} follows from \eqref{eq:exponent-convex-4} and \eqref{eq:exponent-convex-5}.
\end{proof}

\begin{proof}[Proof of Theorem \ref{thm:exponent-cost-ologn}]
	We prove Theorem \ref{thm:exponent-cost-ologn} using Lemmas \ref{lem:avspherepacking} and \ref{lem:spb-cost-ologn}.
	We are free to choose different values for \(\knd\) and \(\epsilon\)  for different 
	values of \(\blx\), provided that the  hypotheses of Lemmas \ref{lem:avspherepacking} and \ref{lem:spb-cost-ologn} 
	are satisfied. 
	
As a result of Assumption \ref{assumption:cost-ologn}
there exists a \(K\in [1,\infty)\) and an \(\blx_{0}\in \integers{+}\) such that 
	\begin{align}
	\label{eq:exponent-cost-ologn-1}
	\max\nolimits_{\tin:\tin\leq\blx} \CRC{\sfrac{1}{2}}{\Wmn{\tin}}{\blx \costc} &\leq K \ln \blx
	&
	&\forall\blx\geq \blx_{0}.
	\end{align}
	Let \(\knd_{\blx}\) be \(\knd_{\blx}=K \ln(1+\blx)\). Then
	\begin{align}
	\label{eq:exponent-cost-ologn-2}
	\gamma_{\blx}
	&\leq 40 (K+1)  \ln(1+\blx)
	&
	&\forall\blx\geq \blx_{0}.
	\end{align}
	\(\CRC{\rno}{\Wm\!}{\costc}\) is nondecreasing in \(\rno\) by  
	\cite[Lemma \ref*{C-lem:capacityO}-(\ref*{C-capacityO-ilsc})]{nakiboglu19C}
	and \(\tfrac{1-\rno}{\rno}\CRC{\rno}{\Wm\!}{\costc}\) is nonincreasing in \(\rno\) on \((0,1)\) by 
	\cite[Lemma \ref*{C-lem:capacityO}-(\ref*{C-capacityO-decreasing})]{nakiboglu19C}.
	Thus, we can bound \(\CRCI{\rno}{\Wm\!}{\costc}{\epsilon}\) using its definition given 
	in \eqref{eq:def:avcapacity}:
	\begin{align}
	\notag
	\CRCI{\rno}{\Wmn{[1,\blx]}}{\blx \costc}{\epsilon}
	&\leq \left(1+\tfrac{\epsilon}{1-\epsilon}\tfrac{\rno^{2}+(1-\rno)^{2}}{\rno(1-\rno)}\right)
	\CRC{\rno}{\Wmn{[1,\blx]}}{\blx \costc}.
	\end{align}
	Then for \(\epsilon_{\blx}=\tfrac{1}{\blx}\),
	\eqref{eq:exponent-cost-ologn-1}, \eqref{eq:exponent-cost-ologn-2}, and 
	\cite[Lemmas \ref*{C-lem:capacityO}-(\ref*{C-capacityO-ilsc},\ref*{C-capacityO-decreasing}), 
	\ref*{C-lem:CCcapacity}-(\ref*{C-CCcapacity:function}), \ref*{C-lem:CCcapacityproduct}]{nakiboglu19C}
	imply that
	\begin{align}
	\label{eq:exponent-cost-ologn-3}
\CRCI{\rno}{\Wmn{[1,\blx]}}{\blx \costc}{\epsilon}
+\tfrac{\gamma_{\blx}}{1-\rnf}
+\ln\tfrac{8e^{3}\blx^{1.5} }{\epsilon}
&\!\leq\!
\CRC{\rno}{\Wmn{[1,\blx]}}{\blx \costc}
\!+\!\tfrac{\blx}{\blx-1}\tfrac{K\ln (\blx)}{\rnf(1-\rnf)}
\!+\!\tfrac{40 (K+1)\ln(1+\blx)}{1-\rnf}
\!+\!3\ln (2 e \blx)
&
&\forall\blx\geq \blx_{0}.
\end{align}
Thus the hypothesis of Theorem \ref{thm:exponent-cost-ologn} implies 
the hypothesis of Lemma \ref{lem:spb-cost-ologn} for all \(\blx\) large enough. 
Consequently, for all \(\blx\) large enough Lemma \ref{lem:spb-cost-ologn} and
\eqref{eq:exponent-cost-ologn-2} implies that
	\begin{align}
	\label{eq:exponent-cost-ologn-4}
	\Pe
	&\geq  \left(\tfrac{(1+\blx)^{-80(K+1)}}{8 e^{3}\blx^{2.5}}\right)^{\sfrac{1}{\rnf}}  
	e^{-\spa{\epsilon_{\blx}}{\ln \frac{M_{\blx}}{L_{\blx}},\Wmn{[1,\blx]},\blx \costc}}.
	\end{align}
	On the other hand,  Lemma \ref{lem:avspherepacking}, \eqref{eq:thm:exponent-cost-ologn-hypothesis},
	and the monotonicity of \(\CRC{\rno}{\Wm\!}{\costc}\) in \(\rno\) imply that for \(\blx\) large enough
	\begin{align}
	\label{eq:exponent-cost-ologn-5}
	\spa{\epsilon_{\blx}}{\ln \tfrac{M_{\blx}}{L_{\blx}},\Wmn{[1,\blx]},\blx \costc}
	&\leq
	\spe{\ln \tfrac{M_{\blx}}{L_{\blx}},\Wmn{[1,\blx]},\blx \costc}
	+\tfrac{\CRC{\rno_{1}}{\Wmn{[1,\blx]}}{\blx \costc}}{(\blx-1)\rnf^{2}}.
	\end{align}
Note that \(\CRC{\rno}{\Wmn{[1,\blx]}}{\blx \costc}\) is nondecreasing in \(\rno\) by  
\cite[Lemma \ref*{C-lem:capacityO}-(\ref*{C-capacityO-ilsc})]{nakiboglu19C}
and \(\tfrac{1-\rno}{\rno}\CRC{\rno}{\Wmn{[1,\blx]}}{\blx \costc}\) is nonincreasing in \(\rno\) on \((0,1)\) by 
\cite[Lemma \ref*{C-lem:capacityO}-(\ref*{C-capacityO-decreasing})]{nakiboglu19C}.
Thus
\begin{align}
\label{eq:exponent-cost-ologn-6}
\CRC{\rno_{1}}{\Wmn{[1,\blx]}}{\blx \costc}
\leq (\tfrac{\rno_{1}}{1-\rno_{1}} \vee 1)\CRC{1/2}{\Wmn{[1,\blx]}}{\blx \costc}.
\end{align}
Since \(\CRC{\rno}{\Wm\!}{\costc}\) is non-decreasing in \(\costc\) by 
\cite[Lemmas \ref*{C-lem:CCcapacity}-(\ref*{C-CCcapacity:function})]{nakiboglu19C},
\eqref{eq:exponent-cost-ologn-1} and \cite[Lemma \ref*{C-lem:CCcapacityproduct}]{nakiboglu19C}
imply for all  \(\blx\) large enough
\begin{align}
\label{eq:exponent-cost-ologn-7}
\CRC{1/2}{\Wmn{[1,\blx]}}{\blx \costc}
\leq  K \blx  \ln \blx.
\end{align}
	For \(\blx\)  large enough 
\eqref{eq:exponent-cost-ologn-5}, \eqref{eq:exponent-cost-ologn-6}, and \eqref{eq:exponent-cost-ologn-7} imply
\begin{align}
\label{eq:exponent-cost-ologn-8}
\spa{\epsilon_{\blx}}{\ln \tfrac{M_{\blx}}{L_{\blx}},\Wmn{[1,\blx]},\blx \costc}
&\leq
\spe{\ln \tfrac{M_{\blx}}{L_{\blx}},\Wmn{[1,\blx]},\blx \costc}
+\tfrac{2}{\rnf^{2}} (\tfrac{\rno_{1}}{1-\rno_{1}} \vee 1) K \ln \blx.
\end{align}
\eqref{eq:thm:exponent-cost-ologn} follows from \eqref{eq:exponent-cost-ologn-4} and \eqref{eq:exponent-cost-ologn-8}.
\end{proof}
\section*{Acknowledgment}
The author would like to thank Fatma Nakibo\u{g}lu and Mehmet Nakibo\u{g}lu for 
their hospitality; this work would not have been possible without it.  
The author would like to thank  Marco Dalai for informing him about Fano's implicit 
assertion of  the fixed point property in  \cite{fano},
Gonzalo Vazquez-Vilar for pointing out Poltyrev's paper \cite{poltyrev82} on 
the random coding bound and for his suggestions on the manuscript,
Wei Yang for pointing out \cite[(3.63)]{yang15} and its relation to Blahut's approach,
and the reviewer his suggestions on the manuscript.
\bibliographystyle{unsrt} 

\begin{thebibliography}{10}

\bibitem{nakiboglu17}
B.~{Nakibo{\u{g}}lu}.
\newblock The {A}ugustin center and the sphere packing bound for memoryless
  channels.
\newblock In {\em 2017 IEEE International Symposium on Information Theory
  (ISIT)}, pages \href{http://arxiv.org/abs/1701.06610}{1401--1405}, Aachen,
  Germany, June 2017.

\bibitem{elias55A}
P.~Elias.
\newblock Coding for noisy channels.
\newblock In {\em IRE Convention Record 3(4)}, pages 37--46, 1955.
\newblock (reprinted in {\emph{Key Papers in the Development of Coding
  Theory}}, ed. {E. Berlekamp},~pp. 48-55. {New York}:{IEEE Press}, 1974.).

\bibitem{elias55B}
P.~Elias.
\newblock Coding for two noisy channels.
\newblock In {\em Proceedings of Third London Symposium of Information Theory},
  pages 61--74, London, 1955. Butterworth Scientific.

\bibitem{dobrushin62B}
R.~Dobrushin.
\newblock Asymptotic estimates of the probability of error for transmission of
  messages over a discrete memoryless communication channel with a symmetric
  transition probability matrix.
\newblock {\em Theory of Probability \& Its Applications}, 7(3):270--300, 1962.

\bibitem{altugW19}
Y.~Altu{\u{g}} and A.~B. Wagner.
\newblock On exact asymptotics of the error probability in channel coding:
  symmetric channels.
\newblock {\em \href{http://arxiv.org/abs/1908.11419v1}{arXiv:1908.11419}
  [cs.IT]}, 2019.

\bibitem{altugW11}
Y.~Altu{\u{g}} and A.~B. Wagner.
\newblock Refinement of the sphere packing bound for symmetric channels.
\newblock In {\em 49th Annual Allerton Conference on Communication, Control,
  and Computing}, pages 30--37, Sept 2011.

\bibitem{shannonGB67A}
C.~E. Shannon, R.~G. Gallager, and E.~R. Berlekamp.
\newblock Lower bounds to error probability for coding on discrete memoryless
  channels. {I}.
\newblock {\em Information and Control}, 10(1):65--103, 1967.

\bibitem{gallager}
Robert~G. Gallager.
\newblock {\em Information theory and reliable communication}.
\newblock John Wiley \& Sons, Inc., New York, NY, 1968.

\bibitem{fano}
Roberto~M. Fano.
\newblock {\em Transmission of Information: A Statistical Theory of
  Communications}.
\newblock M.I.T. Press, Cambridge, MA, 1961.

\bibitem{haroutunian68}
E.~A. Haroutunian.
\newblock Estimates of the error probability exponent for a semicontinuous
  memoryless channel.
\newblock {\em Problems of Information Transmission}, 4(4):37--48, 1968.

\bibitem{omura75}
J.~K. Omura.
\newblock A lower bounding method for channel and source coding probabilities.
\newblock {\em Information and Control}, 27(2):148 -- 177, 1975.

\bibitem{csiszarkorner}
Imre Csisz{\'a}r and J{\'a}nos K{\"o}rner.
\newblock {\em Information theory: coding theorems for discrete memoryless
  systems}.
\newblock Cambridge University Press, Cambridge, UK, 2011.

\bibitem{altugW14A}
Y.~Altu{\u{g}} and A.~B. Wagner.
\newblock Refinement of the sphere-packing bound: Asymmetric channels.
\newblock {\em IEEE Transactions on Information Theory}, 60(3):1592--1614,
  March 2014.

\bibitem{nakiboglu19-ISIT}
B.~{Nakibo{\u{g}}lu}.
\newblock A simple derivation of the refined spb for the constant composition
  codes.
\newblock In {\em 2019 IEEE International Symposium on Information Theory
  (ISIT)}, pages 2659--2663, Paris, France, July 2019.

\bibitem{wyner88-b}
A.~D. Wyner.
\newblock Capacity and error exponent for the direct detection photon channel.
  {II}.
\newblock {\em IEEE Transactions on Information Theory}, 34(6):1462--1471, Nov
  1988.

\bibitem{burnashevK99}
M.~V. Burnashev and Yu.~A. Kutoyants.
\newblock On the sphere-packing bound, capacity, and similar results for
  {P}oisson channels.
\newblock {\em Problems of Information Transmission}, 35(2):95--111, 1999.

\bibitem{shannon59}
C.~E. Shannon.
\newblock Probability of error for optimal codes in a {Gaussian} channel.
\newblock {\em The Bell System Technical Journal}, 38(3):611--656, May 1959.

\bibitem{ebert65}
P.~M. Ebert.
\newblock Error bounds for gaussian noise channels.
\newblock {\em Quarterly Progress Report}, 77({{B}}):292--302, 1965.
\newblock (\url{http://hdl.handle.net/1721.1/55609}).

\bibitem{ebert66}
Paul~Michael Ebert.
\newblock {\em Error Bounds For Parallel Communication Channels}.
\newblock Technical report 448, Research Laboratory of Electronics at
  Massachusetts Institute of Technology, Cambridge, MA, 1966.
\newblock (\url{http://hdl.handle.net/1721.1/4295}).

\bibitem{richters67}
John~Stephen Richters.
\newblock {\em Communication over fading dispersive channels}.
\newblock Technical report 464, Research Laboratory of Electronics at
  Massachusetts Institute of Technology, Cambridge, MA, 1967.
\newblock (\url{http://hdl.handle.net/1721.1/4279}).

\bibitem{lancho19}
Alejandro {Lancho Serrano}.
\newblock {\em Fundamental limits of short-packet wireless communications}.
\newblock {Ph.D. Thesis}, Universidad Carlos III de Madrid, Departamento de
  Teoria de la Senal y Comunicaciones, June 2019.
\newblock (\url{https://e-archivo.uc3m.es/handle/10016/29596}).

\bibitem{lanchoODKV19A}
A.~{Lancho}, J.~{\"{O}stman}, G.~{Durisi}, T.~Koch, and G.~{Vazquez-Vilar}.
\newblock Saddlepoint approximations for noncoherent single-antenna rayleigh
  block-fading channels.
\newblock In {\em 2019 IEEE International Symposium on Information Theory
  (ISIT)}, pages 612--616, July 2019.

\bibitem{lanchoODKV19B}
A.~{Lancho}, J.~{\"{O}stman}, G.~{Durisi}, T.~Koch, and G.~{Vazquez-Vilar}.
\newblock Saddlepoint approximations for short-packet wireless communications.
\newblock {\em \href{http://arxiv.org/abs/1904.10442}{arXiv:1904.10442}
  [cs.IT]}, 2019.

\bibitem{augustin69}
U.~Augustin.
\newblock Error estimates for low rate codes.
\newblock {\em Zeitschrift f{\"u}r {W}ahrscheinlichkeitstheorie und {V}erwandte
  {G}ebiete}, 14(1):61--88, 1969.

\bibitem{augustin78}
Udo Augustin.
\newblock {\em Noisy Channels}.
\newblock Habilitation thesis, Universit\"{a}t Erlangen-N\"{u}rnberg, 1978.
\newblock (\url{http://bit.ly/2ID8h7m}).

\bibitem{nakiboglu19B}
B.~{Nakibo{\u{g}}lu}.
\newblock The sphere packing bound via {A}ugustin's method.
\newblock {\em IEEE Transactions on Information Theory}, 65(2):816--840, Feb
  2019.
\newblock (\href{http://arxiv.org/abs/1611.06924}{arXiv:1611.06924} [cs.IT]).

\bibitem{nakiboglu19C}
B.~{Nakibo{\u{g}}lu}.
\newblock The {A}ugustin capacity and center.
\newblock {\em Problems of Information Transmission}, 55(4):299--342, October
  2019.
\newblock (\href{http://arxiv.org/abs/1803.07937}{arXiv:1803.07937} [cs.IT]).

\bibitem{nakiboglu19A}
B.~{Nakibo{\u{g}}lu}.
\newblock The {R\'{e}nyi} capacity and center.
\newblock {\em IEEE Transactions on Information Theory}, 65(2):841--860, Feb
  2019.
\newblock (\href{http://arxiv.org/abs/1608.02424}{arXiv:1608.02424} [cs.IT]).

\bibitem{nakiboglu19F}
B.~{Nakibo{\u{g}}lu}.
\newblock A simple derivation of the refined {SPB} under certain symmetry
  hypotheses.
\newblock {\em \href{http://arxiv.org/abs/1904.12780}{arXiv:1904.12780v2}},
  2019.

\bibitem{gallager65}
R.~G. Gallager.
\newblock A simple derivation of the coding theorem and some applications.
\newblock {\em IEEE Transactions on Information Theory}, 11(1):3--18, Jan.
  1965.

\bibitem{poltyrev82}
G.~Sh. Poltyrev.
\newblock Random coding bounds for discrete memoryless channels.
\newblock {\em Problems of Information Transmission}, 18(1):9--21, 1982.

\bibitem{blahut74}
R.~E. Blahut.
\newblock Hypothesis testing and information theory.
\newblock {\em IEEE Transactions on Information Theory}, 20(4):405--417, Jul
  1974.

\bibitem{blahut76}
R.~Blahut.
\newblock Information bounds of the {Fano-Kullback} type.
\newblock {\em IEEE Transactions on Information Theory}, 22(4):410--421, Jul
  1976.

\bibitem{blahut}
Richard~E. Blahut.
\newblock {\em Principles and Practice of Information Theory}.
\newblock Addison-Wesley, Reading, MA, 1987.

\bibitem{vazquezMF15}
G.~Vazquez-Vilar, A.~Martinez, and A.~G.~i Fabregas.
\newblock A derivation of the cost-constrained sphere-packing exponent.
\newblock In {\em 2015 IEEE International Symposium on Information Theory},
  pages 929--933, Hong Kong, P.R.C., June 2015.

\bibitem{dudley}
Richard~M. Dudley.
\newblock {\em Real analysis and probability}, volume~74.
\newblock Cambridge University Press, New York, NY, 2002.

\bibitem{ervenH14}
T.~van Erven and P.~Harremo\"{e}s.
\newblock {R}{\'e}nyi divergence and {Kullback-Leibler} divergence.
\newblock {\em IEEE Transactions on Information Theory}, 60(7):3797--3820, July
  2014.

\bibitem{rudin}
Walter Rudin.
\newblock {\em Principles of Mathematical Analysis}.
\newblock McGraw-Hill, New York, NY, 1976.

\bibitem{bogachev}
Vladimir~I. Bogachev.
\newblock {\em Measure Theory}.
\newblock Springer-Verlag, Berlin Heidelberg, 2007.

\bibitem{elias57}
P.~Elias.
\newblock {\em List decoding for noisy channels}.
\newblock Technical report 335, Research Laboratory of Electronics at
  Massachusetts Institute of Technology, Cambridge, MA, 1957.
\newblock (\url{http://hdl.handle.net/1721.1/4484}).

\bibitem{dalai13A}
M.~Dalai.
\newblock Lower bounds on the probability of error for classical and
  classical-quantum channels.
\newblock {\em IEEE Transactions on Information Theory}, 59(12):8027--8056, Dec
  2013.

\bibitem{dalai17}
M.~Dalai.
\newblock Some remarks on classical and classical-quantum sphere packing
  bounds: {R}{\'e}nyi vs. {K}ullback-{L}eibler.
\newblock {\em Entropy}, 19(7):355, 2017.

\bibitem{dalaiW17}
M.~Dalai and A.~Winter.
\newblock Constant compositions in the sphere packing bound for
  classical-quantum channels.
\newblock {\em IEEE Transactions on Information Theory}, 63(9):5603--5617, Sept
  2017.

\bibitem{chengH18}
H.~C. Cheng and M.~H. Hsieh.
\newblock Moderate deviation analysis for classical-quantum channels and
  quantum hypothesis testing.
\newblock {\em IEEE Transactions on Information Theory}, 64(2):1385--1403, Feb
  2018.

\bibitem{chengHT19}
H.~C. Cheng, M.~H. Hsieh, and M.~Tomamichel.
\newblock Quantum sphere-packing bounds with polynomial prefactors.
\newblock {\em IEEE Transactions on Information Theory}, 65(5):2872--2898, May
  2019.
\newblock (\href{http://arxiv.org/abs/1704.05703}{arXiv:1704.05703}
  [quant-ph]).

\bibitem{altugW14D}
Y.~Altu{\u{g}} and A.~B. Wagner.
\newblock Refinement of the random coding bound.
\newblock {\em IEEE Transactions on Information Theory}, 60(10):6005--6023, Oct
  2014.

\bibitem{scarlettMF14}
J.~Scarlett, A.~Martinez, and A.~G.~i Fabregas.
\newblock Mismatched decoding: Error exponents, second-order rates and
  saddlepoint approximations.
\newblock {\em IEEE Transactions on Information Theory}, 60(5):2647--2666, May
  2014.

\bibitem{honda15}
J.~Honda.
\newblock Exact asymptotics for the random coding error probability.
\newblock In {\em 2015 IEEE International Symposium on Information Theory
  (ISIT)}, pages 91--95, June 2015.

\bibitem{burnashevH98}
M.~V. Burnashev and A.~S. Holevo.
\newblock On the reliability function for a quantum communication channel.
\newblock {\em Problems of Information Transmission}, 34(2):7–107, 1998.

\bibitem{telatar99}
E.~Telatar.
\newblock Capacity of multi-antenna gaussian channels.
\newblock {\em European Transactions on Telecommunications}, 10(6):585--595,
  1999.

\bibitem{vu11}
M.~Vu.
\newblock {MISO} capacity with per-antenna power constraint.
\newblock {\em IEEE Transactions on Communications}, 59(5):1268--1274, May
  2011.

\bibitem{caoOSS16}
P.~L. Cao, T.~J. Oechtering, R.~F. Schaefer, and M.~Skoglund.
\newblock Optimal transmit strategy for {MISO} channels with joint sum and
  per-antenna power constraints.
\newblock {\em IEEE Transactions on Signal Processing}, 64(16):4296--4306, Aug
  2016.

\bibitem{loyka17}
S.~Loyka.
\newblock The capacity of gaussian {MIMO} channels under total and per-antenna
  power constraints.
\newblock {\em IEEE Transactions on Communications}, 65(3):1035--1043, March
  2017.

\bibitem{chengN20A}
H.-C. Cheng and B.~{Nakibo{\u{g}}lu}.
\newblock Refined strong converse for the constant composition codes.
\newblock {\em \href{http://arxiv.org/abs/2002.11414}{arXiv:2002.11414}}, 2020.

\bibitem{yang15}
Wei Yang.
\newblock {\em Fading Channels: Capacity and Channel Coding Rate in the
  Finite-Blocklength Regime}.
\newblock Communication systems group, Department of Signals and Systems
  Chalmers at University of Technology, Gothenburg, Sweden, 2015.

\end{thebibliography}
\newcommand{\noopsort}[1]{} \newcommand{\printfirst}[2]{#1}
  \newcommand{\singleletter}[1]{#1} \newcommand{\switchargs}[2]{#2#1}

\end{document}